  \providecommand\BibTeX{{%
    \normalfont B\kern-0.5em{\scshape i\kern-0.25em b}\kern-0.8em\TeX}}}
\newcommand{\ques}{\langle ? \rangle}
\newcommand{\Kt}{\textsf{Kt}}
\newcommand{\K}{\textsf{K}}
\DeclareSymbolFont{extraup}{U}{zavm}{m}{n}
\newcommand{\circw}{\circ\{}
\newcommand{\circb}{\bullet\{}
\newcommand{\LKt}{\textsf{G3Kt}} 
\newcommand{\bldia}{\Diamondblack}
\renewcommand{\blacklozenge}{\Diamondblack}
\renewcommand{\Box}{\square}
\newcommand{\disr}{(\lor)}
\newcommand{\conr}{(\wedge)}
\newcommand{\boxr}{(\Box)}
\newcommand{\diar}{(\Diamond)}
\newcommand{\id}{(\textsf{id})}
\newcommand{\wk}{(\textsf{w} )}\newcommand{\ctr}{(\textsf{c})}
\newcommand{\cut}{(\textsf{cut})}
\newcommand{\rf}{(\textsf{rf})}
\newcommand{\rp}{(\textsf{rp})}
\newcommand{\bboxr}{(\blacksquare)}
\newcommand{\bdiar}{(\bldia)}
\newcommand{\R}{\mathcal{R}}
\newcommand{\Q}{\mathcal{Q}}
\newcommand{\dprop}{(\textsf{dp})}
\newcommand{\SKT}{\textsf{SKT}}
\newcommand{\DKT}{\textsf{DKT}}
\newcommand{\bcirc}{\bullet}
\newcommand{\bdiam}{\bldia}
\newcommand{\diam}{\mathbin{\Diamond}}
\newcommand{\bBox}{\blacksquare}
\newcommand{\heart}{\ensuremath\heartsuit}
\newcommand{\dUTG}{\mathfrak{L}}
\newcommand{\UTGd}{\mathfrak{N}}
\newcommand{\lcut}{[}
\newcommand{\rcut}{]}
\newcommand{\negp}{(\mathsf{Path})}
\newcommand{\propr}{(\mathsf{Prop})}
\newcommand{\prop}{\textit{LabPr(P)}}
\newcommand{\propl}{\textit{LabPr(P)}}
\newcommand{\propdeep}{\textit{DeepPr(P)}}
\newcommand{\dgp}{(\mathsf{GP})}
\newcommand{\GP}{GP}
\newcommand{\GPd}{\textit{NestSt(GP)}}
\newcommand{\lgp}{(\mathsf{GP})}
\newcommand{\GPl}{\textit{LabSt(GP)}}
\newcommand{\dispath}{(\mathsf{Path})}
\newcommand{\lp}{(\mathsf{Path})}
\newcommand{\Pl}{\textit{LabSt(P)}}
\newcommand{\Pd}{\textit{NestSt(P)}}
\newcommand{\Pset}{P}
\newcommand{\display}{shallow nested }
\newcommand{\sub}{[\mathbf{s}]}
\newcommand{\dia}[1]{\langle #1 \rangle}
\newcommand{\empstr}{\varepsilon}
\newcommand{\der}{\mathcal{D}}
\newtheorem{remark}[theorem]{Remark}
\begin{document}

\title{Display to Labeled Proofs and Back Again for Tense Logics}

\author{Agata Ciabattoni}
\email{agata@logic.at}
\orcid{0000-0001-6947-8772}
\affiliation{%
\institution{Technische Universit\"at Wien}
\department{Institut f\"ur Logic and Computation}
\streetaddress{Favoritensta{\ss}e 9-11}
\state{Wien}
\postcode{1040}
\country{Austria}}

\author{Tim S. Lyon}
\email{timothy_stephen.lyon@tu-dresden.de}
\orcid{0000-0003-3214-0828}
\affiliation{%
\institution{Technische Universit\"at Dresden}
\department{Institute of Artificial Intelligence}
\streetaddress{N\"othnitzer Str. 46}
\state{Dresden}
\postcode{01069}
\country{Germany}}



\author{Revantha Ramanayake}
\email{d.r.s.ramanayake@rug.nl}
\orcid{0000-0002-7940-9065}
\affiliation{%
\institution{University of Groningen}
\department{Bernoulli Institute for Mathematics, Computer Science and Artificial Intelligence, and CogniGron (Groningen Cognitive Systems and Materials Center)}
\streetaddress{Nijenborgh 4}
\state{Groningen}
\postcode{NL-9747}
\country{Netherlands}}

\author{Alwen Tiu}
\email{alwen.tiu@anu.edu.au}
\orcid{0000-0002-2695-5636}
\affiliation{%
\institution{The Australian National University}
\department{College of Engineering and Computer Science}
\streetaddress{108 North Road}
\city{Canberra}
\state{ACT}
\postcode{2601}
\country{Australia}}

\renewcommand{\shortauthors}{Ciabattoni et al.}

\begin{abstract}
We introduce translations between display calculus proofs and labeled calculus proofs in the context of tense logics. First, we show that every derivation in the display calculus for the minimal tense logic $\Kt$ extended with general path axioms can be effectively transformed into a derivation in the corresponding labeled calculus. Concerning the converse translation, we show that for $\Kt$ extended with path axioms, every derivation in the corresponding labeled calculus can be put into a special form that is translatable to a derivation in the associated display calculus. A key insight in this converse translation is a canonical representation of display sequents as labeled polytrees. Labeled polytrees, which represent equivalence classes of display sequents modulo display postulates, also shed light on related correspondence results for tense logics.
\end{abstract}
\begin{CCSXML}
<ccs2012>
<concept>
<concept_id>10003752.10003790.10003792</concept_id>
<concept_desc>Theory of computation~Proof theory</concept_desc>
<concept_significance>500</concept_significance>
</concept>
<concept>
<concept_id>10003752.10003790.10003793</concept_id>
<concept_desc>Theory of computation~Modal and temporal logics</concept_desc>
<concept_significance>500</concept_significance>
</concept>
</ccs2012>
\end{CCSXML}

\ccsdesc[500]{Theory of computation~Proof theory}
\ccsdesc[500]{Theory of computation~Modal and temporal logics}

\keywords{Nested calculus, Labeled calculus, Display calculus, Effective translations, Tense logic, Modal logic}

\maketitle

\section{Introduction}

    A crucial question for any logic is if it possesses an analytic calculus. An analytic calculus consists of rules that decompose a formula of the logic in a stepwise manner, and can be exploited to prove certain metalogical properties as well as develop automated reasoning methods. Since its introduction in the 1930's, Gentzen's sequent calculus (and equivalently, the tableaux calculus) has been a preferred formalism for constructing analytic calculi due to its simplicity. Unfortunately, this simplicity is also an obstacle: the formalism is not expressive enough to present many logics of interest. In response, many proof-theoretic formalisms extending the syntactic elements of the sequent calculus have been introduced over the last 30 years. 
Of particular interest in this paper are the formalisms of the labeled calculus~\cite{Gab96,Sim94,Vig00}, nested calculus~\cite{Bul92,Kas94,MarStr14AIML}, and display calculus~\cite{Bel82,Kra96}. Each formalism extends the sequent calculus in a seemingly unique way, suggesting distinct strengths, weaknesses, and expressive powers. 
There are trade-offs in employing one formalism as opposed to another, motivating a study of the interrelationships between the current patchwork (see, e.g. \cite{Pog10}) of proof systems.

In this paper, we consider proof calculi for a special class of multi-modal logics: extensions of the \emph{minimal tense logic} $\Kt$ with \emph{general path axioms} $\Pi A \rightarrow \Sigma A$ ($\Pi , \Sigma \in \{\Diamond, \Diamondblack\}^{*}$). Tense logics incorporate modalities that reference what is true in successor ($\Diamond$) and predecessor states ($\Diamondblack$). Such logics are used to model temporal notions having to do with future and past states of affairs. This class of logics provides a good case study 
for our proof-theoretic investigations since it includes many interesting/well-known logics and possesses a diverse proof theory.

Numerous analytic proof calculi have been presented for extensions of $\Kt$ such as labeled calculi~\cite{Bor08,BorNeg09}, nested calculi~\cite{GorPosTiu11}, and display calculi~\cite{Kas94,Kra96,Wan98dml}.
Since the term \emph{nested sequent} has been used in the literature to refer to slightly different objects, this is a good time to clarify our terminology. In this paper:
\begin{description}
\item[Nested sequent:] Any term generated via the BNF grammar $X ::= \varepsilon \ | \ A \ | \ X, X \ | \ {\circ} \{ X \} \ | \ {\bullet} \{ X \}$ where~$A$ is a tense formula.\footnote{We use $\varepsilon$ as the empty string, which in this context denotes the empty sequent.} Note that this \textit{extends} the typical definition of a nested sequent in the proof theory literature for modal (rather than tense) logics that uses a single nesting operator (e.g., the grammar for traditional nested sequents is usually given by the following BNF grammar: $X ::= \varepsilon \ | \ A \ | \ X, X \ | \ [X]$).

\item[Shallow nested calculus] (used here \textit{interchangeably}\footnote{The alternative term \emph{shallow nested sequent}  for \emph{display calculus} is due to~\cite{GorPosTiu11} whose motivation was to contrast the shallow inference rules of the display calculus with a proof calculus that uses deep inference instead.}
 with {\bf display calculus}) A proof calculus built from nested sequents in the sense above, where \textit{display rules} are used to unpack (`display') a formula nested under~$\circ$ and~$\bullet$ to bring it to the top-level, where the inference rules operate. 

\item[Deep nested calculus:] A proof calculus built from nested sequents in the sense above where the display rules are dispensed with, and the inference rules can apply inside arbitrary nestings of~$\circ$ and~$\bullet$ (i.e. \textit{deep inference} is implemented).
\end{description}
Deep nested calculi are better suited than shallow nested calculi for proving e.g. decidability~\cite{Bru06,GorPosTiu11} and interpolation~\cite{LyoTiuGorClo19}, due to the absence of the hard-to-control display rules that expand the proof-search space. Both shallow and deep nested calculi are typically {\em internal} in the sense that each sequent in a proof can be interpreted as a formula of the logic, whereas labeled calculi often appear to be {\em external} in the sense that the sequents cannot generally be interpreted as a formula of the logic (and use a language that explicitly encodes the semantics).

An effective way to relate calculi is by defining {\em translations}, i.e. functions that algorithmically transform any proof in a calculus into a proof of the same formula in another calculus. A crucial feature of such functions is that the structural properties of the derivation are preserved in the translation. Such embeddings permit the transfer of certain proof theoretic results, thus alleviating the need for independent proofs in each system, e.g. \cite{Fit12,GorRam12AIML,LyoBer19}. Moreover, they shed light on the role of certain syntactic features in proof calculi, and on the general problem of characterizing the relationships between different syntactic and semantic presentations of a logic~\cite{Pim18}.

In \cite{CiaLyoRam18} we obtained translations from shallow nested calculi to labeled calculi for Scott-Lemmon axiomatic extensions ($\Diamondblack^{h}\Diamond^{i}A\rightarrow\Diamond^{j}\Diamondblack^{k}A$ with $h,i,j,k\in\mathbb{N}$) of~$\Kt$. 
This paper extends these results to a larger set of tense logics, and answers an open question posed in that paper regarding the existence of labeled to nested translations for extensions of $\Kt$. 

We first show how to translate derivations in shallow nested calculi into derivations in labeled calculi for all general path extensions of $\Kt$. The reverse translation---from labeled to shallow nested---employs more sophisticated techniques and is only obtained for \textit{path axiom}---$\Pi A \rightarrow \ques A$ ($\Pi \in \{\Diamond, \blacklozenge\}^{*}$ and $\ques \in \{\Diamond, \blacklozenge\}$)---extensions of $\Kt$. Our proof strategy, described in the following paragraphs, ensures that each labeled sequent occurring in the derivation of a theorem of a path extension of $\Kt$ is interpretable as a nested sequent. This permits a translation from labeled to shallow nested sequent proofs. This translation witnesses a relation between the relational semantics and algebraic semantics (see e.g.~\cite{BlaRijVen01,Gor98}) for tense logics: the labeled calculi are clearly underpinned by the relational semantics; the shallow nested calculi, on the other hand, employ display rules that encode the algebraic residuation property between~$\blacklozenge$ (and $\Diamond$) in the antecedent and~$\Box$ (and $\blacksquare$, resp.) in the succedent of an implication. Indeed, the display rules have \textit{no} analog in the labeled calculi since the premise and conclusion translate to the same labeled sequent (see Lemma~\ref{isoresid}).

The ability to display any formula nested under structural connectives using the display rules is a crucial part in Belnap's~\cite{Bel82} proof of cut-elimination for arbitrary display calculi. However, the display rules greatly expand the proof search space, in particular when these rules interact with other structural rules (e.g. contraction) or structural rules that capture the modal/tense axioms of the formalized logic. In~\cite{GorPosTiu11}, the authors show how to translate display calculi to deep nested calculi, eliminating the display rules by employing deep inference. In our translation from display calculi to labeled calculi, display rules are not translated to inference rules; rather, they are dealt with by changing the representation of the nested sequent.
The key idea is that a nested sequent can naturally be interpreted as a labeled sequent whose binary relation between labels forms a {\em polytree} (i.e. a directed graph whose underlying undirected graph is a tree). 
 The polytree interpretation of a nested sequent has the crucial property of being invariant under display rules---applications of display rules to a nested sequent do not change its labeled polytree translation. 
Thus, display-equivalent nested sequents have a canonical representation as a labeled polytree sequent.
This representation also sheds light on the correspondence results between shallow and deep nested calculi for tense logics \cite{GorPosTiu11}. In particular, we show that the admissibility of display rules is independent from the admissibility of structural rules capturing the path axioms in tense logics, something that was
not observed in~\cite{GorPosTiu11}. 
This polytree representation also significantly simplifies the proof of Craig interpolation for the class of path extensions of~$\Kt$~\cite{LyoTiuGorClo19}.

Given that labeled polytree sequents correspond closely to nested sequents, one strategy to translate a labeled calculus to a shallow nested calculus is to translate a subset of the labeled calculus where all sequents are polytree sequents, and then show that the latter is complete, i.e. that it proves the same set of theorems as the unrestricted labeled calculi. One issue with this approach is that the property of being a polytree is not closed under some structural rules in labeled calculi, i.e. there could be instances of a rule where one of the premises is not a polytree but the conclusion is. To get around this issue, when translating from labeled to shallow nested, we first put our given derivation into a special form that makes use of so-called \emph{propagation rules}~\cite{CasCerGasHer97,Sim94,GorPosTiu11,LyoBer19}. Such rules allow us to eliminate certain structural rules from our labeled calculi and their derivations; this results in an \emph{internal} or \emph{refined} variant of the labeled calculus that---interestingly---inherits the nice properties of the original \emph{external} calculus. This methodology of eliminating structural rules to obtain refined calculi is of practical value in its own right~\cite{Lyo21}. In this paper, the methodology is used to provide a translation from labeled to shallow nested; however, this method is also useful in that it yields calculi suitable for proof-search and proving interpolation~\cite{LyoBer19,LyoTiuGorClo19}. Furthermore, this new form of the derivation permits an \emph{effective} (i.e. algorithmic) translation into a derivation of a deep nested calculus, that is to say, a step-by-step procedure defined at the level of the proof rules in the derivation; methods from~\cite{GorPosTiu11} are then applied to the deep nested derivation to further translate the proof into a proof of the corresponding shallow nested calculus. We note that this entire translation process is effective in the above sense, showing that the output derivation may be obtained algorithmically from local transformation on the input derivation. Our proof of admissibility of structural rules, in favor of propagation rules, for path axioms follows a similar methodology to that used in \cite{GorPosTiu11}, with one notable difference: in~\cite{GorPosTiu11}, the admissibility of display rules needs to be proved for {\em every} extension with path axioms, whereas in our case, admissibility of display rules is independent of the extensions, since the polytree representation makes the display rules superfluous. Our result thus suggests that perhaps display rules should be viewed as structural properties of sequents rather than as structural properties of proofs. This is analogous to, for example, internalizing the exchange rule as a property of sequents (i.e. commutativity and associativity of \emph{comma} in the sequent). 

The paper is structured as follows: Section 2 introduces the class of tense logics considered along with their associated shallow nested, labeled, and deep nested calculi. Section 3 presents labeled polytrees which are used to give the translation from nested proof systems to labeled proof systems as well as the reverse. In Section 4, we provide an effective translation from shallow nested proofs to labeled proofs for all \emph{general path} extensions of $\Kt$. Section 5 gives the reverse translation from labeled proofs to shallow nested proofs for \emph{path} extensions of $\Kt$. Section 6 discusses consequences and potential applications.

We summarize below the calculi considered in this paper and illustrate the effective proof-transformations (which transform the shape of a derivation and preserve the language of the calculus; indicated by a dotted arrow) and translations (which not only transform the shape of the derivation, but translate the language of the calculus; indicated by solid arrow) obtained in this paper.


\subsubsection*{Base Calculi and Extensions ($GP$ general path axioms, $P$ path axioms):}

\begin{center}
\begin{tabular}{|c|c|c|c|c|}
\hline
Base Calc. & Type & Gen. Path Str. Rules & Path Str. Rules & Propagation Rules \\
\hline
$\LKt$~\cite{Bor08,CiaLyoRam18} & labeled & $\GPl$ & $\Pl$ & $\propl$\\
\hline
$\SKT$~\cite{GorPosTiu11} & Shal. Nes. & $\GPd$ & $\Pd$ &  \\
\hline
$\DKT$~\cite{GorPosTiu11} & Deep Nes. &  &  & $\propdeep$\\
\hline

\end{tabular}
\end{center}

\subsubsection*{Effective Transformations/Translations:}

\begin{center}
\xymatrix{
	& \LKt + \GPl	&      \LKt + \Pl\ar@{.>}[rr]^{Lem.~\ref{internal}}\ar@/^-.8pc/@{->}[d]_{Thm.~\ref{reverse-translation-theorem}} &  & \LKt + \propl  \ar@{->}[d]^{Lem.~\ref{G3Kt-Path-to-DKT-Path}}  & 		\\
	& \SKT + \GPd\ar@{->}[u]^{Thm.~\ref{DisToLab}} 	&  \SKT + \Pd\ar@/^-.8pc/@{->}[u]_{Thm.~\ref{DisToLab}}    &  & \DKT + \propdeep\ar@{.>}[ll]^{Lem.~\ref{sktdktequiv}}	  &
}
\end{center}





\section{Nested and Labeled Calculi for Tense Logics}

For convenience, we take the language $\mathcal{L}_{\Kt}$ as consisting of formulae in negation normal form. In particular, formulae are built from the literals~$p$ and $\overline{p}$ using the $\wedge$, $\vee$, $\Diamond$, $\Box$, $\bldia$, and $\blacksquare$ operators. Note that all results hold also for the full language where the $\neg$, $\rightarrow$, and $\leftrightarrow$ operators are taken as primitive. The language $\mathcal{L}_{\Kt}$ is explicitly defined via the following BNF grammar:
$$
A ::= p \ | \ \overline{p} \ | \ A \wedge A \ | \ A \vee A \ | \ \Box A \ | \ \Diamond A \ | \ \blacksquare A \ | \ \blacklozenge A
$$
For an introduction to tense logics and their accompanying Kripke semantics, we refer the reader to the following references:~\cite{BlaRijVen01,Gar13}.

Intuitively, we interpret $\Box A$ as claiming that the formula $A$ holds at every point in the immediate future, whereas $\blacksquare A$ is interpreted as claiming that $A$ holds at every point in the immediate past. Similarly, we interpret the formula $\diam A$ as claiming that $A$ holds at some point in the immediate future, while $\bdiam A$ intuitively means that $A$ holds at some point in the immediate past.

Define~$\overline{A}$ inductively as follows.
\begin{center}
\begin{tabular}{p{0.4\textwidth} @{\hskip 6em} p{0.4\textwidth}}
\begin{itemize}

\item[(1)] If $A = p$, then $\overline{A} = \overline{p}$;

\item[(2)] If $A = \overline{p}$, then $\overline{A} = p$;

\item[(3)] If $A = B \wedge C$, then $\overline{A} = \overline{B} \vee \overline{C}$;

\item[(4)] If $A = B \vee C$, then $\overline{A} = \overline{B} \wedge \overline{C}$;

\end{itemize}

&

\begin{itemize}

\item[(5)] If $A = \Box B$, then $\overline{A} = \Diamond \overline{B}$;

\item[(6)] If $A = \Diamond B$, then $\overline{A} = \Box \overline{B}$;

\item[(7)] If $A = \blacksquare B$, then $\overline{A} = \blacklozenge \overline{B}$;

\item[(8)] If $A = \blacklozenge B$, then $\overline{A} = \blacksquare \overline{B}$.

\end{itemize}
\end{tabular}
\end{center}
We define the negation $\neg A$ of formula~$A$ as $\overline{A}$, the conditional $A \rightarrow B$ as $\overline{A} \vee B$, and the biconditional $A \leftrightarrow B$ as $A \rightarrow B \wedge B \rightarrow A$.

The tense logic~$\Kt$---a conservative extension of the normal modal logic~$\textsf{K}$---is typically axiomatized as shown below (see, e.g.~\cite{BlaRijVen01,ChaZak92}). 


\begin{center}
\begin{tabular}{c @{\hskip 1em} c @{\hskip 1em} c}

$A \rightarrow (B \rightarrow A)$

&

$(\neg B \rightarrow \neg A) \rightarrow (A \rightarrow B)$

&

$(A \rightarrow (B \rightarrow C)) \rightarrow ((A \rightarrow B) \rightarrow (A \rightarrow C))$

\end{tabular}
\end{center}

\begin{center}
\begin{tabular}{c @{\hskip 2em} c @{\hskip 2em} c @{\hskip 2em} c @{\hskip 2em} c @{\hskip 2em} c}

$A \rightarrow \Box \blacklozenge A$

&

$A \rightarrow \blacksquare \Diamond A$

&

$\Box A \leftrightarrow \neg \Diamond \neg A$

&

$\blacksquare A \leftrightarrow \neg \blacklozenge\neg A$

&

\AxiomC{$A$}
\UnaryInfC{$\Box A$}
\DisplayProof

&

\AxiomC{$A$}
\UnaryInfC{$\blacksquare A$}
\DisplayProof

\end{tabular}
\end{center}

\begin{center}
\begin{tabular}{c @{\hskip 3em} c @{\hskip 3em} c}

$\Box (A \rightarrow B) \rightarrow (\Box A \rightarrow \Box B)$

&

$\blacksquare (A \rightarrow B) \rightarrow (\blacksquare A \rightarrow \blacksquare B)$

&

\AxiomC{$A$}
\AxiomC{$A \rightarrow B$}
\BinaryInfC{$B$}
\DisplayProof

\end{tabular}
\end{center}

As mentioned previously, the logics we consider in this paper are extensions of $\Kt$ with \emph{general path axioms} of the form $\ques_{1} ... \ques_{n} A \rightarrow \ques_{n+1} ... \ques_{n+m} A$ where each~$\ques_{j}$ is either~$\Diamond$ or~$\Diamondblack$. Occasionally, we may use $\dia{F}$, $\dia{G}$, $\ldots$ to represent either a~$\Diamond$ or a~$\Diamondblack$. Also, note that when $n = 0$, the antecedent of the path axiom is free of diamonds (i.e. it is of the form $A \rightarrow \ques_{1} ... \ques_{m} A$), and when $m = 0$, the consequent is free of diamonds (i.e. it is of the form $\ques_{1} ... \ques_{n} A \rightarrow A$). We will use the notation $\Pi A \rightarrow \Sigma A$ to represent such axioms.
This class of axioms contains many well-known axioms such as reflexivity $A \rightarrow \Diamond A$, confluence $\blacklozenge \Diamond A \rightarrow \Diamond \blacklozenge A$, and partial-functionality $\blacklozenge \Diamond A \rightarrow A$. We will use $\GP$ to denote an arbitrary set of general path axioms and write $\Kt + \GP$ to mean the minimal tense logic $\Kt$ extended with the axioms from $\GP$; note that this notation extends straightforwardly to any set $S$ of formulae, i.e. $\Kt + S$ will be used to represent extensions of $\Kt$ with the formulae from $S$, as well as the corresponding logic (i.e. the set of theorems). Last, we let $\Kt + S \vdash A$ denote that~$A$ is a theorem of the logic~$\Kt+S$.

\textit{Path axioms} are general path axioms where the consequent of the axiom is restricted to a single-diamond formula, \textit{i.e.} any formula of the form $\ques_{1} ... \ques_{n} A \rightarrow \ques_{n+1} A$ is a path axiom. We focus on this class of axioms because the translation methods presented in this paper only allow us to translate derivations from labeled to nested for the logics $\Kt + P$, where $P$ is an arbitrary set of path axioms. Nevertheless, this class of axioms still contains well-known axioms such as transitivity $\Diamond \Diamond A \rightarrow \Diamond A$, symmetry $\blacklozenge A \rightarrow \Diamond A$, and Euclideanity $\blacklozenge \Diamond A \rightarrow \Diamond A$.
 

\subsection{Shallow Nested (Display) Calculi for Tense Logics}\label{Shallow_Nested_Intro_Subsection}

We will present Gor\'{e} \textit{et al.}'s~\cite{GorPosTiu11} \display calculus~$\SKT$ for $\Kt$.
This calculus can be seen as a one-sided version of Kracht's~\cite{Kra96} display calculus for~$\Kt$, and also as a variant of Kashima's~\cite{Kas94} calculus. 

%

The shallow nested calculus is modular in the sense that certain axiomatic extensions of $\Kt$ can be captured by adding equivalent structural rules to $\SKT$. Moreover, $\SKT$ allows for a uniform proof of cut-elimination where cut is eliminable from any derivation of $\SKT$ extended with any number of \textit{substitution-closed linear structural rules} (see \cite{GorPosTiu11} for details). This makes the shallow nested calculus a good candidate for capturing large classes of tense logics in a unified, cut-free manner. 
The nested sequents of $\SKT$ are generated by the following grammar where $A$ is a tense formula in $\mathcal{L}_{\Kt}$. 
$$
X ::= \empstr \ | \ A \ | \ X, X \ | \ {\circ} \{ X \} \ | \ {\bullet} \{ X \}
$$
We assume comma to commute and associate, meaning, for example, that we may freely re-write a nested sequent of the form $X,Y,Z$ as $Z,X,Y$ when performing derivations in $\SKT$. Also, $\empstr$ represents the \emph{empty string} or \emph{empty sequent}, which acts as an identity element for comma (e.g. we identify $X,\empstr$ with $X$), and so, $\empstr$ will be implicit in nested sequents, but not explicitly appear.

A characteristic of nested sequents is that each can be translated into an equivalent formula in the language $\mathcal{L}_{\Kt}$, that is, each connective introduced in the language of nested sequents acts as a \emph{proxy} for a logical connective (cf.~\cite{Bel82,Kra96,GorPosTiu11}). The interpretation~$\mathcal{I}$ of a nested sequent as a tense formula is defined as follows:
\begin{center}
\begin{multicols}{3}
\begin{itemize}

\item[(1)] $\mathcal{I}(\varepsilon) = \top$

\item[(2)] $\mathcal{I}(A) = A \text{ for } A \in \mathcal{L}_{\Kt}$

\item[(3)] $\mathcal{I}(X, Y) = \mathcal{I}(X) \lor \mathcal{I}(Y)$

\item[(4)] $\mathcal{I}(\circ \{ X \}) = \Box \mathcal{I}(X)$

\item[(5)] $\mathcal{I}(\bcirc \{ X \}) = \bBox \mathcal{I}(X)$

\end{itemize}
\end{multicols}
\end{center}


It will occasionally be useful to refer to the \emph{substructures} of a nested sequent $X$. We say that a sequent $Y$ is a \emph{substructure} of $X$ if and only if $Y \in \mathfrak{S}(X)$, where the set of \emph{substructures} of $X$, written $\mathfrak{S}(X)$, is inductively defined as follows:
\begin{center}
\begin{multicols}{2}
\begin{itemize}

\item[(1)] $\mathfrak{S}(\varepsilon) = \emptyset$

\item[(2)] $\mathfrak{S}(A) = \{A\} \text{ for } A \in \mathcal{L}_{\Kt}$

\item[(3)] $\mathfrak{S}(X) = \{X\} \cup \mathfrak{S}(Y) \cup \mathfrak{S}(Z)$, if $X = Y,Z$



\item[(4)] $\mathfrak{S}(X) = \{X\} \cup \mathfrak{S}(Y)$, if $X = {\circ}\{Y\}$ or ${\bullet}\{Y\}$


\end{itemize}
\end{multicols}
\end{center}

\begin{definition}[The Calculus $\SKT$~\cite{GorPosTiu11}]\

\begin{center}
\begin{tabular}{c c c}
\AxiomC{} \RightLabel{$\id$}
\UnaryInfC{$X, p, \overline{p}$}
\DisplayProof

\hspace*{.25 cm}

&
\AxiomC{$X, A,B $}
\RightLabel{$\disr$}
\UnaryInfC{$X, A\lor B$}
\DisplayProof

\hspace*{.25 cm}

&
\AxiomC{$X, A$}
\AxiomC{$X, B$}
\RightLabel{$\conr$}
\BinaryInfC{$X, A\land B$}
\DisplayProof
\end{tabular}
\end{center}


\begin{center}
\begin{tabular}{cccc}
\AxiomC{$X, Y, Y$}
\RightLabel{$\ctr$}
\UnaryInfC{$X, Y$}
\DisplayProof

\hspace*{.25 cm}

&
\AxiomC{$X$}
\RightLabel{$\wk$}
\UnaryInfC{$X, Y$}
\DisplayProof

\hspace*{.25 cm}

&
\AxiomC{$X, \circw Y \}$}
\RightLabel{$\rf$}
\UnaryInfC{$\circb X \}, Y $}
\DisplayProof

\hspace*{.25 cm}

&
\AxiomC{$X, \circb Y \}$}
\RightLabel{$\rp$}
\UnaryInfC{$\circw X \}, Y$}
\DisplayProof
\end{tabular}
\end{center}


\begin{center}
\begin{tabular}{c c c c}
\AxiomC{$X, \circb A \}$}
\RightLabel{$\bboxr$}
\UnaryInfC{$X, \blacksquare A$}
\DisplayProof

\hspace*{.25 cm}

&
\AxiomC{$X, \circw A \}$}
\RightLabel{$\boxr$}
\UnaryInfC{$X, \Box A$}
\DisplayProof
%

\hspace*{.25 cm}
&
\AxiomC{$X, \circb Y, A \}, \bldia A$}
\RightLabel{$\bdiar$}
\UnaryInfC{$X, \circb Y \}, \bldia A$}
\DisplayProof

\hspace*{.25 cm}

&
\AxiomC{$X, \circw Y, A \}, \Diamond A$}
\RightLabel{$\diar$}
\UnaryInfC{$X, \circw Y \}, \Diamond A$}
\DisplayProof
\end{tabular}
\end{center}

\end{definition}
$\SKT$ is referred to  as a \display sequent calculus
because (i)~the~$\circ\{ \cdot \}$ and $\bcirc\{ \cdot \}$ provide (two types of) nestings and (ii)~all the rules are shallow in the sense that they operate at the \textit{root} or \textit{top-level} of the sequent (i.e. rules are only applied to formulae or structures that do not occur within nestings).

\begin{definition}[Display Property]
A calculus has the \emph{display property} if it contains a set of rules (called \emph{display rules}) such that
 for any sequent~$X$ containing a substructure~$Y$, there exists a sequent~$Z$ such that~$Y,Z$ is derivable from~$X$ using only the display rules.
\end{definition}
The display property states that any substructure in~$X$ can be brought to the \emph{top level} using the display rules.
The calculus $\SKT$ has the display property when $\{\rp,\rf\}$ is chosen to be the set of display rules, i.e., the residuation rules $\rp$ and $\rf$ serve as the display rules in $\SKT$. A pair of nested sequents are \emph{display equivalent} when they are mutually derivable using only the display rules. The display property is significant since it is a crucial component in the proof of cut-elimination (see~\cite{Bel82}).

A modular method of obtaining a cut-free extension of the base calculus for~$\Kt$ by a large class of axioms inclusive of the general path axioms was introduced in~\cite{Kra96} (see also~\cite{CiaRam16}).
Following~\cite{Kra96}, we present the rule~$\dgp$ corresponding to a general path axiom $\ques_{1} ... \ques_{n} A \rightarrow \ques_{n+1} ... \ques_{n+m} A$:

\begin{center}
\AxiomC{$X, \star_{n+1} \{ ... \star_{n+m} \{ Y \} ... \}$}
\RightLabel{$\dgp$}
\UnaryInfC{$X, \star_{1} \{ ... \star_{n} \{ Y \} ... \}$}
\DisplayProof
\end{center}

\noindent
Here if~$\ques_{j}=\Diamond$ then $\star_{j}=\circ$, and if~$\ques_{j}=\Diamondblack$ then $\star_{j}=\bullet$.
%

Since path axioms form a proper subclass of the general path axioms, the rule $\dgp$ can be specialized to the rule $\dispath$ for any given path axiom $\ques_{1}... \ques_{n} A \rightarrow \ques_{n+1} A$:


\begin{center}
\AxiomC{$X, \star_{n+1} \{  Y \}$}
\RightLabel{$\dispath$}
\UnaryInfC{$X, \star_{1} \{ ... \star_{n} \{ Y \} ... \}$}
\DisplayProof
\end{center}


\begin{theorem}[\cite{GorPosTiu11}]\label{sktcutlim} The $\cut$ rule
\begin{center}
\AxiomC{$X,A$}
\AxiomC{$\overline{A},Y$}
\RightLabel{$\cut$}
\BinaryInfC{$X,Y$}
\DisplayProof
\end{center}
is admissible in $\SKT+\GPd$.
\end{theorem}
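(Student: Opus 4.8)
The plan is to follow Belnap's general recipe for cut-elimination in display calculi~\cite{Bel82}: verify that $\SKT + \GPd$ satisfies the eight conditions (C1)--(C8) that guarantee cut-elimination, and then invoke the general theorem. The core combinatorial work is the usual principal-cut reduction argument, so I would organize the proof around a double induction on the pair (cut-formula complexity, sum of heights of the two derivations of the premises), pushing cut upward.

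First I would check the syntactic side conditions. Conditions (C2)--(C5) (subformula property of parameters, shape-preservation and congruence of parametric occurrences, regularity of parameters, and eliminability of identity-atom axioms in favour of general identity) are routine inspection of the rule schemata: every rule of $\SKT$ has its parameters literally copied from premises to conclusion, the side structure $X$ (and $Y$) in each rule is parametric and congruent, and $\id$ gives identity only on literals but the general $X, A, \overline A$ form is derivable by induction on $A$. Crucially, (C8) — the display condition — is exactly the display property established above: $\{\rf, \rp\}$ displays any substructure, so a cut-formula that is parametric in one premise can be displayed and the cut permuted past the rule that introduced the context. The only nonstandard rules are $\dgp$ (and its specialization $\dispath$); I would note that $\dgp$ is a pure structural rule — it has no principal formula and its occurrences of $Y$ (and of $X$) are parametric and congruent — so it satisfies (C2)--(C7) trivially and, being structural, never creates a new principal cut.

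The heart of the argument is the principal case: both premises of $\cut$ have the cut-formula introduced by a logical rule at that step. There are finitely many cut-formula shapes — $p$, $A \wedge B$, $A \vee B$, $\Box A$, $\blacksquare A$, $\Diamond A$, $\blacklozenge A$ — and for each I would give the standard reduction replacing the cut by one or two cuts on the immediate subformulae. The propositional cases ($\wedge/\vee$, and the atomic case against $\id$) are textbook. For the modal/tense cases the matching pair is $(\boxr)$ against $(\diar)$, and $(\bboxr)$ against $(\bdiar)$; here one uses a display rule ($\rf$ or $\rp$) to move the relevant nested structure to the top level so that the new, smaller cut can be formed — this is precisely where (C8) is used, and where one sees why the residuation rules must belong to the display set. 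The parametric cases (cut-formula parametric in at least one premise) are handled uniformly by Belnap's lemma: display the cut-formula, permute the cut above the last rule, and appeal to the induction hypothesis on derivation height; condition (C7) (reductivity) ensures this terminates.

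The main obstacle I anticipate is bookkeeping around the structural rules $\ctr$, $\wk$, and especially $\dgp$/$\dispath$ when the cut-formula is parametric: contraction on a structure containing the cut-formula forces the usual ``multicut'' (context-sharing cut) generalization, and one must check that permuting cut past $\dgp$ does not obstruct the induction — this is fine because $\dgp$ is substitution-closed and linear (no structure is duplicated between premise and conclusion), which is exactly the hypothesis under which~\cite{GorPosTiu11} proves the uniform cut-elimination theorem, so I would in fact cite that general result: $\dgp$ for any general path axiom is a substitution-closed linear structural rule, hence $\cut$ is admissible in $\SKT + \GPd$. I would make explicit that the additivity/linearity check on $\dgp$ is the only thing specific to our axiom class, everything else being the standard display-calculus machinery.
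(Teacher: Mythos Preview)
Your proposal is correct and in fact goes beyond what the paper does: the paper gives no proof at all for this theorem, simply citing~\cite{GorPosTiu11} (note the attribution in the theorem header and the earlier remark that $\SKT$ admits a uniform cut-elimination for any extension by substitution-closed linear structural rules). Your outline---verify Belnap's conditions (C1)--(C8), observe that $\dgp$ is a substitution-closed linear structural rule so the parametric and principal cases go through unchanged, and then invoke the general display-calculus cut-elimination theorem---is exactly the content of the result being cited, so your plan and the paper's ``proof'' (a citation) agree.
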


\begin{theorem}[\cite{GorPosTiu11}]\label{sktcompleteness} $\Kt+\GP \vdash A$ iff $A$ is derivable in $\SKT+\GPd$.
\end{theorem}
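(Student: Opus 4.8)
\emph{The setup.} The statement packages together the \emph{completeness} of $\SKT+\GPd$ for the Hilbert calculus of $\Kt+\GP$ (the ``only if'' direction) and its \emph{soundness} (the ``if'' direction). I would prove the two directions separately, using the interpretation $\mathcal I$ as the bridge between the calculus and the logic, and I would lean on Theorem~\ref{sktcutlim} so that, on the completeness side, it is enough to work in $\SKT+\GPd$ temporarily augmented with the $\cut$ rule.

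\emph{Completeness.} I would show, by induction on a Hilbert derivation, that for every theorem $A$ of $\Kt+\GP$ the one-formula nested sequent $A$ is derivable in $\SKT+\GPd$ with cut; removing cut by Theorem~\ref{sktcutlim} then finishes this direction. The propositional axioms come from the generalized identity sequent $X,A,\overline A$ (itself derivable by an easy induction on $A$ from $\id$, $\disr$, $\conr$) together with $\wk$ and $\disr$; the normality axioms $\Box(A\to B)\to(\Box A\to\Box B)$ and $\blacksquare(A\to B)\to(\blacksquare A\to\blacksquare B)$ have the standard display-calculus derivations; the tense interaction axioms $A\to\Box\blacklozenge A$ and $A\to\blacksquare\Diamond A$ are derived by displaying $A$ with $\rf$/$\rp$ and repacking with $\boxr$/$\bboxr$; modus ponens is simulated by $\cut$ on the cut formula; and the necessitation rules are simulated by weakening in a structural unit $\bullet\{\varepsilon\}$ (resp.\ $\circ\{\varepsilon\}$), applying a residuation rule to turn it into $\circ\{A\}$ (resp.\ $\bullet\{A\}$), and then applying $\boxr$ (resp.\ $\bboxr$). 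For a general path axiom $\ques_{1}\cdots\ques_{n}A\to\ques_{n+1}\cdots\ques_{n+m}A$ I would use the modular derivation of~\cite{Kra96} (see also~\cite{CiaRam16}): starting from the generalized identity, display $A$ through nestings so as to exhibit the structural prefix $\star_{n+1}\{\cdots\star_{n+m}\{\cdot\}\cdots\}$ in the succedent, apply $\dgp$ to replace it by $\star_{1}\{\cdots\star_{n}\{\cdot\}\cdots\}$, and repack the result into the diamonds and boxes of the axiom using the modal rules and the residuation rules.

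\emph{Soundness.} I would prove by induction on the height of an $\SKT+\GPd$ derivation of a sequent $X$ that $\Kt+\GP\vdash\mathcal I(X)$. Because every rule of $\SKT$ is shallow---it rewrites only top-level structures inside an otherwise untouched context $X$, and $\mathcal I$ maps the top-level comma to $\vee$---it suffices in each case to check that the interpretation of the rewritten top-level part(s) of the premise(s) $\Kt+\GP$-entails the corresponding part of the conclusion, after which one disjoins $\mathcal I(X)$ by monotonicity of $\vee$. The base case $\id$ is a propositional tautology over $\Kt$; the rules $\disr,\conr,\ctr,\wk$ reduce to propositional reasoning over $\Kt+\GP$; $\boxr$ and $\bboxr$ preserve the interpretation verbatim, while $\diar$ and $\bdiar$ follow from the $\Kt$-theorems $\Box(C\vee A)\to(\Box C\vee\Diamond A)$ and $\blacksquare(C\vee A)\to(\blacksquare C\vee\blacklozenge A)$. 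The display rules $\rf$ and $\rp$ are handled by the residuation laws ``$\Kt\vdash\phi\to\Box\psi$ iff $\Kt\vdash\blacklozenge\phi\to\psi$'' and ``$\Kt\vdash\phi\to\blacksquare\psi$ iff $\Kt\vdash\Diamond\phi\to\psi$'', which follow from $A\to\Box\blacklozenge A$, $A\to\blacksquare\Diamond A$ and normality: unwinding $\mathcal I$ and using $\overline{\mathcal I(X)}=\neg\mathcal I(X)$ together with $\neg\blacksquare\phi=\blacklozenge\neg\phi$ and $\neg\Box\phi=\Diamond\neg\phi$ shows that the interpretations of the premise and conclusion of each of $\rf$ and $\rp$ are in fact $\Kt$-provably equivalent. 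Finally, for $\dgp$ attached to $\ques_{1}\cdots\ques_{n}A\to\ques_{n+1}\cdots\ques_{n+m}A$: contraposing the axiom and substituting a fresh $B$ for $\overline A$ yields the $\Kt+\GP$-theorem $\Box_{n+1}\cdots\Box_{n+m}B\to\Box_{1}\cdots\Box_{n}B$, where $\Box_{j}$ is $\Box$ when $\ques_{j}=\Diamond$ and $\blacksquare$ when $\ques_{j}=\Diamondblack$; since $\mathcal I$ sends $\star_{j}\{\cdot\}$ to $\Box_{j}(\cdot)$, the premise and conclusion of $\dgp$ interpret respectively to $\mathcal I(X)\vee\Box_{n+1}\cdots\Box_{n+m}\mathcal I(Y)$ and $\mathcal I(X)\vee\Box_{1}\cdots\Box_{n}\mathcal I(Y)$, so instantiating the box-form theorem at $B:=\mathcal I(Y)$ and disjoining $\mathcal I(X)$ gives the required entailment (the degenerate cases $n=0$ and $m=0$, where one of the prefixes is empty, are covered uniformly).

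\emph{Main obstacle.} I expect the real difficulty to lie in the completeness direction, specifically in the modular derivation of a general path axiom from the single structural rule $\dgp$ together with the display rules: one must get the displaying and repacking steps right for an arbitrary prefix pair $\Pi,\Sigma$ and handle the degenerate cases $n=0$ (diamond-free antecedent) and $m=0$ (diamond-free consequent). The soundness direction is comparatively routine once one observes that shallowness of the rules confines all reasoning to the top level, so that no deep congruence argument is needed and each case reduces to a small fact of $\Kt$-provability of the kind listed above.
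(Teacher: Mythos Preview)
The paper does not give its own proof of this theorem: it is quoted verbatim from~\cite{GorPosTiu11} (and ultimately traces back to Kracht~\cite{Kra96} and Wansing~\cite{Wan98dml}), so there is no in-paper argument to compare against. Your proposal is the standard soundness-and-completeness argument for display calculi and is correct as a sketch; it is essentially the route taken in the cited sources: soundness by induction on the $\SKT+\GPd$ derivation via the formula interpretation~$\mathcal I$ (with the display rules handled by residuation and $\dgp$ by the contraposed box-form of the axiom), and completeness by simulating the Hilbert calculus in $\SKT+\GPd+\cut$ and then invoking Theorem~\ref{sktcutlim}.

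One small point worth tightening in your completeness direction: when you say ``modus ponens is simulated by $\cut$ on the cut formula'', note that the induction hypothesis gives you a derivation of the single formula $\overline{A}\vee B$, not of the two-formula sequent $\overline{A},B$, so you need an extra step (either invertibility of $\disr$, which holds with cut since $A\wedge\overline{B},\overline{A},B$ is a generalized identity, or a second cut) before the main cut against $A$ can fire. This is routine but should be said explicitly. Everything else---the tense axioms via $\rf/\rp$, necessitation via weakening-then-residuation, and the derivation of each general path axiom from its $\dgp$ instance by Kracht's display-and-repack trick---is exactly as in the literature.
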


\subsection{Labeled Calculi for Tense Logics}
\label{sec:labeled-calculi}

Labeled sequents~\cite{Fit83,Min97} generalize 
Gentzen sequents by prefixing \textit{state variables} to formulae
occurring in the sequent and by making the relational semantics explicit in the syntax. Labeled sequents are defined via the BNF grammar below:
$$
\Lambda ::= \empstr \ | \ x : A \ | \ \Lambda, \Lambda \ | \ Rxy, \Lambda
$$
where $A \in \mathcal{L}_{\Kt}$, and $x$ and $y$ are among a denumerable set $x, y, z, \ldots$ of labels. We often write a labeled sequent $\Lambda$ as $\mathcal{R},\Gamma$ where $\mathcal{R}$ consists of the \emph{relational atoms} of the form $Rxy$ occurring in $\Lambda$ and $\Gamma$ consists of the \emph{labeled formulae} of the form $x : A$ occurring in $\Lambda$. Additionally, characters such as $\mathcal{R}, \mathcal{Q}, \ldots$ will be used to denote \emph{sets} of relational atoms and Greek letters such as $\Gamma, \Delta, \ldots$ will be used to denote \emph{multisets} of labeled formulae. As in the case of nested sequents, we assume that comma commutes and associates, meaning that each labeled sequent $\Lambda$ can indeed be written in the form above, and also assume that $\empstr$ represents the \emph{empty string} or \emph{empty sequent}, which acts as an identity element for comma and occurs only implicitly in labeled sequents.

A labeled sequent can be viewed as a directed graph (defined using~$\mathcal{R}$) with formulae decorating each node~\cite{CiaLyoRam18}. Note that in a labeled sequent $\Lambda = \R, \Gamma$ commas between relational atoms are interpreted conjunctively, the comma between $\R$ and $\Gamma$ is interpreted as an implication, and the commas between the labeled formulae in $\Gamma$ are interpreted disjunctively.

Vigan\`{o}~\cite{Vig00} constructed labeled sequent calculi for non-classical logics whose semantics are defined by Horn formulae. Negri~\cite{Neg05} extended the method to generate cut-free and contraction-free labeled sequent calculi for the large family of modal logics whose Kripke semantics are defined by geometric (first-order) formulae, similar to what had been achieved by Simpson for intuitionistic modal logics in his dissertation~\cite{Sim94}. The proof of cut-elimination is general in the sense that it applies uniformly to every modal logic defined by geometric formulae; this result has been extended to intermediate and other non-classical logics~\cite{Bor08,DycNeg12} and to arbitrary first-order formulae~\cite{DycNeg15}.

We begin by extending in the natural way the usual labeled sequent calculus for~$\textsf{K}$ to a labeled sequent calculus for~$\Kt$.

\begin{definition}[The labeled sequent calculus $\LKt$~\cite{Bor08,CiaLyoRam18}]\label{labeled_Calc_Def}\


\begin{center}
\begin{tabular}{c @{\hskip 1em} c @{\hskip 1em} c}
\AxiomC{} \RightLabel{$\id$}
\UnaryInfC{$\R, x: p, x: \overline{p}, \Gamma$}
\DisplayProof

&

\AxiomC{$\R, x:A,x:B, \Gamma $}
\RightLabel{$\disr$}
\UnaryInfC{$\R, x:A\lor B, \Gamma$}
\DisplayProof

&

\AxiomC{$\R, x:A, \Gamma$}
\AxiomC{$\R, x:B, \Gamma$}
\RightLabel{$\conr$}
\BinaryInfC{$\R, x:A\land B, \Gamma$}
\DisplayProof
\end{tabular}
\end{center}

\begin{center}
\begin{tabular}{c @{\hskip 1em} c}
\AxiomC{$\R, Rxy, y:A, \Gamma$}
\RightLabel{$\boxr^{*}$}
\UnaryInfC{$\R, x:\Box A, \Gamma$}
\DisplayProof

&

\AxiomC{$\R, Rxy, y:A, x:\Diamond A, \Gamma$}
\RightLabel{$\diar$}
\UnaryInfC{$\R, Rxy, x:\Diamond A, \Gamma$}
\DisplayProof

\end{tabular}
\end{center}

\begin{center}
\begin{tabular}{c @{\hskip 1em} c}
\AxiomC{$\R, Ryx, y:A, \Gamma$}
\RightLabel{$\bboxr^{*}$}
\UnaryInfC{$\R, x:\blacksquare A, \Gamma$}
\DisplayProof

&

\AxiomC{$\R, Ryx, y:A, x:\bldia A, \Gamma$}
\RightLabel{$\bdiar$}
\UnaryInfC{$\R, Ryx, x:\bldia A, \Gamma$}
\DisplayProof

\end{tabular}
\end{center}

\end{definition}

The~$\boxr$ and~$\bboxr$ rules have a side condition: ($\ast$) the variable~$y$ does not occur in the conclusion. When a variable is not allowed to occur in the conclusion of an inference, we refer to it as an \emph{eigenvariable}.

A general path axiom is a Sahlqvist formula, and hence it has a first-order frame correspondent which can be computed---even in the case of tense logics (see~\cite{BlaRijVen01}). Following the method in~\cite{Neg05}, the labeled structural rule~$\lgp$ corresponding to a general path axiom~$\Pi A \rightarrow \Sigma A$ is obtained below. We let $R_{\Diamond}xy := Rxy$, and $R_{\blacklozenge}xy := Ryx$, and use $R_{\ques}xy \in \{R_{\Diamond}xy, R_{\blacklozenge}xy\}$, that is, $R_{\ques}xy$ may represent either $Rxy$ or $Ryx$. Furthermore, we use the following notation: $\R_{\Pi}xy := R_{\ques_{1}}xy_{1},...,R_{\ques_{m}}y_{m}y$ for $\Pi = \ques_{1}...\ques_{m}$ and $\R_{\Sigma}xy := R_{\ques_{1}}xz_{1},...,R_{\ques_{n}}y_{n}y$ for $\Sigma = \ques_{1}...\ques_{n}$. 
\begin{center}
\AxiomC{$\R, \R_{\Pi}xy, \R_{\Sigma}xy, \Gamma$} \RightLabel{$\lambda GP$}
\RightLabel{$\lgp^{*}$}
\UnaryInfC{$\R, \R_{\Pi}xy, \Gamma$}
\DisplayProof
\end{center}

\noindent
This rule also has a side condition: ($\ast$) all variables occurring in the relational atoms $\R_{\Sigma}xy$ with the exception of $x$ and $y$ are eigenvariables.


\begin{remark}\label{rem}
In the rule above, some care is needed in the boundary cases when $\Pi$ or $\Sigma$ are empty strings of diamonds. The table below specifies the instances of the rule depending on whether the string is non-empty (marked with~$+$), or empty (marked with~$\epsilon$):
 
\begin{center}
\begin{tabular}{| c | c | c | c|}
\hline
$\Pi$ & $\Sigma$ & Premise & Conclusion    \\
\hline
$+$  & $+$  &
$\R, R_ {\Pi}xy, R_ {\Sigma}xy, \Gamma$
&
$\R, R_ {\Pi}xy, \Gamma$
\\
\hline
$+$  & $\epsilon$  &
$\R, R_ {\Pi}xy, x = y, \Gamma$
&
$\R, R_ {\Pi}xy, \Gamma$
\\
\hline
$\epsilon$  & $+$  &
$\R, R_ {\Sigma}xx, \Gamma$
&
$\R, \Gamma$
\\
\hline
$\epsilon$  & $\epsilon$  &
$\R, \Gamma$
&
$\R, \Gamma$
\\
\hline
\end{tabular}
\end{center}

Note that $[x/y]$ stands for the substitution of the label $y$ by the
label $x$. Also, when $\Pi = \epsilon$ or $\Sigma = \epsilon$, $R_{\Pi}xy$ and $R_{\Sigma}xy$ are taken to be $x=y$. For the second entry of the table, we must extend our labelled sequents to include equations of the form $x = y$ as relational atoms and extend our calculus with equality and substitution rules (see~\cite{Neg05}). All results of the paper continue to hold with respect to the addition of such rules; however, we omit their presentation for simplicity. For the third and fourth entries in the table, the equality symbols that arise have been eliminated through substitutions and suitable argumentation. This argumentation can be formalized using the equality and substitution rules specified in~\cite{Neg05}.
\end{remark}

Since particular attention will be paid to the class of path axioms (specifically in section \ref{Trans_Path_Ax_Ext_Subsection}), we also explicitly give the structural rule $\lp$ which is an instance of $\lgp$ and corresponds to a path axiom $\Pi A \rightarrow \ques A$:

\begin{center}
\AxiomC{$\R, R_{\Pi}xy, R_{\ques}xy, \Gamma$} \RightLabel{$\lp$}
\UnaryInfC{$\R, R_{\Pi}xy, \Gamma$}
\DisplayProof
\end{center}

We use the name $\GPl$ to represent the set of labeled structural rules corresponding to a set $\GP$ of general path axioms and the name $\Pl$ to refer to the set of labeled structural rules corresponding to a set $\Pset$ of path axioms.

It is straightforward to apply the arguments and methods concerning labeled calculi for modal and tense logics, presented in~\cite{Bor08,Neg05}, to conclude the following:

\begin{lemma}\label{g3ktstrucadmiss} Let $\R$ and $\Q$ be sets of relational atoms with $\R \cap \Q = \emptyset$, and $\Gamma$ and $\Delta$ be multisets of labeled formulae. The following rules are admissible in~$\LKt + \GPl$:
\begin{center}
\begin{tabular}{c c c c}
\AxiomC{$\R, \Gamma$}
\RightLabel{$\sub$}
\UnaryInfC{$\R[x/y], \Gamma[x/y]$}
\DisplayProof

&

\AxiomC{$\R, \Gamma, \Delta, \Delta$}
\RightLabel{$\ctr$}
\UnaryInfC{$\R, \Gamma, \Delta$}
\DisplayProof

&

\AxiomC{$\R, \Gamma$}
\RightLabel{$\wk$}
\UnaryInfC{$\R, \Q, \Gamma, \Delta$}
\DisplayProof

&

\AxiomC{$\R, \Gamma, x:A$}
\AxiomC{$\R, \Gamma, x:\overline{A}$}
\RightLabel{$\cut$}
\BinaryInfC{$\R, \Gamma$}
\DisplayProof
\end{tabular}
\end{center}
We assume that duplicates of relational atoms are removed from $\R[x/y]$ in the conclusion of the $\sub$ rule.
\end{lemma}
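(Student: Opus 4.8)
The plan is to follow the now-standard template for the structural analysis of labeled sequent calculi due to Negri~\cite{Neg05}, as adapted to tense logic by Boretti~\cite{Bor08}. The point is that $\LKt$ is exactly the labeled calculus internalizing the forward-and-backward relational semantics of $\Kt$, and that each rule of $\GPl$ is an instance of Negri's (geometric) rule scheme: a general path axiom $\Pi A \rightarrow \Sigma A$ is Sahlqvist, and its first-order frame correspondent $\forall x\forall y\,(\R_{\Pi}xy \rightarrow \exists \vec z\,\R_{\Sigma}xy)$ is a geometric implication whose ``closure condition'' reading is precisely the rule $\lgp^{*}$, with the existentially quantified variables becoming the eigenvariables flagged by the side condition. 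Hence the generic proofs of~\cite{Neg05,Bor08} apply, and the work consists in checking that nothing in the tense setting or in the general path rules breaks them.

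Concretely I would establish the admissibilities in the following order, each by a routine induction. (1) \emph{Height-preserving (hp) admissibility of $\sub$}: induction on the height of the given derivation of $\R,\Gamma$; in the inductive step one applies the induction hypothesis to the premises and reapplies the last rule, first renaming --- again by hp-substitution with a fresh label --- any eigenvariable of a $\boxr^{*}$, $\bboxr^{*}$, or $\lgp^{*}$ step that would be captured by $x$ or $y$, and deleting duplicated relational atoms as the statement permits. (2) \emph{Hp-admissibility of $\wk$}: the same induction on height, where the danger is that a relational atom being weakened in shares a variable with a downstream eigenvariable; this is again defused by renaming the eigenvariable via~(1). (3) \emph{Hp-invertibility of the logical rules} $\disr$, $\conr$, $\boxr^{*}$, $\bboxr^{*}$ (the rules $\diar$, $\bdiar$ retain their principal formula in the premise and are hp-invertible for free by~(2)): induction on height, using~(1) for the two box rules. (4) \emph{Hp-admissibility of $\ctr$}: induction on height; the only interesting case is when a contracted labeled formula is principal in the last inference, where~(3) lets us replace that inference by its premise before invoking the induction hypothesis. (5) \emph{Admissibility of $\cut$}: the usual lexicographic induction on the pair (weight of the cut formula, sum of the heights of the two premise derivations), with the standard cases --- an axiom premise; the cut formula parametric in the last inference of one premise, where the cut permutes upward (using $\wk$ and $\sub$ to keep the eigenvariable conditions of $\boxr^{*}$/$\bboxr^{*}$/$\lgp^{*}$ intact when those are permuted over); and the cut formula principal in both premises, which for $\wedge/\vee$ reduces to cuts on immediate subformulae and for $\Box/\Diamond$ (respectively $\blacksquare/\blacklozenge$) is handled by the modal reduction, instantiating the fresh successor (predecessor) label introduced by $\boxr^{*}$ ($\bboxr^{*}$) with the witness label supplied by the matching $\diar$ ($\bdiar$) via~(1), followed by cuts of smaller weight and, if needed, a cut of the same weight but strictly smaller height.

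The genuinely delicate points --- and hence where ``straightforward'' must be discharged with care --- are the bookkeeping of eigenvariables and the boundary cases of $\lgp^{*}$. Every reapplication of a rule carrying an eigenvariable after an appeal to an induction hypothesis must re-verify the side condition; in the $\sub$, $\wk$, and $\cut$ arguments this is what forces the systematic renaming above, and in $\cut$ it additionally means that when a cut is pushed above an $\lgp^{*}$ whose eigenvariables occur in the other premise, one first weakens the relational atoms $\R_{\Sigma}xy$ into that premise, exactly as in Negri's geometric cut-elimination. Moreover, when $\Pi$ or $\Sigma$ is the empty string, the instances of $\lgp^{*}$ from Remark~\ref{rem} produce equality atoms $x=y$ together with associated substitutions; handling these requires the equality and substitution rules of~\cite{Neg05}, and with those added (as noted in Remark~\ref{rem}) the inductions go through unchanged. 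Since no new phenomena arise from having two accessibility directions or from the shape of the general path rules, the lemma follows.
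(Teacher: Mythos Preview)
Your proposal is correct and follows exactly the approach the paper takes: the paper does not spell out a proof at all but simply states that the result follows by applying the arguments and methods of Negri~\cite{Neg05} and Boretti~\cite{Bor08} to the present setting. Your outline is a faithful (and more detailed) unpacking of precisely that template, including the eigenvariable bookkeeping and the boundary cases of $\lgp^{*}$ flagged in Remark~\ref{rem}.
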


\begin{theorem} $\Kt + \GP \vdash A$ iff $x:A$ is derivable in $\LKt + \GPl$.

\end{theorem}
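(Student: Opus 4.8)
The plan is to prove the soundness and completeness of $\LKt + \GPl$ relative to $\Kt + \GP$ by the standard route for labeled sequent calculi, following Negri~\cite{Neg05} and Bora\v{c}i\'{c}~\cite{Bor08}. For the right-to-left direction (completeness of the axiomatic system, i.e. if $x:A$ is derivable then $\Kt + \GP \vdash A$), I would proceed semantically: first observe that $\LKt + \GPl$ is sound with respect to the class of Kripke frames validating $\GP$. Since each general path axiom $\Pi A \rightarrow \Sigma A$ is a Sahlqvist formula, it has a first-order frame correspondent, and the rule $\lgp$ is exactly (the rule form of) that correspondent obtained via the standard geometric-rule recipe; hence a frame validates the axiom iff it validates the corresponding frame condition iff the rule $\lgp$ preserves validity of sequents over that frame. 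One then checks rule-by-rule that each rule of $\LKt$ preserves validity under the usual interpretation of a labeled sequent $\R, \Gamma$ (relational atoms read conjunctively as antecedent, labeled formulae read disjunctively as succedent). Combined with the standard completeness of $\Kt + \GP$ with respect to its frame class (a consequence of Sahlqvist completeness, see~\cite{BlaRijVen01}), this yields: if $x:A$ is derivable in $\LKt + \GPl$, then $x:A$ is valid on all $\GP$-frames, so $A$ is valid on all $\GP$-frames, so $\Kt + \GP \vdash A$.

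For the left-to-right direction (if $\Kt + \GP \vdash A$ then $x:A$ is derivable), I would give a syntactic simulation: by induction on a Hilbert-style derivation of $A$ in $\Kt + \GP$, show that every axiom is derivable (as $x:A$) in $\LKt + \GPl$ and that the rules modus ponens and necessitation (for both $\Box$ and $\blacksquare$) are admissible. Modus ponens is handled by the admissibility of $\cut$ from Lemma~\ref{g3ktstrucadmiss} together with the fact that $A \to B$ abbreviates $\overline{A} \vee B$: from $x:\overline{A} \vee B$ and $x:A$ one gets $x:B$ using $\disr$ read backwards, weakening, and a cut on $A$. Necessitation for $\Box$ is immediate from $\boxr^{*}$: given a derivation of $x:A$, rename to obtain a derivation of $y:A$ with $y$ fresh, weaken in $Rxy$, and apply $\boxr^{*}$; similarly for $\blacksquare$ via $\bboxr^{*}$. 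The propositional axioms and the $\K$-axioms for $\Box$ and $\blacksquare$ are routine finite derivations. The interaction axioms $A \to \Box\blacklozenge A$ and $A \to \blacksquare\Diamond A$ are derivable because the single relation symbol $R$ is shared by the future and past modalities — this is exactly where the tense structure is used. Finally, for each general path axiom in $\GP$, one must derive $x : (\Pi A \rightarrow \Sigma A)$ in $\LKt + \GPl$ using the corresponding structural rule $\lgp$; this is a bookkeeping argument building a witnessing chain of relational atoms and applying the modal rules along it, with the boundary cases ($\Pi$ or $\Sigma$ empty) handled as in Remark~\ref{rem} using the equality rules.

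Alternatively — and probably more cleanly — one can avoid the semantic detour entirely and run a purely proof-theoretic argument in both directions: the right-to-left direction also works syntactically if one has a cut-free sequent or display calculus for $\Kt + \GP$ to compare against, but given that Theorems~\ref{sktcutlim} and~\ref{sktcompleteness} are already available, the slickest proof is to compose with the display calculus: by Theorem~\ref{sktcompleteness}, $\Kt + \GP \vdash A$ iff $A$ is derivable in $\SKT + \GPd$, and then invoke the translation from $\SKT + \GPd$ to $\LKt + \GPl$ (which is Theorem~\ref{DisToLab}, stated later) for one direction and a reverse simulation for the other. However, since Theorem~\ref{DisToLab} comes later in the paper, for this early statement the self-contained semantic argument above is the intended one.

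The main obstacle I anticipate is the careful treatment of the boundary cases of the $\lgp$ rule — when $\Pi$ or $\Sigma$ is the empty string of diamonds — both in verifying soundness (where $R_{\Pi}xy$ collapses to the equation $x = y$ and one must appeal to the equality and substitution rules of~\cite{Neg05}, suppressed here for simplicity) and in the syntactic derivation of the axiom $x : (\Pi A \rightarrow \Sigma A)$ in those degenerate cases. The bulk of the remaining work — checking that each of the finitely many propositional, modal, and tense axioms has a concrete derivation, and that modus ponens and necessitation are admissible — is routine and follows the well-trodden pattern of~\cite{Neg05,Bor08}; I would not grind through these derivations in full but simply indicate the key rule applications (cut for modus ponens, $\boxr^{*}/\bboxr^{*}$ for necessitation, and a witnessing chain of relational atoms for the path axioms).
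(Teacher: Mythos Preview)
Your proposal is correct and matches the paper's approach: the paper does not give a proof of this theorem but simply states that it is straightforward to apply the standard arguments for labeled calculi from~\cite{Bor08,Neg05}, which is precisely the route you outline (semantic soundness via the frame condition plus Sahlqvist completeness in one direction, and syntactic simulation of the Hilbert system using the admissible cut of Lemma~\ref{g3ktstrucadmiss} in the other). Your remark about the boundary cases of $\lgp$ and the equality rules is also in line with the paper's Remark~\ref{rem}.
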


\subsection{Deep Nested Calculi for Tense Logics }

In this section we present Gor\'e \textit{et al.}'s \cite{GorPosTiu11} deep nested calculus $\DKT$ for $\Kt$, as well as extensions of $\DKT$ with inference rules---referred to as \emph{propagation rules}---that correspond to the class of \emph{path axioms}. 
%
Although we will show how to translate shallow nested derivations into labeled derivations for the logics $\Kt + \GP$, we consider path axioms here because the reverse translation from labeled proofs to shallow nested proofs is only known for the smaller class of logics $\Kt + \Pset$. The deep nested calculi presented here will be used to facilitate and simplify the reverse translation.

Our calculi make use of nested sequents from the same language as $\SKT$. Every nested sequent $X := Y, \circw Z_{1} \}, ..., \circw Z_{n} \}, \circb W_{1} \}, ..., \circb W_{m} \}$ ($Y$ contains no nesting) may be represented as a tree with two types of edges \cite{Kas94,GorPosTiu11}. The tree of $X$, denoted $tree(X)$, is shown below:
\begin{center}
\begin{tabular}{c}
\xymatrix@C=1em{
		& &   & Y\ar@{-}[dlll]|-{\circ}\ar@{-}[dll]|-{\circ}\ar@{-}[dl]|-{\circ}\ar@{-}[dr]|-{\bullet}\ar@{-}[drr]|-{\bullet}\ar@{-}[drrr]|-{\bullet} 	&   & 	&  & 		\\
	tree(Z_{1})	& \hdots  & tree(Z_{n}) &  	& tree(W_{1})	& \hdots &  tree(W_{m}) & 
}
\end{tabular}
\end{center} 
A nested sequent that contains holes in place of formulae is called a \emph{context}. Like nested sequents, contexts may be represented as trees, but where nodes are additionally labeled with holes. A context with a single hole is written as $X[]$ and a context with multiple holes is written as $X[]\cdots[]$. We may compose a context with sequents to obtain a sequent (e.g. $X[Y_{1}]\cdots[Y_{n}]$ is a sequent where $X[]\cdots[]$ is a multi-hole context and $Y_{1}$, ..., $Y_{n}$ are sequents); graphically, this corresponds to fusing the root of the tree of each sequent with the node in the context where the associated hole occurs. Note that this notation is the opposite of what is often used for nested sequent calculi for \emph{modal} logics in the literature, though is consistent with the notation used in the literature for nested sequent calculi for tense logics (cf.~\cite{GorPosTiu11}).

When representing a context graphically, each hole will label a unique node in the corresponding tree. For a single-hole context we write $X[]_{i}$ to indicate the node $i$ where the hole occurs, and for a multi-hole context we write $X[]_{i_{1}}\cdots[]_{i_{n}}$ to indicate the unique nodes in the tree that correspond to each hole.


\begin{definition}[The Calculus $\DKT$~\cite{GorPosTiu11}\footnote{As shown in~\cite{GorPosTiu11}, copying the principal formula in the $(\Box)$ and $(\blacksquare)$ rules is useful when performing proof-search, despite being unnecessary for completeness of the calculus. Still, we make use of the same rules here since we will leverage methods presented in~\cite{GorPosTiu11} that make use of the calculus $\DKT$ in the form above.}]\

\begin{center}
\begin{tabular}{c c c}
\AxiomC{} \RightLabel{$\id$}
\UnaryInfC{$X \lcut p, \overline{p} \rcut$}
\DisplayProof

&

\AxiomC{$X \lcut A, Y \rcut$}
\AxiomC{$X \lcut B, Y \rcut$}
\RightLabel{$(\land)$}
\BinaryInfC{$X \lcut A\land B, Y \rcut$}
\DisplayProof

&

\AxiomC{$X \lcut A,B, Y \rcut $}
\RightLabel{$(\lor)$}
\UnaryInfC{$X \lcut A\lor B, Y \rcut$}
\DisplayProof

\\[1.5em]

\AxiomC{$X[\bBox A, \bullet \{ A \}]$}
\RightLabel{$(\bBox)$}
\UnaryInfC{$X[\blacksquare A]$}
\DisplayProof

&

\AxiomC{$X[\bullet \{ Y, A \}, \blacklozenge A]$}
\RightLabel{$(\bdiam_{1})$}
\UnaryInfC{$X[\bullet \{ Y \}, \blacklozenge A]$}
\DisplayProof

&

\AxiomC{$X[\circ \{ Y, \blacklozenge A \}, A]$}
\RightLabel{$(\bdiam_{2})$}
\UnaryInfC{$X[\circ \{ Y, \blacklozenge A \}]$}
\DisplayProof

\\[1.5 em]

\AxiomC{$X[\Box A, \circ \{ A \}]$}
\RightLabel{$(\Box)$}
\UnaryInfC{$X[\Box A]$}
\DisplayProof

&

\AxiomC{$X[\circ \{ Y, A \}, \Diamond A]$}
\RightLabel{$(\diam_{1})$}
\UnaryInfC{$X[\circ \{ Y \}, \Diamond A]$}
\DisplayProof

&

\AxiomC{$X[\bullet\{ Y, \Diamond A \}, A]$}
\RightLabel{$(\diam_{2})$}
\UnaryInfC{$X[\bullet\{ Y, \Diamond A \}]$}
\DisplayProof

\end{tabular}

\end{center}

\end{definition}

We now aim to define propagation rules for deep nested calculi. To do this, we follow the work in \cite{GorPosTiu11} and first introduce path axiom inverses, compositions of path axioms, and the completion of a set of path axioms in order to define the corresponding set of equivalent propagation rules. Additions of these propagation rules to $\DKT$ will yield cut-free, sound, and complete deep nested calculi for logics $\Kt + \Pset$. Note that we define $\langle ? \rangle^{-1} = \Diamond$ if $\ques = \bldia$, and $\langle ? \rangle^{-1} = \bldia$, if $\ques = \Diamond$.

\begin{definition}[Path Axiom Inverse~\cite{GorPosTiu11}] If $F$ is a path axiom of the form $\dia{F_1} \cdots \dia{F_n} A \rightarrow \dia{F} A$, then we define the \emph{inverse of $F$} to be

\begin{center}
$I(F) := \dia{F_n}^{-1} \cdots \dia{F_1}^{-1} A \rightarrow \dia{F}^{-1} A$
\end{center}

\noindent
Given a set of path axioms $P$, we define the \emph{set of inverses} to be the set $I(P) := \{I(F) \ | \ F \in P\}$.

\end{definition}

\begin{definition}[Composition of Path Axioms~\cite{GorPosTiu11}] Given two path axioms

\begin{center}
$F := \dia{F_1} \cdots \dia{F_n} A \rightarrow \dia{F} A$ and $G := \dia{G_1} \cdots \dia{G_m} A \rightarrow \dia{G} A$
\end{center}

\noindent
we say  \emph{$F$ is composable with $G$ at $i$} iff $\dia{F} = \dia{G_{i}}$. We define the \emph{composition}

\begin{center}
$F \triangleright^{i} G := \dia{G_1} \cdots \dia{G_{i-1}} \dia{F_1} \cdots \dia{F_n} \dia{G_{i+1}} \cdots \dia{G_m} A \rightarrow \dia{G} A$
\end{center}

\noindent
when $F$ is composable with $G$ at $i$.

Using these individual compositions, we define the following \emph{set of compositions}:

\begin{center}
$F \triangleright G := \{F \triangleright^{i} G \ | \ \text{$F$ is composable with $G$ at $i$} \} $
\end{center}

\end{definition}

\begin{example} As an example, we can compose the axiom $\Diamond \Diamond A \rightarrow \blacklozenge A$ with $\blacklozenge \Diamond A \rightarrow \Diamond A$ to obtain $\Diamond \Diamond \Diamond A \rightarrow \Diamond A$.

\end{example}

\begin{definition}[Completion~\cite{GorPosTiu11}] The \emph{completion} of a set $P$ of path axioms, written $P^{*}$, is the smallest set of path axioms containing $P$ such that\\

(1) $\Diamond A \rightarrow \Diamond A, \bldia A \rightarrow \bldia A \in P^{*}$\\

(2) If $F,G \in P^{*}$ and $F$ is composable with $G$, then $F \triangleright G \subseteq P^{*}$.

\end{definition}

After introducing further notions necessary to define the propagation rules, we will give examples showing the significance of defining the rules relative to the \emph{completion} of a set of path axioms, rather than defining the rules relative to just the given set of path axioms. As will be shown, without defining the rules relative to the completion, the corresponding set of rules would not be enough to achieve completeness of the resulting calculus.

Let us now recall the notion of a propagation graph and the notion of a path in a propagation graph from~\cite{GorPosTiu11}. We introduce these concepts using the diamond rules of $\DKT$ as an example. The diamond rules ($\Diamond_{1}$), ($\Diamond_{2}$), ($\bldia_{1}$), ($\bldia_{2}$) can be read bottom-up as propagating formulae to nodes in the tree of a sequent.

For example, the ($\Diamond_{1}$) rule propagates an $A$ to a node along a $\circ$-edge, whereas the ($\Diamond_{2}$) rule propagates an $A$ backward along a $\bullet$-edge. Similarly, the ($\bldia_{1}$) rule propagates an $A$ forward to a node along a $\bullet$-edge, and the ($\bldia_{2}$) rule propagates an $A$ backward along a $\circ$-edge. These movements are represented in the diagram below:

\begin{center}
\begin{tabular}{c}
\xymatrix{
		& &   & X\ar@{-}[dll]|-{\circ}\ar@/^-1pc/@{.>}[dll]|-{\Diamond} \ar@{-}[drr]|-{\bullet}\ar@/^1pc/@{.>}[drr]|-{\blacklozenge} 	&   & 	&   		\\
		& Y\ar@/^-1pc/@{.>}[urr]|-{\blacklozenge}  &  &  	&	& Z\ar@/^1pc/@{.>}[ull]|-{\Diamond} &   
}
\end{tabular}
\end{center}


This understanding of how formulae are propagated is crucial to define the propagation rules for deep nested calculi. In fact, as will be explained below, each path axiom can be read as an instruction that expresses how to propagate a formula along some path. We therefore give a precise definition of the \emph{propagation graph} of a sequent, which explicitly specifies how formulae may move when being propagated throughout the tree of a sequent.

\begin{definition}[Propagation Graph~\cite{GorPosTiu11}] Let $X$ be a nested sequent where $N$ is the set of nodes in $tree(X)$. We define the \emph{propagation graph $PG(X) = ( N,E )$ of $X$} to be the directed graph with set of nodes $N$ and set of edges $E$ each labeled with either a $\Diamond$ or $\blacklozenge$ as follows:

\begin{enumerate}

\item For every node $n \in N$ and $\circ$-child $m$ of $n$, we have a labeled edge $(n,m,\Diamond) \in E$ 
and a labeled edge $(m,n,\blacklozenge) \in E$.

\item For every node $n \in N$ and $\bullet$-child $m$ of $n$, we have a labeled edge $(n,m,\blacklozenge) \in E$ 
and a labeled edge $(m,n,\Diamond) \in E$. 

\end{enumerate}
\end{definition}

\begin{lemma}\label{DisEquiv_Sequents_Identical_PG}
Suppose that $X$ and $Y$ are display equivalent nested sequents. Then, $PG(X) = PG(Y)$.
\end{lemma}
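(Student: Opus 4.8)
The plan is to induct on the length of the display-equivalence derivation witnessing that $X$ and $Y$ are mutually derivable using only the display rules $\rp$ and $\rf$. Since display equivalence is the transitive closure of single-step applications of $\rp$ and $\rf$ (in either direction), and $PG(\cdot)$ is obviously symmetric under reversing the derivation, it suffices to show that a single application of $\rf$ or $\rp$ to a nested sequent does not change its propagation graph. So the core of the proof is a local claim: if $Z'$ is obtained from $Z$ by one application of $\rf$ or $\rp$, then $PG(Z) = PG(Z')$.

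To prove that local claim, I would first observe that the display rules $\rf$ (from $X, \circw Y \}$ to $\circb X \}, Y$) and $\rp$ (from $X, \circb Y \}$ to $\circw X \}, Y$) do not add, delete, or relabel any formulae; they only rearrange the nesting structure. Hence the node set $N$ of $tree(Z)$ and $tree(Z')$ can be identified (the nodes are the ``positions'' carrying the top-level formulae, and these are in canonical bijection before and after the rule application), so it remains to check that the edge set $E$ is preserved. The key structural fact is that the rules $\rf$ and $\rp$ implement exactly the residuation between $\circ$ and $\bullet$: moving a substructure $\circw Y\}$ out of $X$ (so that it becomes the parent $\circb X \}$ sitting beside $Y$) reverses the direction of the edge joining the two relevant nodes \emph{and} swaps $\circ$ for $\bullet$ along it. But by the definition of $PG$, a $\circ$-edge from $n$ to $m$ contributes precisely the pair of labeled edges $(n,m,\Diamond)$ and $(m,n,\blacklozenge)$, and a $\bullet$-edge from $m$ to $n$ contributes $(m,n,\blacklozenge)$ and $(n,m,\Diamond)$ — the \emph{same} two labeled edges. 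Thus reversing the tree edge and toggling $\circ/\bullet$ leaves the contributed edges in $PG$ unchanged. All other tree edges, being untouched by the rule, contribute the same edges to $PG$ on both sides, and so $PG(Z) = PG(Z')$.

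The main obstacle, and the part requiring the most care, is bookkeeping the identification of nodes across a display-rule application and handling the various shapes that $X$, $Y$, and the surrounding context can take (e.g., when $X$ or $Y$ is empty, when the displayed nesting sits inside further nestings, or when the rule is applied deep in the tree versus at the root). One should argue that because comma is commutative and associative and $\varepsilon$ is an identity, the bijection on nodes is well-defined and the only edge affected is the single tree-edge between the node formerly carrying $\circw Y \}$ (or $\circb Y \}$) and its parent. Once that reduction is in place, the verification that this one edge's $PG$-contribution is invariant under the $\circ \leftrightarrow \bullet$/direction-reversal swap is the short computation sketched above, and then the induction on derivation length closes the proof.
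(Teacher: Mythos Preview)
Your proposal is correct and follows essentially the same approach as the paper: induct on the number of display-rule applications, reducing to the single-step case, and there verify that the one tree edge affected contributes the same pair of labeled edges to $PG$ before and after (since a $\circ$-edge from $n$ to $m$ and a $\bullet$-edge from $m$ to $n$ both yield $\{(n,m,\Diamond),(m,n,\blacklozenge)\}$). One minor remark: the display rules $\rf$ and $\rp$ in $\SKT$ are \emph{shallow}---they apply only at the top level---so your concern about ``when the rule is applied deep in the tree versus at the root'' does not arise, and the bookkeeping is correspondingly simpler than you anticipate.
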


\begin{proof} We prove the result by induction on the minimum number of display inferences needed to derive $Y$ from $X$.

\textbf{Base case.} Assume w.l.o.g. that $X = Z,{\circ}\{W\}$ and $Y = {\bullet}\{Z\},W$ so that $Y$ is derivable from $X$ with a single application of a display rule. Let $PG(Z) = (N_{1},E_{1})$ and $PG(W) = (N_{2},E_{2})$ with $n_{1}$ the root of $tree(Z)$ and $n_{2}$ the root of $tree(W)$. Observe that $PG(X) = (N,E)$, where $N = N_{1} \cup N_{2}$ and $E = E_{1} \cup E_{2} \cup \{(n_{1},n_{2},\Diamond),(n_{2},n_{1},\bldia)\}$, 
which is identical to $PG(Y)$ by definition.

\textbf{Inductive step.} Suppose that $n+1$ is the minimum number of display inferences needed to derive $Y$ from $X$. It follows that there exists a nested sequent $Z$ such that $Z$ is derivable from $X$ with one display inference, and $Y$ is derivable from $Z$ with $n$ applications of the display rules. By the base case we know that $PG(X)= PG(Z)$, and by the inductive hypothesis, $PG(Z) = PG(Y)$.
\end{proof}

\begin{definition}[Path~\cite{GorPosTiu11}] A \emph{path} is a sequence of nodes and diamonds (labeling edges) of the form:

\begin{center}
$n_{1},\ques_{1},n_{2},\ques_{2},...,\ques_{k-1},n_{k}$
\end{center}

\noindent
in the propagation graph $PG(X)$ such that $n_{i}$ is connected to $n_{i+1}$ by an edge labeled with $\ques_{i}$. Note that we allow repetitions of nodes along a path (e.g. $n, \Diamond, m, \blacklozenge, n$ is a path). For a given path $\pi = n_{1},\ques_{1},n_{2},\ques_{2},...\ques_{k-1},n_{k}$, we define the \emph{string of $\pi$} to be the string of diamonds $\Pi = \ques_{1} \ques_{2}...\ques_{k-1}$.

\end{definition}

\begin{definition}[Deep Nested Propagation Rules~\cite{GorPosTiu11}] Let $\Pset$ be a set of path axioms. The set of propagation rules $\propdeep$ contains all rules of the form:

\begin{center}
\AxiomC{$X[\ques A]_{i}[A]_{j}$}
\RightLabel{$\dprop$}
\UnaryInfC{$X[\ques A]_{i}[\emptyset]_{j}$}
\DisplayProof
\end{center}

\noindent
where there is a path $\pi$ from $i$ to $j$ in the propagation graph of the premise and $\Pi A \rightarrow \ques A \in (P \cup I(P))^{*}$ with $\Pi$ the string of $\pi$.


\end{definition}

It should be noted that two different sets $P$ and $P'$ of path axioms can generate the same set of propagation rules, i.e. $(P \cup I(P))^{*} = (P' \cup I(P'))^{*}$. For example, both $\{A \rightarrow \Diamond A, \blacklozenge \Diamond A \rightarrow \Diamond A\}$ and $\{A \rightarrow \Diamond A, \blacklozenge A \rightarrow \Diamond A, \Diamond \Diamond A \rightarrow \Diamond A\}$ yield the same set of propagation rules, which would provide a deep nested calculus for tense $\mathsf{S5}$.

\begin{example}[Necessity of Inverses] Let us now demonstrate why inverses must be taken into account when defining propagation rules. Suppose that we did not define the set of propagation rules relative to the set $(\{\Diamond \Diamond A \rightarrow \Diamond A\} \cup \{\blacklozenge \blacklozenge A \rightarrow \bldia A\})^{*}$, but rather, we defined the set of propagation rules relative to the set $\{\Diamond \Diamond A \rightarrow \Diamond A\}^{*}$. All propagation rules in this restricted set are of the form below (where there is a path of the form $i$,$\Diamond, \ldots, \Diamond$,$j$ of length $n \geq 1$ from $i$ to $j$):


\begin{center}
\AxiomC{$X[\Diamond A]_{i}[A]_{j}$}
\RightLabel{$\dprop$}
\UnaryInfC{$X[\Diamond A]_{i}[\emptyset]_{j}$}
\DisplayProof
\end{center}

We now explain why this restricted set of propagation rules--that does not take inverses into account---leads to an incomplete calculus. Below, we attempt to give a root-first derivation of $I(\Diamond \Diamond p \rightarrow \Diamond p) = \blacklozenge \blacklozenge p \rightarrow \bldia p$, which is a theorem of the logic $\Kt + \Diamond \Diamond A \rightarrow \Diamond A$ and should therefore be derivable:

\begin{center}
\begin{tabular}{c}
\AxiomC{$\bullet \{ \bullet \{ \overline{p} \} \}, \bldia p$}
\RightLabel{$\bboxr$}
\UnaryInfC{$\bullet \{ \bBox \overline{p} \}, \bldia p$}
\RightLabel{$\bboxr$}
\UnaryInfC{$\bBox \bBox \overline{p}, \bldia p$}
\RightLabel{$\disr$}
\UnaryInfC{$\bBox \bBox \overline{p} \vee \bldia p$}
\RightLabel{=}
\dottedLine
\UnaryInfC{$\blacklozenge \blacklozenge p \rightarrow \bldia p$}
\DisplayProof
\end{tabular}
\end{center}

Observe that no propagation rule from the restricted set is applicable to the top sequent of the derivation because no propagation rule acts along a path of the form $i$, $\blacklozenge, \ldots, \blacklozenge$, $j$. However, if we allow ourselves to define the propagation rules relative to the set $(\{\Diamond \Diamond A \rightarrow \Diamond A\} \cup \{\blacklozenge \blacklozenge A \rightarrow \bldia A\})^{*}$, then we also have the following rules in our calculus (where there is a path of the form $i$,$\blacklozenge, \ldots, \blacklozenge$,$j$ of length $n \geq 1$ from $i$ to $j$):


\begin{center}
\AxiomC{$X[\bldia A]_{i}[A]_{j}$}
\RightLabel{$\dprop$}
\UnaryInfC{$X[\bldia A]_{i}[\emptyset]_{j}$}
\DisplayProof
\end{center}

Using this rule we can complete the derivation by deriving the top sequent of the above derivation from the initial sequent $\bullet \{ \bullet \{ \overline{p}, p \} \}, \bldia p$:

\begin{center}
\AxiomC{}
\RightLabel{$\id$}
\UnaryInfC{$\bullet \{ \bullet \{ \overline{p}, p \} \}, \bldia p$}
\RightLabel{$\dprop$}
\UnaryInfC{$\bullet \{ \bullet \{ \overline{p} \} \}, \bldia p$}
\DisplayProof
\end{center}

\end{example}

\begin{example}[Necessity of Compositions] Suppose we are given the set $P = \{ \Diamond \blacklozenge \Diamond A \rightarrow \Diamond A, \Diamond \Diamond A \rightarrow \blacklozenge A\}$. One of the composition formulae derivable in the logic $\Kt + P$ is $\Diamond \Diamond \Diamond \Diamond A \rightarrow \Diamond A$. Our example below demonstrates the necessity of defining $\propdeep$ relative to the completion $(P \cup I(P))^{*}$ (which takes into account compositions) instead of just $P$.

If we define our propagation rules relative to just $P$, then we will have the following two propagation rules in our calculus:

\begin{center}
\begin{tabular}{c @{\hskip 3em} c}
\AxiomC{$X[\Diamond A]_{i}[A]_{j}$}
\RightLabel{$\dprop$}
\UnaryInfC{$X[\Diamond A]_{i}[\emptyset]_{j}$}
\DisplayProof

&

\AxiomC{$X[\bldia A]_{k}[A]_{n}$}
\RightLabel{$\dprop$}
\UnaryInfC{$X[\bldia A]_{k}[\emptyset]_{n}$}
\DisplayProof
\end{tabular}
\end{center}

The left rule is applicable when there is a path of the form $i$, $\Diamond$, $n_{1}$, $\bldia$, $n_{2}$, $\Diamond$, $j$ from node $i$ to $j$, and the right rule is applicable when there is a path of the form $k$, $\Diamond$, $n_{1}$, $\Diamond$, $n$ from $k$ to $n$ in the respective propagation graphs.

We now attempt to derive $\Diamond \Diamond \Diamond \Diamond p \rightarrow \Diamond p$, and show that no sequence of rules applied backward can give a proof of the formula:

\begin{center}
\AxiomC{$\circ \{ \circ \{ \circ \{ \circ \{  \overline{p} \} \} \} \}, \Diamond p$}
\RightLabel{$\boxr \times 4$}
\UnaryInfC{$\Box \Box \Box \Box \overline{p}, \Diamond p$}
\RightLabel{$\disr$}
\UnaryInfC{$\Box \Box \Box \Box \overline{p} \vee \Diamond p$}
\dottedLine
\RightLabel{=}
\UnaryInfC{$\Diamond \Diamond \Diamond \Diamond p \rightarrow \Diamond p$}
\DisplayProof
\end{center}

None of the rules in $\DKT$ or in the restricted set of propagation rules are bottom-up applicable to the top sequent. However, since $\Diamond \Diamond \Diamond \Diamond A \rightarrow \Diamond A \in (P \cup I(P))^{*}$, if we allow the addition of propagation rules to correspond to axioms in $(P \cup I(P))^{*}$ rather than just $P$, then we have the following rule in our calculus (where there is a path of the form $c$, $\Diamond$, $n_{1}$, $\Diamond$, $n_{2}$, $\Diamond$, $n_{3}$, $\Diamond$, $p$ from $c$ to $p$):

\begin{center}
\AxiomC{$X[\Diamond A]_{c}[A]_{p}$}
\RightLabel{$\dprop$}
\UnaryInfC{$X[\Diamond A]_{c}[\emptyset]_{p}$}
\DisplayProof
\end{center}

This can be used to prove the formula $\Diamond \Diamond \Diamond \Diamond p \rightarrow \Diamond p$ by deriving the top sequent in the above derivation from the initial sequent $\circ \{ \circ \{ \circ \{ \circ \{  \overline{p}, p \} \} \} \}, \Diamond p$:

\begin{center}
\AxiomC{}
\RightLabel{$\id$}
\UnaryInfC{$\circ \{ \circ \{ \circ \{ \circ \{  \overline{p}, p \} \} \} \}, \Diamond p$}
\RightLabel{$\dprop$}
\UnaryInfC{$\circ \{ \circ \{ \circ \{ \circ \{  \overline{p} \} \} \} \}, \Diamond p$}
\DisplayProof
\end{center}

\end{example}

\begin{lemma}[\cite{GorPosTiu11}]\label{dktstrucadmiss} The following rules are admissible in $\DKT + \propdeep$:

\begin{center}

\begin{tabular}{c @{\hskip 1em} c @{\hskip 1em} c @{\hskip 1em} c}

\AxiomC{$X[Y]$}
\RightLabel{$\wk$}
\UnaryInfC{$X[Y, Z]$}
\DisplayProof

&

\AxiomC{$X[Y, Y]$}
\RightLabel{$\ctr$}
\UnaryInfC{$X[Y]$}
\DisplayProof

&

\AxiomC{$X, \circw Y \}$}
\RightLabel{$\rf$}
\UnaryInfC{$\circb X \}, Y $}
\DisplayProof

&

\AxiomC{$X, \circb Y \}$}
\RightLabel{$\rp$}
\UnaryInfC{$\circw X \}, Y $}
\DisplayProof

\end{tabular}

\end{center}

\end{lemma}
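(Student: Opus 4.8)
The plan is to treat the four rules one at a time, in each case arguing by induction on the height of the derivation of the premise with a case analysis on the last inference; weakening is dealt with first, since it is invoked in the arguments for the other three. For $\wk$ I would prove the stronger, height-preserving statement. The only rules whose applicability could conceivably be affected by inserting a structure $Z$ at some node are $(\Box)$ and $(\bBox)$, which introduce a \emph{fresh} nested child (one may always choose this child disjoint from everything coming from $Z$), and the propagation rule $\dprop$, which requires a path in the propagation graph of its premise. Since weakening only adds material, it only adds nodes and edges to the propagation graph, so every path — and in particular the path witnessing a $\dprop$ step — survives; hence in every case one applies the induction hypothesis to the premise(s) and re-applies the last rule unchanged.

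For $\rf$ and $\rp$ the crucial observation is Lemma~\ref{DisEquiv_Sequents_Identical_PG}: since $X,\circw Y\}$ and $\circb X\},Y$ are display equivalent, they have the \emph{same} propagation graph, and passing from one to the other leaves the multiset of formulae sitting at each node untouched — it merely renames which node is the root and flips the type and direction of a single edge. I would then push an application of $\rf$ (dually $\rp$) upward through the derivation, translating each inference step locally. An instance of $\id$ stays an instance of $\id$ because the node carrying $p,\overline{p}$ is untouched; $(\land)$ and $(\lor)$ steps are purely local to a node and transfer verbatim; $(\Box)$ and $(\bBox)$ steps still create a fresh child of the same node, which still carries the relevant boxed formula; the four diamond rules $(\diam_{1})$, $(\diam_{2})$, $(\bdiam_{1})$, $(\bdiam_{2})$ are permuted among one another, because once the relevant edge has been flipped, ``propagate $A$ forward along a $\circ$-edge'' and ``propagate $A$ backward along a $\bullet$-edge'' describe the very same move; and $\dprop$ transfers directly, since the propagation graph and the node contents are preserved. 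Composing these translations yields a derivation of $\circb X\},Y$ from one of $X,\circw Y\}$.

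The main obstacle is $\ctr$. Here I would first marshal the auxiliary facts: height-preserving admissibility of $\wk$ (just established), height-preserving invertibility of $(\land)$ and $(\lor)$ (by the usual induction), and the observation that the ``inverses'' of the remaining rules — $(\Box)$, $(\bBox)$, the four diamond rules, and $\dprop$ — are themselves instances of weakening and so are height-preserving admissible. With these in hand, the propositional cases follow the standard Gentzen-style pattern: when the last rule acts inside at most one of the two contracted copies, undo its effect in that copy via the appropriate inverse so the two copies coincide, apply the induction hypothesis, then re-apply the rule. The genuinely delicate point is structural contraction of two parallel nestings of equal polarity, $\circw W\},\circw W\}$ (or $\circb W\},\circb W\}$), in combination with the \emph{cumulative} diamond and propagation rules, which retain their principal formula in the conclusion and so cannot simply be undone without increasing the height. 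Handling this cleanly requires either a carefully chosen induction measure — e.g.\ lexicographic on the number of nesting layers surrounding the contracted structure and then on the derivation height — or a simultaneous auxiliary lemma permitting controlled merging of parallel nestings of the same polarity; arranging matters so that the measure strictly decreases in every principal case is where the real work lies.
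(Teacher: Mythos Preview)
The paper does not prove this lemma: it is imported from \cite{GorPosTiu11} (note the citation in the lemma heading), so there is no in-paper argument to compare against. Your sketch is broadly in line with how the result is established in that reference; indeed the present paper alludes to the display-rule case later in Section~\ref{Trans_Path_Ax_Ext_Subsection} (``see the proof of Lemma~4.3 in \cite{GorPosTiu11}''), and your re-rooting argument via Lemma~\ref{DisEquiv_Sequents_Identical_PG} captures the essential idea: the deep rules of $\DKT$ act inside an arbitrary context $X[\cdot]$ and the propagation rules depend only on the propagation graph, so changing which node is the root is invisible to every inference, with the four diamond rules permuting among themselves exactly as you say. Your treatment of weakening and your identification of contraction on parallel nestings as the delicate case are likewise on target, though that part remains a plan rather than a proof.
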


\begin{lemma}[\cite{GorPosTiu11}]\label{sktdktequiv} Let $P$ be a set of path axioms. Every derivation in $\SKT + \Pd$ of a sequent $X$ is transformable [effectively relatable] to a derivation in $\DKT + \propdeep$, and vice-versa.
\end{lemma}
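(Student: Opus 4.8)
The plan is to follow the strategy of~\cite{GorPosTiu11}: each direction is obtained by induction on the height of the given derivation, replacing every inference with a short block of inferences in the target calculus. Since $\SKT$ and $\DKT$ use the same sequent language, the two points that need care are (i) the deep inference of $\DKT$, traded for display rules using the display property of $\SKT$, and (ii) the path structural rules $\Pd$ versus the propagation rules $\propdeep$.

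For $\DKT + \propdeep \Rightarrow \SKT + \Pd$, I would process the deep derivation bottom-up. An application of a $\DKT$ rule acting inside a nesting is simulated by first displaying the relevant substructure to the top using $\rf,\rp$, then applying the matching top-level $\SKT$ rule, then displaying back: $\id,(\lor),(\land)$ become their top-level $\SKT$ analogues; $(\diam_{1}),(\bdiam_{1})$ are already top-level instances of $\diar,\bdiar$; $(\diam_{2}),(\bdiam_{2})$ additionally require displaying the relevant child to the top; and the principal-formula-copying rules $(\Box),(\bBox)$ are obtained from $\boxr,\bboxr$ followed by a $\ctr$ that contracts the copied box formula. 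The rules $\wk,\ctr,\rf,\rp$, which are only admissible in $\DKT+\propdeep$ by Lemma~\ref{dktstrucadmiss}, are primitive in $\SKT+\Pd$. The essential case is a propagation inference $\dprop$ for an axiom $\Pi A\to\ques A\in(P\cup I(P))^{*}$. Here I would first establish, by induction on the construction of $(P\cup I(P))^{*}$ from $P$, that for each such axiom the associated $\SKT$-style structural rule $\dispath$ is derivable in $\SKT+\Pd$ modulo display rules: for axioms of $P$ it is given; for the trivial axioms $\Diamond A\to\Diamond A$, $\blacklozenge A\to\blacklozenge A$ it is realised by $\diar,\bdiar$; the inverse $I(F)$ is derivable from $F$ by a sequence of residuation steps (the display rules encode precisely the $\blacklozenge/\Box$ and $\Diamond/\bBox$ residuations needed to reverse a chain of nestings); and a composition $F\triangleright^{i}G$ is derivable by splicing the rules for $F$ and $G$ with the help of the display rules. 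Using that display rules leave the propagation graph invariant (Lemma~\ref{DisEquiv_Sequents_Identical_PG}), so that the witnessing path is not destroyed, one then simulates $\dprop$ by combining display rules, the structural rule for $\Pi A\to\ques A$, and $\diar$/$\bdiar$ to erase the propagated occurrence of $A$; because $\dispath$ genuinely reshapes the tree (and is not invertible), this last step is not a mere ``display back'' but has to be interleaved with the sub-derivation supplied by the induction hypothesis for the premise $X[\ques A]_{i}[A]_{j}$, keeping the nesting at the level at which $\diar$/$\bdiar$ can act and only then expanding.

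For the converse $\SKT + \Pd \Rightarrow \DKT + \propdeep$, the logical rules $\id,\disr,\conr,\diar,\bdiar$ are top-level instances of $\DKT$ rules, $\boxr,\bboxr$ are simulated by an admissible $\wk$ introducing the box formula that $(\Box),(\bBox)$ then copy, and $\wk,\ctr,\rf,\rp$ are admissible in $\DKT+\propdeep$ by Lemma~\ref{dktstrucadmiss}. The only remaining inference is $\dispath$, sending $X,\star_{n+1}\{Y\}$ to $X,\star_{1}\{\cdots\star_{n}\{Y\}\cdots\}$ for a path axiom $\ques_{1}\cdots\ques_{n}A\to\ques_{n+1}A\in P$. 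I would show it admissible in $\DKT+\propdeep$ by surgery on the given derivation of $X,\star_{n+1}\{Y\}$ (which, by Lemma~\ref{dktstrucadmiss}, may be taken free of $\rf,\rp$): let $v$ be the node rooting the displayed copy of $Y$, and replace, uniformly throughout the derivation, the single $\ques_{n+1}$-edge between the root and $v$ by a chain of $n-1$ fresh empty nodes. Inferences that do not touch $v$ survive unchanged; at the inference that creates $v$ bottom-up — a $(\Box)$ or $(\bBox)$ application, or an axiom — one instead installs the whole chain, using an admissible $\wk$ if necessary; and any diamond rule $(\diam_{1}),(\diam_{2}),(\bdiam_{1}),(\bdiam_{2})$ or propagation inference that used the old edge now traverses the new $\ques_{1}\cdots\ques_{n}$-labelled path instead, which remains licensed because $\ques_{1}\cdots\ques_{n}A\to\ques_{n+1}A\in P$ and $(P\cup I(P))^{*}$ is closed under composition and inverses.

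The hardest part, in both directions, is precisely this matching between $\Pd$ and $\propdeep$. One must show that closing $P$ under inverses and compositions — which is what $\propdeep$ needs for completeness — produces exactly the structural rules derivable from $\Pd$ together with the display rules, and that a propagation inference (which reshapes nothing) and a path structural inference (which reshapes the tree) can be traded for one another only after the surrounding derivation has been re-threaded so that the nesting depth is the one demanded by the rule actually used. The subtlest book-keeping is the ``path straightening'' in the first direction: material hanging off the intermediate nodes of the witnessing path must be displaced and later restored, and if the path revisits a node one must reduce it to a simple path whose string still licenses a rule of $\propdeep$.
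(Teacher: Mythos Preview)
Your proposal is correct and follows essentially the same approach as the paper, which in turn defers entirely to~\cite{GorPosTiu11}: the paper's own ``proof'' is just the remark that the $\SKT+\Pd\Rightarrow\DKT+\propdeep$ direction goes by induction on height (embedding trivially into $\DKT+\Pd+\propdeep+\{\rf,\rp,\ctr,\wk\}$, then eliminating $\rf,\rp,\ctr,\wk$ via Lemma~\ref{dktstrucadmiss}, then eliminating $\Pd$ via~\cite[Lem.~6.14]{GorPosTiu11}), and that the converse is the fact that $\SKT+\Pd$ mimics propagation rules~\cite[Lem.~6.12]{GorPosTiu11}. Your edge-replacement surgery for the admissibility of $\dispath$ and your derivability argument for the $(P\cup I(P))^{*}$-structural rules in $\SKT+\Pd$ are exactly the content of those two cited lemmas.

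One minor point: your closing worry about reducing a witnessing path to a simple path is unnecessary, since the definition of path in a propagation graph explicitly permits repeated nodes, and the propagation rules are stated for arbitrary (not simple) paths; the closure of $(P\cup I(P))^{*}$ under composition already absorbs any detours.
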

We have inserted the term ``effectively relatable'' to emphasize that the proof in~\cite{GorPosTiu11} is via a local transformation (i.e. a step-by-step procedure defined at the level of the proof rules in the derivation). Indeed, the forward direction of the above lemma is shown by induction on the height of the given derivation (\cite[Lem.~6.13]{GorPosTiu11}), and implies that we can effectively transform a derivation $\der$ in $\DKT + \propdeep$ into a derivation $\der'$ of the same end sequent in $\SKT + \Pd$. The reverse direction follows from the fact that $\SKT + \Pd$ can mimic propagation rules (\cite[Lem.~6.12]{GorPosTiu11}). Also, observe that the above lemma implies cut-free completeness for each deep nested calculus $\DKT + \propdeep$.

\begin{theorem}[\cite{GorPosTiu11}] Let $P$ be a set of path axioms. $\Kt+\Pset \vdash A$ iff $A$ is cut-free derivable in $\DKT+\propdeep$.

\end{theorem}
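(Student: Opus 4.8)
The plan is to derive this as a corollary of the results already in hand for the shallow nested calculus, routing the argument through $\SKT + \Pd$. The key bridge is Lemma~\ref{sktdktequiv}, which --- because $P$ is a set of \emph{path} axioms, precisely the regime it covers --- relates derivations in $\SKT + \Pd$ and in $\DKT + \propdeep$ of the same nested sequent, in both directions. I will also use the convention that a formula $A$ is identified with the singleton nested sequent $A$; since $\SKT$ and $\DKT$ share the same nested-sequent language, this identification is uniform across the two calculi, and I should note at the outset that $\DKT$ as presented has no cut rule, so ``cut-free derivable in $\DKT + \propdeep$'' is synonymous with ``derivable in $\DKT + \propdeep$''.

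For the right-to-left (soundness) direction I would assume $A$ is derivable in $\DKT + \propdeep$, apply the ``vice-versa'' half of Lemma~\ref{sktdktequiv} to obtain a derivation of $A$ in $\SKT + \Pd$, and then --- observing that every path axiom is a general path axiom, so $P \subseteq \GP$ --- invoke Theorem~\ref{sktcompleteness} from right to left to conclude $\Kt + \Pset \vdash A$.

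For the left-to-right (completeness) direction I would assume $\Kt + \Pset \vdash A$ and use Theorem~\ref{sktcompleteness} from left to right to get a derivation of $A$ in $\SKT + \Pd$; this derivation is already cut-free because $\SKT$ contains no $\cut$ rule (and were one working with a variant of the calculus that did, Theorem~\ref{sktcutlim} would remove it). Applying the forward direction of Lemma~\ref{sktdktequiv} --- an effective, step-by-step transformation by induction on derivation height --- then yields a derivation of $A$ in $\DKT + \propdeep$, cut-free by construction. As a by-product, the same composition shows that if $\DKT$ were extended with a deep cut rule, cut elimination for $\DKT + \propdeep$ would follow from Theorem~\ref{sktcutlim} via Lemma~\ref{sktdktequiv}.

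I do not expect a genuine obstacle here: all the substantial work lives in the cited results (completeness and cut-admissibility of the shallow nested calculus, and the mutual simulation of $\SKT + \Pd$ and $\DKT + \propdeep$), and this theorem merely records their composition. The one point requiring care is that Lemma~\ref{sktdktequiv} is available only for path axioms, not general path axioms, which is exactly why the statement is restricted to $\Kt + \Pset$; one should also double-check that the singleton-nested-sequent reading of $A$ is literally the same object in both calculi, so that the two invocations of Lemma~\ref{sktdktequiv} and Theorem~\ref{sktcompleteness} chain together without mismatch.
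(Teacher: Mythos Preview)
Your proposal is correct and matches the paper's approach: the paper does not give an explicit proof (the theorem is cited from~\cite{GorPosTiu11}), but the sentence immediately preceding it --- ``observe that the above lemma implies cut-free completeness for each deep nested calculus $\DKT + \propdeep$'' --- indicates exactly the route you take, combining Lemma~\ref{sktdktequiv} with Theorem~\ref{sktcompleteness}.
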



\section{Nested Sequents and Labeled Polytrees}

    In this section we show how to translate (back and forth) a nested sequent into a labeled polytree 
(called a \emph{labeled UT} in \cite{CiaLyoRam18}). These graphical structures facilitate the translations between nested and labeled proofs. 

We write~$V=V_{1}\sqcup V_{2}$ to mean that~$V=V_{1}\cup V_{2}$ and $V_{1}\cap V_{2}=\emptyset$. The multiset union of multisets~$M_{1}$ and~$M_{2}$ is denoted~$M_{1}\uplus M_{2}$. A \textit{labeling function~$L$} is a map from a set~$V$ to a multiset of tense formulae. For labeling functions~$L_{1}$ and~$L_{2}$ on the sets~$V_{1}$ and~$V_{2}$ respectively, let~$L_{1}\cup L_{2}$ be the labeling function on~$V_{1}\cup V_{2}$ defined as follows:
\[
(L_{1}\cup L_{2})(x)=\begin{cases}
L_{1}(x)			&	x\in V_{1}, x\not\in V_{2}	\\
L_{2}(x)			&	x\not\in V_{1}, x\in V_{2}	\\
L_{1}(x) \uplus L_{2}(x)	&	x\in V_{1}, x\in V_{2}
\end{cases}
\]
A \textit{labeled graph}~$(V,E,L)$ is a directed graph~$(V,E)$ equipped with a labeling function~$L$ on~$V$.

\begin{definition}[Labeled Graph Isomorphism]
We say that two labeled graphs $G_{1}$ = $(V_{1},E_{1},L_{1})$ and~$G_{2}=(V_{2},E_{2},L_{2})$ are isomorphic (written $G_{1} \cong G_{2}$) if and only if there is a function~$f:V_{1}\to V_{2}$ such that:

\begin{flushleft}
(i) $f$ is bijective;\\
(ii)~for every $x,y\in V_{1}$, $(x,y)\in E_{1}$ iff $(fx,fy)\in E_{2}$;\\ (iii)~for every~$x\in V_{1}$, $L_{1}(x)=L_{2}(fx)$.
\end{flushleft}
\end{definition}

\begin{definition}[Labeled Polytree]\label{Labeled_Polytree_Def}
A \emph{labeled polytree} is a labeled graph whose underlying (i.e. undirected) graph is a tree, \textit{i.e.} there exists exactly one path of undirected edges between every pair of distinct nodes.
\end{definition}


\begin{example} The following two graphs represent labeled polytrees, where each node is decorated with a multiset $M_{i}$ of formulae:

\begin{center}
\begin{tabular}{c @{\hskip 6em} c}
\xymatrix{
			   \overset{y}{\boxed{M_{1}}}\ar[dr]    &		    &	  	\\
			 \overset{z}{\boxed{M_{2}}} \ar[r]      &	\overset{w}{\boxed{M_{3}}}  \ar[r]	    &	 \overset{x}{\boxed{M_{4}}}  	    
}

&

\xymatrix{
		 &  \overset{y}{\boxed{M_{2}}}\ar[dr]\ar[dl]	& 	&  \overset{u}{\boxed{M_{4}}}\ar[dl]		\\
		  \overset{v}{\boxed{M_{1}}} &	&  \overset{x}{\boxed{M_{3}}}	& 		 \\
}
\end{tabular}
\end{center}
\end{example}

Polytrees have been discussed in the graph theory literature and have also found applications in computer science~\cite{HueCam93,RebPer13}.

\subsection{Interpreting a Nested Sequent as a Labeled Polytree}\label{sec:interpret-nested-as-labeled}

Every nested sequent has a natural interpretation as a labeled tree with two types of directed edges: $\overset{\circ}{\rightarrow}$ and $\overset{\bcirc}{\rightarrow}$ \cite{Kas94,GorPosTiu11}. If we interpret every directed edge $\alpha\overset{\bcirc}{\rightarrow}\beta$ as the directed edge $\alpha\overset{\circ}{\leftarrow}\beta$, we can then interpret every nested sequent as a connected labeled graph with a \textit{single} type of directed edge (so we can drop the~$\circ$ symbol altogether). Moreover, it is easy to see that its underlying graph (\textit{i.e.} the undirected graph obtained by treating all edges as undirected) is a tree, and that every nested sequent can be interpreted naturally as a labeled polytree.


\begin{example}[Transforming a Nested Sequent into a Labeled Sequent] First, interpret the nested sequent $A, \circw B, \circb \varepsilon \} \}, \circb D, E, \circb F \}, \circw G \} \}$ as the labeled tree with two types of directed edges, below left. Next, convert the labeled tree to a labeled polytree (with directed edges of a single type) by reading each $\alpha\overset{\bcirc}{\rightarrow}\beta$ as $\alpha\leftarrow\beta$ (below right) and remove the $\circ$-typing from the remaining edges.

\begin{center}
\begin{tabular}{@{\hskip -1.5em} c@{\hspace{1em}}c}
\xymatrix{
        &    &    \overset{x}{\boxed{A}} \ar[dl]^{\circ} \ar[dr]^{\bcirc}    &    &    \\
        &    \overset{y}{\boxed{B}}\ar[d]^{\bcirc}    &    &\overset{w}{\boxed{D,E}}\ar[dl]^{\bcirc}\ar[d]^{\circ}    &     \\
        & \overset{z}{\boxed{\emptyset}}   &  \overset{u}{\boxed{F}}  &    \overset{v}{\boxed{G}} & 
}

&

\xymatrix{
        &    &    \overset{x}{\boxed{A}} \ar[dl]    &    &    \\
        &    \overset{y}{\boxed{B}}    &    & \overset{w}{\boxed{D,E}} \ar[ul]\ar[d]    &     \\
        & \overset{z}{\boxed{\emptyset}}\ar[u]   &  \overset{u}{\boxed{F}}\ar[ur]  &  \overset{v}{\boxed{G}}  &   
}
\end{tabular}
\end{center}

\end{example}
For concreteness let us formally define the map~$\dUTG$ from a nested sequent to a labeled polytree. 

\begin{definition}[The Translation $\dUTG$] We define the map~$\dUTG_{x}(X)$ recursively on the structure of the input nested sequent~$X$ as follows:\footnote{Thank you to the anonymous reviewer who suggested this definition.}
\begin{eqnarray*}
\dUTG_{x}(\varepsilon) &:=& (\emptyset, \emptyset, \emptyset)\\
\dUTG_{x}(A) &:=& (\{x\},\emptyset,\{(x,A)\} )\\
\dUTG_{x}(X_1,X_2) &:=&  (V_1 \cup V_2, E_1 \cup E_2, L_1 \cup L_2) \text{ where } \dUTG_{x}(X_{i}) = (V_i,E_i,L_i)\\
\dUTG_{x}(\circ\{X\}) &:=& (V \cup \{x\}, E \cup \{(x,y)\}, L) \text{ where } \dUTG_{y}(X) = (V,E,L) \text{ and $y$ is fresh. }\\
\dUTG_{x}(\bullet\{X\}) &:=& (V \cup \{x\}, E \cup \{(y,x)\}, L) \text{ where } \dUTG_{y}(X) = (V,E,L) \text{ and $y$ is fresh. }
\end{eqnarray*}
\end{definition}


\begin{example} The labeled polytree $\dUTG_{x}(X)$ := $(V, E, L)$ corresponding to the nested sequent $X := A, \circw B, \circb C \} \}, \circb D \}$ is shown below:

\begin{center}
\begin{tabular}{c} 
\xymatrix{
\overset{y}{\boxed{C}}\ar[rr]  &   &   \overset{z}{\boxed{B}} & &  \overset{x}{\boxed{A}}\ar[ll]  &  & \overset{w}{\boxed{D}}\ar[ll]    
}
\end{tabular}
\end{center}
\end{example}

\begin{definition}[Labeled Polytree Merge and Subgraph] Let~$G \oplus_{x} H$ denote the labeled polytree obtained as the graph union of labeled polytrees~$G$ and~$H$ that have a single vertex~$x$ in common, such that the label of~$x$ in~$G \oplus_{x} H$ (i.e. the multiset of tense formulae that~$x$ maps to under the labeling function of~$G \oplus_{x} H$) is the union of the labels of the vertex~$x$ in~$G$ and in~$H$. We refer to $G \oplus_{x} H$ as the \emph{merge} of two polytrees. 

We say that a $H$ is a \emph{labeled polytree subgraph} of a labeled polytree $G$ if and only if there exists a labeled polytree $H'$ and a vertex $x$ in $G$ such that $G = H' \oplus_{x} H$. We use $G \lcut H \rcut_{x}$ both as a name for the labeled polytree $G$ and to denote that $H$ is a labeled polytree subgraph of $G$.
\end{definition}

\begin{example} The labeled polytree $G \lcut H \rcut_{x} = H' \oplus_{x} H$, where $x$ is the common vertex between $H'$ and~$H$, is shown below left. The top labeled polytree below right is~$H'$ and the other is $H$.
\begin{center}
\begin{small}
\begin{tabular}{c@{\hspace{2cm}}c}
\xymatrix{
							& \overset{z}{\boxed{M_{3}}} \ar[dl]	&											\\
\overset{y}{\boxed{M_{2}}}		&								& \overset{x}{\boxed{M\uplus N}} \ar[ul]\ar[d]			 \\
\overset{w}{\boxed{M_{1}}}\ar[u]   &  \overset{u}{\boxed{N_{1}}}\ar[ur]  		&  \overset{v}{\boxed{N_{2}}}
}

&
\xymatrix{
							& \overset{z}{\boxed{M_{3}}} \ar[dl]	&											\\
\overset{y}{\boxed{M_{2}}}		&								& \overset{x}{\boxed{M}} \ar[ul]				\\
\overset{w}{\boxed{M_{1}}}\ar[u]  &   								& \overset{x}{\boxed{N}}\ar[d]   						\\
							& \overset{u}{\boxed{N_{1}}}\ar[ur]		& \overset{v}{\boxed{N_{2}}}			
}
\end{tabular}
\end{small}
\end{center}
\end{example}

For any labeled polytree~$(V,E,L)$ with $x \in V$ there exist partitions $V=V_{1}\sqcup \{x\}\sqcup V_{2}$, $E=E_{1}\sqcup E_{2}$, and $L=L_{1}\cup L_{2}$ such that~$G \lcut H \rcut_{x} = H' \oplus_{x} H =(V,E,L)$ with~$H' = (V_{1}\sqcup\{x\},E_{1},L_{1})$ and~$H = (V_{2}\sqcup\{x\},E_{2},L_{2})$. Clearly, $L(x)=L_{1}(x)\uplus L_{2}(x)$, and $H'$ and~$H$ are labeled polytrees. In other words, we view $H$ in $G \lcut H \rcut_{x} = H' \oplus_{x} H$ as the redex and $H'$ as the context.

Since nested sequents may be interpreted as trees with two types of edges ($\circ$-edges and $\bcirc$-edges), they possess a root node, whereas labeled polytrees do not possess a root in general. Nevertheless, the underlying tree structure of a labeled polytree permits us to view any node as the root, and the lemma below ensures that we obtain isomorphic labeled polytrees via the display rules regardless of the node where we begin the translation.

Note that the label $x$ in~$\dUTG_{x}$ simply denotes the name of the starting vertex of the translation so~$\dUTG_{x}(X)\cong \dUTG_{y}(X)$ for all labels~$x$ and $y$, and all nested sequents~$X$.





\begin{lemma}\label{isoresid}
For every label~$x$, and any nested sequents $X$ and~$Y$: $\dUTG_{x}(X,\circ \{Y\})\cong \dUTG_{x}(\bcirc \{X\},Y)$.
\end{lemma}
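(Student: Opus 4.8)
The plan is to unfold both sides of the claimed isomorphism using the recursive definition of $\dUTG$ and check that the resulting labeled graphs are literally the same up to a renaming of the single fresh vertex introduced by the nesting operator. Concretely, pick a fresh label $y'$ and compute $\dUTG_{y'}(Y) = (V_Y, E_Y, L_Y)$, and compute $\dUTG_{x}(X) = (V_X, E_X, L_X)$ (here $x \in V_X$ by the definition of $\dUTG_x$ on any nonempty sequent, and if $X = \varepsilon$ one handles the degenerate case separately — though note $\dUTG_x(\varepsilon)$ does not contain $x$, so a small comment is needed for the empty-structure boundary cases, using that $\varepsilon$ is an identity for comma). Then unfolding the left side: $\dUTG_x(X, \circ\{Y\}) = \dUTG_x(X) \cup \dUTG_x(\circ\{Y\})$ where $\dUTG_x(\circ\{Y\}) = (V_Y \cup \{x\}, E_Y \cup \{(x, y')\}, L_Y)$. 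Unfolding the right side: $\dUTG_x(\bullet\{X\}, Y) = \dUTG_x(\bullet\{X\}) \cup \dUTG_x(Y)$; but here we must be careful, since the recursive clauses for comma share the \emph{same} starting label $x$ on both components. So $\dUTG_x(\bullet\{X\})$ starts its translation of $X$ at a \emph{fresh} vertex, say $x''$, giving $(V'_X \cup \{x\}, E'_X \cup \{(x'', x)\}, L'_X)$ where $\dUTG_{x''}(X) = (V'_X, E'_X, L'_X)$, and $\dUTG_x(Y) = (V'_Y, E'_Y, L'_Y)$ with $\dUTG_x$ starting $Y$ at $x$ itself.

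The key observation I would invoke is the remark immediately preceding the lemma: $\dUTG_x(Z) \cong \dUTG_y(Z)$ for all labels $x, y$ and all nested sequents $Z$ (this is the relabeling invariance that is already asserted in the text). Using this, both $\dUTG_x(X,\circ\{Y\})$ and $\dUTG_x(\bullet\{X\},Y)$ are, up to isomorphism, obtained by taking the polytree for $X$, taking the polytree for $Y$, gluing them together by a single edge between (a copy of) the $x$-vertex of $X$'s translation and (a copy of) the $y'$-vertex of $Y$'s translation, and keeping all formula labels untouched. The only difference between the two expressions is: (i) which of the two shared vertices is named $x$ in the final graph, and (ii) the orientation of that single connecting edge — $(x, y')$ on the left versus $(x'', x)$ (i.e. pointing the other way, after matching up the vertices) on the right. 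For labeled polytrees as defined here (Definition~\ref{Labeled_Polytree_Def}), the orientation of edges is part of the structure, but isomorphism of labeled graphs (Definition of Labeled Graph Isomorphism) only requires $(u,v) \in E_1 \iff (fu, fv) \in E_2$ — so an edge orientation difference is \emph{not} absorbed by isomorphism. Hence the genuine content of the lemma is not merely "both sides build the same glued tree" but that the two different edge orientations nonetheless yield isomorphic \emph{labeled polytrees}, which is exactly the point: the underlying undirected tree is identical, and the formula labels match, and isomorphism as defined is sensitive to direction — so I will need the explicit bijection.

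Therefore the real proof is to exhibit the isomorphism $f$ explicitly. Write $\dUTG_x(X,\circ\{Y\}) = (V, E, L)$ and $\dUTG_x(\bullet\{X\},Y) = (V^\sharp, E^\sharp, L^\sharp)$ after choosing the fresh labels consistently; define $f : V \to V^\sharp$ to be the identity on all vertices coming from the $Y$-part and from the $X$-part except that it swaps the role of the two "hinge" vertices (the vertex that was the $X$-root and the vertex that was the $Y$-root), matching $x$ on the left to the $Y$-hinge on the right and the $X$-hinge on the left to $x$ on the right. Then verify the three isomorphism conditions: bijectivity is clear from the construction; the label condition holds because in both graphs the hinge vertices carry exactly the label-multisets of the $X$-root and $Y$-root respectively (no merging of labels occurs since $X$ and $Y$ use disjoint fresh vertex sets), and all other vertices keep their labels; and the edge condition holds because $f$ restricts to the identity-type correspondence on the $X$-subtree edges and the $Y$-subtree edges, and maps the single connecting edge $(x, y')$ on the left to the edge $(x'', x)$ on the right — one checks $(x, y')$ maps under $f$ to the pair (image of $x$, image of $y'$) $=$ (the $Y$-hinge, the $X$-hinge after renaming), which is precisely the connecting edge of the right-hand graph with its reversed orientation, as required. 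The main obstacle — and it is a mild one — is purely bookkeeping: keeping the fresh-vertex names straight through the two unfoldings, handling the boundary cases where $X$ or $Y$ is $\varepsilon$ (so that no hinge vertex is produced on one side and the connecting edge instead attaches to $x$ directly), and being careful that the comma clause of $\dUTG_x$ reuses the label $x$ while the nesting clauses generate fresh labels, so that the "$x$-vertex" really is shared correctly. A clean alternative that avoids most of this bookkeeping is to prove it by a short structural induction on $X$ (the cases $X = \varepsilon$, $X = A$, $X = X_1, X_2$, $X = \circ\{X'\}$, $X = \bullet\{X'\}$), but the direct explicit-isomorphism argument above is more transparent and is what I would present.
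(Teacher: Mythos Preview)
Your overall strategy---unfold both sides using the recursive definition of~$\dUTG$ and build the isomorphism out of the relabeling invariance $\dUTG_{x}(Z)\cong\dUTG_{y}(Z)$---is exactly the paper's approach. But you have introduced a genuine error in the middle of the argument, and it propagates into a wrong isomorphism.

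The error is your claim that the connecting edges on the two sides have opposite orientation. They do not. On the left, $\dUTG_{x}(\circ\{Y\})$ produces the edge $(x,y')$ with $x$ the $X$-root and $y'$ the $Y$-root: direction $X$-root $\to$ $Y$-root. On the right, $\dUTG_{x}(\bullet\{X\})$ produces the edge $(x'',x)$ with $x''$ the $X$-root and $x$ now the $Y$-root: again direction $X$-root $\to$ $Y$-root. The orientation is the \emph{same}; only the vertex names have shifted. Because you believed the orientations differed, you defined $f$ to ``swap the hinges,'' sending the left's $X$-root $x$ to the right's $Y$-root $x$. That map fails both isomorphism conditions: it sends the edge $(x,y')$ to $(x,x'')$, which is the reverse of the actual edge $(x'',x)$ on the right, and it sends a vertex carrying the $X$-root's formula multiset to a vertex carrying the $Y$-root's formula multiset, violating label preservation.

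The fix is to do the obvious thing: let $f$ restrict to the relabeling isomorphism $\dUTG_{x}(X)\cong\dUTG_{x''}(X)$ on the $X$-part (so $x\mapsto x''$) and to $\dUTG_{y'}(Y)\cong\dUTG_{x}(Y)$ on the $Y$-part (so $y'\mapsto x$). Then $(x,y')\mapsto(x'',x)$ exactly, and labels are preserved because each hinge is matched to the hinge carrying the same sequent's root labels. This is precisely the paper's three-line proof. Once you drop the spurious orientation worry, there is nothing left to do.
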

\begin{proof}
Observe that~$\dUTG_{x}(X,\circ \{Y\})$ is isomorphic to the labeled polytree obtained from the disjoint union of~$\dUTG_{x}(X)$ and~$\dUTG_{y}(Y)$ by the addition of an edge~$(x,y)$.
Meanwhile~$\dUTG_{x}(\bcirc \{X\},Y)$ is isomorphic to the labeled polytree obtained from the disjoint union of~$\dUTG_{y}(X)$ and~$\dUTG_{x}(Y)$ by the addition of an edge~$(y,x)$.
The result follows because~$\dUTG_{x}(X)\cong \dUTG_{y}(X)$ and~$\dUTG_{y}(Y)\cong \dUTG_{x}(Y)$. 
\end{proof}

%
%

\begin{corollary}\label{isoresidcor} For all labels $x$ and $y$, and nested sequents $X$ and $Y$, if $X$ and $Y$ are display equivalent, then $\dUTG_{x}(X)\cong \dUTG_{y}(Y)$.
\end{corollary}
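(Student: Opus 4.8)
The plan is to prove Corollary~\ref{isoresidcor} directly from Lemma~\ref{isoresid} together with the remark (made just before Lemma~\ref{isoresid}) that $\dUTG_{x}(X)\cong\dUTG_{y}(X)$ for all labels $x,y$ and all nested sequents $X$. Since $X$ and $Y$ are display equivalent, there is a finite sequence of display-rule applications transforming $X$ into $Y$; I would argue by induction on the minimum length $n$ of such a sequence, exactly mirroring the structure of the proof of Lemma~\ref{DisEquiv_Sequents_Identical_PG}.

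For the base case $n=0$ we have $X=Y$, so $\dUTG_{x}(X)\cong\dUTG_{x}(Y)\cong\dUTG_{y}(Y)$ by the label-independence remark. For $n=1$, one single display inference ($\rf$ or $\rp$) relates $X$ and $Y$. Up to the commutativity/associativity of comma and the symmetry of the two display rules, this means (w.l.o.g.) that, after displaying, we may write $X$ in the form $X',\circ\{Y'\}$ and $Y$ in the form $\bcirc\{X'\},Y'$ for suitable nested sequents $X',Y'$ — here I should be slightly careful that the display rules as stated operate at the top level and that $\dUTG$ is insensitive to the comma reorderings, which follows from the definition of $\dUTG_{x}$ on $X_{1},X_{2}$ (union of vertex sets, edge sets, and labeling functions is manifestly commutative and associative). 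Lemma~\ref{isoresid} then gives $\dUTG_{x}(X)=\dUTG_{x}(X',\circ\{Y'\})\cong\dUTG_{x}(\bcirc\{X'\},Y')=\dUTG_{x}(Y)$, and composing with $\dUTG_{x}(Y)\cong\dUTG_{y}(Y)$ yields the claim.

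For the inductive step, suppose $X$ and $Y$ are related by a sequence of $n+1$ display inferences. Factor it as: one display inference from $X$ to some intermediate nested sequent $Z$, followed by $n$ display inferences from $Z$ to $Y$. By the $n=1$ case (with the same starting label $x$), $\dUTG_{x}(X)\cong\dUTG_{x}(Z)$; by the induction hypothesis applied to $Z$ and $Y$, $\dUTG_{x}(Z)\cong\dUTG_{y}(Y)$. Transitivity of $\cong$ (which is immediate: the composition of the two bijections is a labeled-graph isomorphism) completes the argument.

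I do not expect any genuine obstacle here — the corollary is essentially a bookkeeping wrapper around Lemma~\ref{isoresid}. The only point requiring mild care is making the base/single-step case match the precise formulations of $\rf$ and $\rp$: a display inference in general moves a substructure that need not be literally at the top of the comma list, so one must invoke the implicit commutativity and associativity of comma (and the identity role of $\varepsilon$) to bring the sequent into the shape $X',\circ\{Y'\}$ or $X',\bcirc\{Y'\}$ to which Lemma~\ref{isoresid} (or its evident $\bcirc/\rp$ variant, obtained by reading the isomorphism of Lemma~\ref{isoresid} in the other direction) applies. Once that normalization is noted, the induction is routine.
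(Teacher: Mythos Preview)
Your proposal is correct and takes essentially the same approach as the paper: the paper's proof is the one-liner ``By repeated application of Lemma~\ref{isoresid},'' which is precisely what you have unpacked into an explicit induction on the length of the display-rule sequence. Your additional care about comma commutativity/associativity and handling both $\rf$ and $\rp$ is appropriate and does not diverge from the paper's intent.
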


\begin{proof} By repeated application of Lemma \ref{isoresid}.
\end{proof}

Henceforth we write~$\dUTG$ instead of~$\dUTG_{x}$ to reduce clutter when the name of the starting vertex is not important.

\subsection{Interpreting a Labeled Polytree as a Nested Sequent}
\label{sec:interpret-lab-polytree-as-nested}

Given a labeled polytree $G = ( V, E, L )$ we first pick a vertex $x \in V$ to compute the nested sequent $\UTGd_{x}(G)$. If $E = \emptyset$, then $\UTGd_{x}(G) = L(x)$ is the desired nested sequent. Otherwise, for all $n$~forward looking edges $(x,y_{i}) \in E$ (with $1 \leq i \leq n$) where $G[H_{i}]_{y_{i}}$, and for all $k$ backward looking edges $(z_{j},x) \in E$ (with $1 \leq j \leq k$) where $G[H_{j}']_{z_{j}}$, we define the image of~$\UTGd_{x}(G)$ as the nested sequent
\[
L(x),\circ\{\UTGd_{y_{1}}(H_{1})\},\ldots, \circ\{\UTGd_{y_{n}}(H_{n})\}, \bcirc\{\UTGd_{z_{1}}(H_{1}')\},\ldots, \bcirc\{\UTGd_{z_{k}}(H_{k}')\}
\]
Since the labeled polytrees $H_{1},\ldots,H_{n},H_{1}',\ldots,H_{k}'$ are smaller than~$G$, the recursive definition of~$\UTGd$ is well-founded.

\begin{lemma}\label{DE_For_Diff_Node}
For any labeled polytree $G = ( V, E, L )$, and for any vertices $x, y \in V$, the nested sequent $\UTGd_{x}(G)$ is derivable from $\UTGd_{y}(G)$ via the display rules $\rf$ and $\rp$.
\end{lemma}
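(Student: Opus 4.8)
The plan is to argue by induction on the size of the labeled polytree $G$ — measured, say, by the number of edges $|E|$ — showing that for any two vertices $x,y \in V$ the nested sequent $\UTGd_x(G)$ is derivable from $\UTGd_y(G)$ using only the display rules $\rf$ and $\rp$. Since the display rules are invertible (each of $\rf,\rp$ has an inverse that is again among the display rules, as witnessed by the shape of the two rules together), it suffices to handle the case where $x$ and $y$ are adjacent in the underlying tree; the general case then follows by chaining such single-step relocations along the unique undirected path in $G$ from $y$ to $x$, using transitivity of ``mutually derivable via display rules.''

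For the adjacent case, suppose first that the edge between $x$ and $y$ is forward-looking, i.e. $(x,y) \in E$. Write $G = G'\lcut H \rcut_{x}$ where $H$ is the labeled polytree subgraph hanging off the edge $(x,y)$ (so $H$ has $y$ as its distinguished vertex and $G'$ is the remaining context sharing only $x$). Unfolding the definition of $\UTGd$ at $x$, the sequent $\UTGd_x(G)$ has the form $Z, \circ\{\UTGd_y(H)\}$ for an appropriate context structure $Z$ (the part of $\UTGd_x(G)$ coming from $L(x)$ and all the other neighbours of $x$). Applying $\rf$ to this yields $\bcirc\{Z\}, \UTGd_y(H)$; and by reorganizing via comma-commutativity/associativity together with the recursive definition of $\UTGd$, this is precisely $\UTGd_y(G)$, because from the viewpoint of $y$ the whole of $Z$ sits behind a single backward-looking edge $(x,y)$. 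The case where the edge is backward-looking, $(y,x)\in E$, is symmetric, using $\rp$ in place of $\rf$. In both cases one also needs the base case $|E| = 0$, which is immediate since then $V = \{x\} = \{y\}$ forces $x = y$ and $\UTGd_x(G) = L(x) = \UTGd_y(G)$.

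The main obstacle I expect is purely bookkeeping: matching up the recursive expansions of $\UTGd_x(G)$ and $\UTGd_y(G)$ at a shared edge and seeing that the single display step $\rf$ (or $\rp$) exactly converts one into the other, modulo the implicit associativity and commutativity of comma. The subtlety is that $\UTGd_x(G)$ lists \emph{all} of $x$'s neighbours as nested substructures at the top level, and after the display step the entire ``rest of $x$'' must be re-packaged as a single nested block sitting at $y$; making this precise is essentially the observation, already used in Lemma~\ref{isoresid}, that $\dUTG_x(X,\circ\{Y\})\cong\dUTG_x(\bcirc\{X\},Y)$, now run in the direction from polytrees to nested sequents. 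It is convenient to phrase the inductive step using the merge notation $G = G' \oplus_x H$: display rules act on the outermost structural connective, so one first displays the block $\UTGd_y(H)$ to the top level, leaving $\bcirc\{\,\cdot\,\}$ wrapped around everything else, which is exactly $\UTGd_y(G')$ viewed with $y$ now carrying a backward edge to $x$. Once this correspondence is nailed down for the single-step adjacent case, the induction and the chaining along the tree path close the argument.
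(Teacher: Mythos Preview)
Your proposal is correct and follows essentially the same approach as the paper: establish the adjacent-vertex case via a single display rule, then chain along the unique undirected path from $y$ to $x$ (the paper phrases this explicitly as induction on $\mathit{dist}(x,y)$). One small framing issue: you announce induction on $|E|$, but you never actually use that quantity---the argument you give is really induction on path length, as in the paper, so you should state it that way.
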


\begin{proof}
We prove the result by induction on the length of the (unique) path~$dist(x,y)$ between $x$ and $y$. When $dist(x,y)=0$ we have $x=y$ and the claim holds.

\textbf{Base case.} Suppose that $dist(x,y) = 1$. There are two cases to consider: either there is a forward edge from $x$ to $y$, or there is a backward edge from $x$ to $y$. Without loss of generality, we consider only the first case. It follows that if there is a forward edge connecting $x$ to $y$, then since $\UTGd_{x}(G)$ is of the form $X, \circ \{ Y \}$, then $\UTGd_{y}(G) = \bullet \{ X  \}, Y$. It is easy to see that both sequents are display equivalent.

\textbf{Inductive step.} Suppose that $dist(x,y) = n+1$. Let $z$ represent the node one edge away from $x$ along the $n+1$ path to $y$. By the base case, $\UTGd_{x}(G)$ and $\UTGd_{z}(G)$ are display equivalent, and since the distance from $z$ to $y$ is $n$, we have that $\UTGd_{z}(G)$ is also display equivalent to $\UTGd_{y}(G)$ by the induction hypothesis. Hence, $\UTGd_{x}(G)$ is display equivalent to $\UTGd_{y}(G)$.
\end{proof}

When translating a labeled polytree we must choose a vertex as the starting point of our translation. The above lemma states that all nested sequents obtained from choosing a different vertex to translate from are mutually derivable from one another, i.e. they are derivable from each other by use of the display rules $\rp$ and $\rf$ only (hence, they are display equivalent). Due to this fact, we will omit the subscript when contextually permissible and simply write $\UTGd$ as the translation function. 

To clarify the translation procedure, we provide an example below of the various nested sequents obtained from translating at a different initial vertex.

\begin{example} Suppose we are given the labeled polytree $G = ( V, E, L )$ where $V = \{x,y,z\}$, $E = \{(x,y), (z,x)\}$, $L(x) = \{ A \}$, $L(y) = \{B, C\}$, and $L(z) = \{D\}$. A pictorial representation of the labeled polytree $G$ is given on the left with the corresponding nested sequent translations on the right:


\begin{center}
\begin{tabular}{c c}
\xymatrix{
\overset{y}{\boxed{B,C}} &	\overset{x}{\boxed{A}}\ar[l] &   \overset{z}{\boxed{D}}\ar[l] 
}

&

\begin{tabular}{c}
$\UTGd_{x}(G) = A, \circ \{ B, C \}, \bullet \{ D \}$  \\
$\UTGd_{y}(G) = B, C, \circb A, \circb D \} \}$ \\
$\UTGd_{z}(G) = D, \circw A, \circw B, C \} \}$
\end{tabular}

\end{tabular}
\end{center}

\end{example}

The following lemma ensures that the pieces $X$ and $Y$ of the nested sequent $\UTGd_{x}(G \lcut H \rcut_{x})  = \UTGd_{x}(H' \oplus_{x} H) = X, Y$ and the pieces $H$ and $H'$ of the labeled polytree $\dUTG_{x}(X, Y) = G \lcut H \rcut_{x} = H' \oplus_{x} H$ correctly map to each other under our translation functions.

\begin{lemma}\label{lem-contexts} (i) 
For every~$X$ and~$Y$, $\dUTG_{x}(X,Y)$ is the labeled polytree~$G \lcut H \rcut_{x} = H' \oplus_{x} H$ where $H'$ is the labeled polytree~$\dUTG_{x}(X)$ and $H$ is the labeled polytree~$\dUTG_{x}(Y)$.

(ii) For every labeled polytree~$G[H]_{x} = H' \oplus_{x} H$, $\UTGd_{x}(G \lcut H \rcut_{x})$ is a nested sequent of the form~$X,Y$ where~$X = \UTGd_{x}(H')$ and~$Y = \UTGd_{x}(H)$.
\end{lemma}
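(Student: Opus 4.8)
The plan is to prove both parts by directly unwinding the recursive definitions of $\dUTG_{x}$ and $\UTGd_{x}$, tracking the vertex names introduced by the fresh-variable convention and the partition of the edges incident to the distinguished vertex $x$. No induction on derivations is needed; part (i) could alternatively be phrased as a structural induction on $X$, but the comma clause of the definition already does all the work.

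For part (i), I would read off the comma clause of $\dUTG_{x}$: $\dUTG_{x}(X,Y)=(V_{1}\cup V_{2},E_{1}\cup E_{2},L_{1}\cup L_{2})$ where $\dUTG_{x}(X)=(V_{1},E_{1},L_{1})=:H'$ and $\dUTG_{x}(Y)=(V_{2},E_{2},L_{2})=:H$. The key observation is that this union is exactly the merge $H'\oplus_{x}H$: under the standing freshness convention the two recursive computations use disjoint pools of new vertices, so $V_{1}\cap V_{2}=\{x\}$, and the defining equation $(L_{1}\cup L_{2})(x)=L_{1}(x)\uplus L_{2}(x)$ is precisely the label bookkeeping demanded by $\oplus_{x}$ (the boundary cases where $X$ or $Y$ is $\varepsilon$ being immediate from $\varepsilon$ acting as the unit for comma). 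That $H'$, $H$, and hence $H'\oplus_{x}H$ are genuine labeled polytrees I would record as a one-line auxiliary induction: $\dUTG_{x}(Z)$ is a labeled polytree for every nested sequent $Z$, and gluing two polytrees along a single shared vertex again yields a polytree.

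For part (ii), I would write $G\lcut H\rcut_{x}=H'\oplus_{x}H$ with $H'=(V_{1},E_{1},L_{1})$ and $H=(V_{2},E_{2},L_{2})$, so that the edge set of $G$ splits as $E_{1}\sqcup E_{2}$, the vertex sets meet only in $x$, and $L_{G}(x)=L_{1}(x)\uplus L_{2}(x)$. Unfolding the definition of $\UTGd_{x}$ at $x$: the forward edges $(x,y_{i})$ and backward edges $(z_{j},x)$ incident to $x$ in $G$ are exactly those incident to $x$ in $H'$ together with those incident to $x$ in $H$, and for each such edge the labeled polytree "hanging off" it in $G$ coincides with the one hanging off it inside whichever of $H',H$ contains the edge. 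This last point is the only place the polytree property is genuinely used: deleting $x$ disconnects $H'$ from $H$ (their disjoint union being $G\setminus\{x\}$), so the component on the far side of the edge stays entirely on one side. Consequently $\UTGd_{x}(G)$ is, up to commutativity and associativity of comma, the juxtaposition of $L_{1}(x)$ with the $\circ/\bullet$-nested translations of the $H'$-side subtrees, followed by $L_{2}(x)$ with the $\circ/\bullet$-nested translations of the $H$-side subtrees; but the first block is by definition $\UTGd_{x}(H')$ and the second is $\UTGd_{x}(H)$ (the edgeless case $E=\emptyset$ collapses to $\UTGd_{x}(G)=L_{G}(x)=L_{1}(x)\uplus L_{2}(x)=\UTGd_{x}(H'),\UTGd_{x}(H)$). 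Taking $X:=\UTGd_{x}(H')$ and $Y:=\UTGd_{x}(H)$ completes the argument.

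The main obstacle is hygienic rather than conceptual: one must be scrupulous about the freshness convention so that the two recursive calls in the comma clause of $\dUTG_{x}$ really share only the vertex $x$, and must dispatch the empty-sequent and edgeless boundary cases cleanly. The single genuinely structural ingredient is the "the far component stays on one side of $x$" fact used in part (ii), which follows immediately from Definition~\ref{Labeled_Polytree_Def} (uniqueness of the undirected path between two vertices).
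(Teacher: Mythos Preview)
Your proposal is correct and follows essentially the same approach as the paper, which proves the lemma simply ``by construction of~$\dUTG$ and~$\UTGd$.'' You have spelled out in detail exactly what that one-line appeal to the definitions amounts to, including the freshness bookkeeping and the boundary cases, so your argument is a faithful (and more explicit) rendering of the paper's intended proof.
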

\begin{proof}
By construction of~$\dUTG$ and~$\UTGd$.
\end{proof}

Last, we note that due to the correspondence between nested sequents and labeled polytrees, the latter are easily translated to formulas in $\mathcal{L}_{\Kt}$ via the function $\UTGd$ and the function $\mathcal{I}$ from Section~\ref{Shallow_Nested_Intro_Subsection}.

\section{From Shallow Nested to Labeled Calculi}

We answer the following question: given a derivation $\der$ of~$A$ in~$\SKT+ \GPd$, is there a derivation $\der'$ of~$x:A$ in $\LKt+ \GPl$ that is \textit{effectively relatable to~$\der$}? 
 As a consequence, the structure of the output derivation is related to the structure of the input.
The constraint that the new derivation be \textit{effectively relatable} is crucial, for otherwise one could trivially relate~$\der'$ with the derivation~$\der$ as obtained from the following equivalences:
\[
\text{$\exists \der(\vdash_{\SKT+ \GPd}^{\der}A)$ \ \ \ iff \ \ \ $\Kt+\GP \vdash A$ \ \ \ iff \ \ \ $\exists \der'(\vdash^{\der'}_{\LKt+ \GPl}x:A)$}
\]
where the notation $\exists \der(\vdash_{\SKT+ \GPd}^{\der}A)$ and $\exists \der'(\vdash^{\der'}_{\LKt+ \GPl}x:A)$ is taken to mean that there exists a derivation $\der$ and a derivation $\der'$ such that $A$ and $x : A$ are derivable in $\SKT+ \GPd$ and $\LKt+ \GPl$, respectively. 

The point is that the derivations in~$\SKT+ \GPd$ and $\LKt+ \GPl$ obtained solely from the above equivalences would \textit{not} be defined via local transformations. Indeed, due to the existential operators, the structure of the two derivations would not be related in any meaningful way.

\subsection{Transforming a Labeled Graph~$G = (V,E,L)$ into a Labeled Sequent $\R, \Gamma$} Define~$\mathcal{R}=\{ Rxy \ | \ (x,y)\in E\}$ and
\[
\Gamma=\biguplus_{x\in V} x:L(x)
\]
where $x : L(x)$ represents the multiset $L(x)$ with each formula prepended with a label $x$.
\begin{example} The labeled graph $G = ( V, E, L )$ where $V = \{x, y, z \}$, $E = \{(x,y), (z,x) \}$, $L(x) = \{ A \}$, $L(y) = \{ B \}$, and $L(z) = \{ C \}$ corresponds to the labeled sequent $Rxy, Rzx, x:A, y:B, z:C$.

\end{example}

\subsection{Transforming a Labeled Sequent~$\mathcal{R},\Gamma$ into a Labeled Graph~$(V,E,L)$} Let~$V$ be the set of all labels occurring in $\R, \Gamma$. Define
\begin{align*}
E=\{(x,y) \ | \ Rxy\in\mathcal{R}\}	&&	L(x)=\{A \ | \ x:A \in \Gamma\}
\end{align*}
\begin{example} The labeled sequent $Rxy, Ryz, Rux, x:A, z:B, z:C, u:D$ becomes the labeled graph $G = ( V, E, L )$ where $V = \{x, y, z, u \}$, $E = \{(x,y), (y,z), (u,x) \}$, $L(x) = \{ A \}$, $L(y) = \emptyset$, $L(z) = \{ B,C \}$ and $L(u) = \{ D \}$.

\end{example}

The reader will observe that the translations are obtained rather directly. This is because the main difference between a labeled graph and a labeled sequent is notation. Therefore, for a given nested sequent $X$, we let $\dUTG(X)$ also represent the labeled sequent obtained from the labeled polytree of $X$. We follow this convention for the remainder of the paper and let $\dUTG(X)$ represent a labeled sequent.


Combining the previous results we obtain:
\begin{theorem}\label{DisToLab} Let $\GP$ be a set of general path axioms. If~$\der$ is a derivation of $X$ in~$\SKT + \GPd$, then there is an effective translation of~$\der$ to a derivation~$\der'$ of $\dUTG(X)$ in~$\LKt + \GPl$.

\end{theorem}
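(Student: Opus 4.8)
The plan is to prove Theorem~\ref{DisToLab} by induction on the height of the derivation $\der$ in $\SKT + \GPd$, showing that each rule application in $\SKT$ is simulated by a (possibly empty, possibly multi-step) block of rule applications in $\LKt + \GPl$ acting on the labeled sequent translation $\dUTG$ of the sequents involved. The conceptual backbone is Corollary~\ref{isoresidcor}: applications of the display rules $\rf$ and $\rp$ leave the labeled polytree (hence the labeled sequent, up to renaming of labels) unchanged, so the two display rules of $\SKT$ translate to \emph{nothing at all} on the labeled side — the premise and conclusion have isomorphic $\dUTG$-images. This is exactly the phenomenon flagged in the introduction and in Lemma~\ref{isoresid}. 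Similarly, the structural rules $\wk$ and $\ctr$ of $\SKT$ translate to the admissible weakening and contraction rules of $\LKt + \GPl$ (Lemma~\ref{g3ktstrucadmiss}), since deleting/duplicating a substructure $Y$ of a nested sequent corresponds under $\dUTG$ to deleting/duplicating the corresponding labeled-polytree subgraph, which amounts to a sequence of labeled weakenings/contractions (on both relational atoms and labeled formulae) — here one uses that $\dUTG$ of a nested sequent $X,Y$ decomposes as the merge $H' \oplus_x H$ by Lemma~\ref{lem-contexts}.

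Next I would handle the logical rules. The propositional rules $\id$, $\disr$, $\conr$ of $\SKT$ act at the top level, i.e.\ on formulae sitting at the root node $x$ of $tree(X)$; under $\dUTG$ these become precisely the corresponding propositional rules $\id$, $\disr$, $\conr$ of $\LKt$ applied at label $x$, because the context structure $X$ around them maps to an arbitrary labeled polytree hanging off $x$, which is harmless. The modal rules require a little more care but are still local. For $\bboxr$: the premise $X, \circb A\}$ translates to $\dUTG(X)$ with a fresh backward-child $y$ of $x$ carrying $A$, i.e.\ a labeled sequent containing $Ryx, y{:}A$; the conclusion $X, \blacksquare A$ translates to $\dUTG(X)$ with $x{:}\blacksquare A$ added; these are matched exactly by the $\bboxr^{*}$ rule of $\LKt$, whose eigenvariable side condition is met because $y$ is fresh (it was introduced by the translation of $\bullet\{\cdot\}$, so it occurs nowhere else). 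The $\boxr$ rule is symmetric using forward edges. For the diamond rules $\bdiar$ and $\diar$: the $\SKT$ rule $\diar$ goes from $X, \circw Y,A\}, \Diamond A$ to $X, \circw Y\}, \Diamond A$; translating, the premise has an edge $Rxz$ (where $z$ is the root of $\dUTG(Y)$) with $z{:}A$ and $x{:}\Diamond A$ present, and the conclusion drops $z{:}A$ — this is exactly one application of the $\LKt$ rule $\diar$ (read bottom-up), since that relational atom $Rxz$ already exists in both. The $\bdiar$ case is symmetric with $Rzx$.

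Finally, the general path rule $\dgp$ of $\SKT$ must be simulated by $\lgp$ of $\LKt + \GPl$. The $\SKT$ rule goes from $X, \star_{n+1}\{\ldots\star_{n+m}\{Y\}\ldots\}$ to $X, \star_{1}\{\ldots\star_{n}\{Y\}\ldots\}$. Under $\dUTG$, the conclusion attaches, off the root node $x$, a directed \emph{path} of $n$ fresh nodes whose edge directions are dictated by the $\star_i$ (forward for $\circ$, backward for $\bullet$) ending in the root $y$ of $\dUTG(Y)$, i.e.\ it adds exactly the relational atoms $\R_{\Pi}xy$; the premise instead attaches a path of $m$ fresh nodes ending in $y$, adding $\R_{\Sigma}xy$ — which is precisely the shape of the $\lgp^{*}$ rule (using $R_{\ques}xy$ for either orientation, per the paper's convention $R_\Diamond xy := Rxy$, $R_\blacklozenge xy := Ryx$), and the eigenvariable condition is again satisfied because the intermediate nodes are fresh from the translation. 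One must check the boundary cases $n=0$ and $m=0$ against Remark~\ref{rem}: these correspond to cases where the path on one side collapses and $y$ is identified with $x$, matching the $\epsilon$ rows of the table; this is where the equality/substitution rules alluded to in Remark~\ref{rem} may be invoked, but as noted there the results go through. The main obstacle — and the part requiring the most bookkeeping — is keeping the correspondence between the fresh eigenvariables generated by the $\LKt$ rules and the fresh nodes produced by the $\dUTG$ translation of nestings consistent across the whole derivation, i.e.\ ensuring that when two subderivations are combined at a branching rule the label names can be made to agree; this is handled by invoking the label-renaming invariance $\dUTG_x(X) \cong \dUTG_y(X)$ together with the admissibility of the substitution rule $\sub$ from Lemma~\ref{g3ktstrucadmiss}. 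Everything else is a routine case analysis, and effectiveness is immediate since each step replaces one $\SKT$ inference by a bounded block of $\LKt$ inferences determined locally.
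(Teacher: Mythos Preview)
Your proposal is correct and follows essentially the same approach as the paper's proof: induction on the height of the $\SKT + \GPd$ derivation, with display rules collapsing via Lemma~\ref{isoresid}, structural rules handled by Lemma~\ref{g3ktstrucadmiss}, and each logical rule simulated by its labeled counterpart. The one small detail you gloss over is in the $\dgp$ case: the translated premise carries only $\R_{\Sigma}xy$ while the $\lgp$ rule requires both $\R_{\Pi}xy$ and $\R_{\Sigma}xy$ in its premise, so one must first weaken in $\R_{\Pi}xy$ (again via Lemma~\ref{g3ktstrucadmiss}) before applying $\lgp$ --- exactly as the paper does.
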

\begin{proof}
We prove the result by induction on the height of the given derivation.

\textbf{Base case.} The translation of an initial sequent $Y, p, \overline{p}$ in $\SKT+ \GPd$ gives the initial sequent $\dUTG_{x}(Y), x:p, x:\overline{p}$ in $\LKt + \GPl$, which proves the base case.

\textbf{Inductive step.} We show the inductive step for the rules $\disr$, $\bboxr$, ($\diam$), $\rp$, and $\dgp$. We note that weakening and contraction are admissible in $\der'$ due to Lemma~\ref{g3ktstrucadmiss}.

\begin{center}
\begin{tabular}{c @{\hskip 2em} c}
\AxiomC{$Y, A,B $}
\RightLabel{$(\lor)$}
\UnaryInfC{$Y, A\lor B$}
\DisplayProof

&

\AxiomC{$\dUTG_{x}(Y), x:A, x:B$}
\RightLabel{$\disr$}
\UnaryInfC{$\dUTG_{x}(Y), x:A \vee B$}
\DisplayProof
\end{tabular}
\end{center}

\begin{center}
\begin{tabular}{c @{\hskip 2em} c}
\AxiomC{$Y, \bullet \{ A \}$}
\RightLabel{$(\bBox)$}
\UnaryInfC{$Y, \blacksquare A$}
\DisplayProof

&

\AxiomC{$\dUTG_{x}(Y), Ryx, y: A$}
\RightLabel{$\bboxr$}
\UnaryInfC{$\dUTG_{x}(Y), x: \bBox A$}
\DisplayProof
\end{tabular}
\end{center}

\begin{center}
\begin{tabular}{c @{\hskip 2em} c}
\AxiomC{$Y, \circ \{ Z, A \}, \Diamond A$}
\RightLabel{$(\diam)$}
\UnaryInfC{$Y, \circ \{ Z \}, \Diamond A$}
\DisplayProof

&

\AxiomC{$\dUTG_{x}(Y), \dUTG_{y}(Z), Rxy, x:\Diamond A, y:A$}
\RightLabel{($\diam$)}
\UnaryInfC{$\dUTG_{x}(Y), \dUTG_{y}(Z), Rxy, x:\Diamond A$}
\DisplayProof
\end{tabular}
\end{center}		 

\begin{center}
\begin{tabular}{c @{\hskip 2em} c}
\AxiomC{$Y, \bullet \{ Z \}$}
\RightLabel{$\rp$}
\UnaryInfC{$\circ \{ Y \}, Z$}
\DisplayProof

&

\AxiomC{$\dUTG_{y}(Y, \bullet \{ Z \})$}
\RightLabel{Lem.~\ref{isoresid}}
\dottedLine\UnaryInfC{$\dUTG_{z}(\circ \{ Y \}, Z)$}
\DisplayProof
\end{tabular}
\end{center}

\begin{center}
\begin{tabular}{c@{\hskip 2em}c}
\AxiomC{$Y, \star_{n+1} \{ ... \star_{n+m} \{ Z \} ... \}$}
\RightLabel{$\dgp$}
\UnaryInfC{$Y, \star_{1} \{ ... \star_{n} \{ Z \} ... \}$}
\DisplayProof

&

\AxiomC{$\dUTG_{x}(Y), \R_{\Sigma}xy, \dUTG_{y}(Z)$}
\RightLabel{Lem.~\ref{g3ktstrucadmiss}}
\dashedLine
\UnaryInfC{$\dUTG_{x}(Y), \R_{\Pi}xy, \R_{\Sigma}xy, \dUTG_{y}(Z)$}
\RightLabel{$\lgp$}
\UnaryInfC{$\dUTG_{x}(Y), \R_{\Pi}xy, \dUTG_{y}(Z)$}
\DisplayProof
\end{tabular}
\end{center}



Because $\dUTG_{y}(Y, \bullet \{ Z \})$ and $\dUTG_{z}(\circ \{ Y \}, Z)$ are isomorphic, the premise and conclusion of~$\rp$ can be mapped to the same labeled sequent (thus, the two will be identical), and hence no rule is used for translating~$\rp$. In the above, this is denoted by the dotted line.
\end{proof}

\begin{example}\label{example8} We translate a derivation of $\bdiam \Diamond p \rightarrow \diam\bdiam p$ in $\SKT + NestSt(\{\bdiam \Diamond p \rightarrow \diam\bdiam p\})$ to a derivation in $\LKt + LabSt(\{\bdiam \Diamond p \rightarrow \diam\bdiam p\})$. 
\begin{center}
\begin{tabular}{c@{\hskip 1cm}c}
\AxiomC{$\bdiam p, \bcirc\{ p,\bar{p}\}, \bcirc\{ \diam\bdiam p\}$}
\RightLabel{(${\bdiam}$)}
\UnaryInfC{$\bdiam p, \bcirc\{ \bar{p}\}, \bcirc\{ \diam\bdiam p\}$}
\RightLabel{$\rp$}
\UnaryInfC{$\circ\{ \bdiam p, \bcirc\{ \bar{p}\} \}, \diam\bdiam p$}
\RightLabel{(${\diam}$)}
\UnaryInfC{$\circ\{ \bcirc\{ \bar{p}\} \}, \diam\bdiam p$}
\RightLabel{$\dgp$}
\UnaryInfC{$\bcirc\{ \circ\{ \bar{p}\} \}, \diam\bdiam p$}
\RightLabel{$\rp$}
\UnaryInfC{$\circ\{ \bar{p}\}, \circ\{\diam\bdiam p\}$}
\RightLabel{(${\Box}$)}
\UnaryInfC{$\Box \bar{p}, \circ\{\diam\bdiam p\}$}
\RightLabel{$\rf$}
\UnaryInfC{$\bcirc\{ \Box \bar{p} \}, \diam\bdiam p$}
\RightLabel{(${\bBox}$)}
\UnaryInfC{$\bBox \Box \bar{p}, \diam\bdiam p$}
\RightLabel{(${\lor}$)}
\UnaryInfC{$\bBox \Box \bar{p}\lor \diam\bdiam p$}
\RightLabel{=}
\dottedLine
\UnaryInfC{$\bdiam \Diamond p \rightarrow \diam\bdiam p$}
\DisplayProof
&
\AxiomC{$Rxu, Rzu, Ryx, Ryz, z:\overline{p}, x:\Diamond \bldia p, u: \bldia p, z : p$}
\RightLabel{(${\bdiam}$)}
\UnaryInfC{$Rxu, Rzu, Ryx, Ryz, z:\overline{p}, x:\Diamond \bldia p, u: \bldia p$}
\RightLabel{(${\diam}$)}
\UnaryInfC{$Rxu, Rzu, Ryx, Ryz, z:\overline{p}, x:\Diamond \bldia p$}
\RightLabel{${\lgp}$}
\UnaryInfC{$Ryx, Ryz, z:\overline{p}, x:\Diamond \bldia p$}
\RightLabel{(${\Box}$)}
\UnaryInfC{$Ryx, y: \Box \overline{p}, x:\Diamond \bldia p$}
\RightLabel{(${\bBox}$)}
\UnaryInfC{$x:\blacksquare \Box \overline{p}, x:\Diamond \bldia p$}
\RightLabel{(${\lor}$)}
\UnaryInfC{$x: \blacksquare \Box \overline{p} \vee \Diamond \bldia p$}
\RightLabel{=}
\dottedLine
\UnaryInfC{$\bdiam \Diamond p \rightarrow \diam\bdiam p$}
\DisplayProof
\end{tabular}
\end{center}

\end{example}

\begin{corollary}
\label{cor:modal-translation}
Let $M \subseteq \{\Pi A \rightarrow \Sigma A \ | \ \Pi, \Sigma \in \{\Diamond\}^{*}\}$ be a set of modal general path axioms. Every derivation in the shallow nested calculus $\SKT - \{(\bBox), (\blacklozenge)\} + NestSt(M)$ is translatable to a derivation in the labeled calculus $\LKt - \{(\bBox), (\blacklozenge)\} + LabSt(M)$.
\end{corollary}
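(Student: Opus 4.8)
The plan is to obtain this corollary as a restriction of Theorem~\ref{DisToLab}: I would re-examine the induction in the proof of that theorem and verify that, in the modal fragment, no step of the translation ever invokes the labeled rules $(\bboxr)$ or $(\bdiar)$. The translation function $\dUTG$ and the polytree machinery of Section~3 treat $\circ$- and $\bullet$-nestings symmetrically---a $\bullet$-nesting simply becomes a backward-directed edge---so none of the lemmas of Section~3 (nor the passage from labeled graphs to labeled sequents in Section~4) is affected by removing the backward modal rules. Hence the only thing to check is the inductive step.

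Going through the cases of that inductive step: the translations of $\id$, $(\lor)$, $(\land)$, $(\Box)$, $(\Diamond)$, $\rp$, and $\rf$ produce output derivations built only from $\id$, $\disr$, $\conr$, $\boxr$, $\diar$, (via Lemma~\ref{isoresid}) no rule at all, together with the admissible $\wk$ and $\ctr$; none of these lies in $\{(\bboxr),(\bdiar)\}$. The translation of $\dgp$ produces an application of $\lgp$ (preceded by admissible $\wk$/$\ctr$); when the corresponding axiom lies in $M$, i.e. is modal, every relational atom occurring in that $\lgp$ instance has the form $Rxy$, so the rule is an instance of $LabSt(M)$ as required. The only cases of Theorem~\ref{DisToLab} producing an application of $(\bboxr)$ or $(\bdiar)$ are the cases for the nested rules $(\bBox)$ and $(\blacklozenge)$, and these are absent from $\SKT - \{(\bBox),(\blacklozenge)\} + NestSt(M)$ by hypothesis. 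Therefore the output derivation lies in $\LKt - \{(\bBox),(\blacklozenge)\} + LabSt(M)$. It remains only to note that Lemma~\ref{g3ktstrucadmiss}, which supplies the admissibility of $\wk$ and $\ctr$ used in the translation, restricts without change to this subcalculus: its proof (in the style of~\cite{Bor08,Neg05}) is an induction on derivation height whose cases for the surviving rules make no reference to $(\bboxr)$ or $(\bdiar)$, so deleting those rules merely deletes cases.

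The main point requiring a little care---and the only genuine obstacle---is that $\SKT - \{(\bBox),(\blacklozenge)\}$ still contains the display rules $\rp$ and $\rf$, so $\bullet$-nestings (hence backward edges in $\dUTG$) are \emph{not} eliminated from the derivation; one might fear this forces a backward modal rule somewhere in the labeled proof. It does not: by Lemma~\ref{isoresid} the premise and conclusion of $\rp$/$\rf$ translate to isomorphic labeled sequents, so such steps contribute no labeled inference at all, and any backward relational atoms present in intermediate labeled sequents are simply carried along inertly. Thus the presence of $\bullet$ in the nested derivation never manifests as an occurrence of $(\bboxr)$ or $(\bdiar)$ in the translated labeled derivation, and the corollary follows.
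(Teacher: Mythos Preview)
Your proposal is correct and follows the paper's approach: the corollary is stated in the paper immediately after Theorem~\ref{DisToLab} without any separate proof, so it is treated as a direct restriction of that theorem, exactly as you argue. Your write-up is more explicit than the paper (which offers no argument at all), and your careful observation that $\bullet$-nestings may still arise via $\rf$/$\rp$ yet are harmless because of Lemma~\ref{isoresid} is a useful point that the paper leaves entirely to the reader.
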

The above corollary shows that our translations also hold for the modal (non-tense) fragments of the logics we consider. As detailed in the conclusion, this is useful since one can prove conservativity of the display calculi $\SKT - \{(\bBox), (\blacklozenge)\} + NestSt(M)$ over their modal fragments by translating derivations into $\LKt - \{(\bBox), (\blacklozenge)\} + LabSt(M)$ and invoking the soundness of the labeled calculus.

\section{From Labeled to Shallow Nested Calculi}

    In this section, we address the converse question: translating labeled proofs into shallow nested proofs, which will be achieved by translating through the deep nested calculi $\DKT + \propdeep$. In the base case for $\Kt$ when~$\GP=\emptyset$, i.e. for the calculus $\LKt$, it is fairly straightforward to effectively translate labeled derivations into nested derivations. As will be argued in Lemma~\ref{Lab-equiv}, every derivation in $\LKt$ which proves a labeled theorem of the form $x:A$, consists solely of labeled sequents which are translatable into nested notation. After providing the translation from $\LKt$ to $\SKT$, we explain a problem that arises when attempting to translate derivations from extensions of $\LKt$ to extensions of $\SKT$, and how we solve this problem for path extensions of $\Kt$.

The central issue complicating the reverse translation from labeled to nested for general path extensions of $\Kt$ is that structural rule extensions of $\LKt$ allow for labeled sequents to occur in derivations that are \emph{not} labeled polytree sequents. In other words, general path structural rules allow one to derive theorems with labeled sequents not in the domain of the translation function given in Section~\ref{sec:interpret-lab-polytree-as-nested}.\footnote{Note that the translation function $\UTGd$ is only defined for labeled polytree sequents.} This complication arises since our translation is only defined for labeled polytree sequents, and not for labeled sequents in general. Nevertheless, we can overcome this obstacle by considering labeled calculi for $\Kt$ extended with propagation rules for path axioms since every derivation can be transformed into one containing only (translatable) labeled sequents, i.e. labeled polytree sequents. In Section~\ref{Trans_Path_Ax_Ext_Subsection}, we explain this proof transformation procedure, followed by the translation from $\LKt + \Pl$ to $\SKT + \Pd$ that leverages $\DKT + \propdeep$ to facilitate the translation. Note that although the translation presented here takes a detour through a deep nested calculus, a direct translation from labeled to shallow nested could be provided; still, we opt for the latter approach since it allows us to exploit results from~\cite{GorPosTiu11} that simplify our work.

\subsection{Translating the Base Calculus}\label{sec:labeled-to-nested-Kt}

We first consider the converse translation for the base calculus $\LKt$.


\begin{definition}[Labeled Polytree Sequent] A \emph{labeled polytree sequent} is a labeled sequent whose graph is a labeled polytree.

\end{definition}



\begin{lemma}\label{Lab-equiv}
Every labeled derivation in~$\LKt$ of a labeled polytree sequent $\R,\Gamma$ consists solely of labeled polytree sequents.
\end{lemma}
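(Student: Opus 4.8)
The statement will be proved by induction on the height of the given derivation $\mathcal{D}$, and the real content is a purely local, rule-by-rule observation: \emph{every inference rule of $\LKt$, read from conclusion to premises, sends a labeled polytree sequent to labeled polytree sequents}. Granting this local claim, the induction is immediate. If $\mathcal{D}$ has height $0$ it consists of the single sequent $\R,\Gamma$, which is a labeled polytree sequent by hypothesis. If $\mathcal{D}$ has positive height, its last rule has conclusion $\R,\Gamma$, so by the local claim each premise is again a labeled polytree sequent; applying the induction hypothesis to each immediate subderivation (whose end sequent is now known to be a labeled polytree sequent) shows that all sequents occurring in those subderivations—hence all of $\mathcal{D}$—are labeled polytree sequents.

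\textbf{The local claim.} I would establish it by splitting the rules of $\LKt$ into two groups according to how they act on the associated labeled graph $(V,E,L)$ of Section~\ref{sec:interpret-lab-polytree-as-nested}. The rules $\id$, $\disr$, $\conr$, $\diar$, and $\bdiar$ leave the graph $(V,E)$ entirely unchanged: in each case the relational atoms of the premise(s) coincide with those of the conclusion, and the set of labels occurring is the same—the active label $x$ already occurs in the conclusion (via $x{:}A\lor B$, $x{:}A\land B$, $x{:}\Diamond A$, or $x{:}\bldia A$), and in the $\diar$ and $\bdiar$ cases the label $y$ also already occurs there (via the relational atom $Rxy$, resp. $Ryx$). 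Only the labeling function $L$ changes, and only at a single node, which has no bearing on whether the underlying undirected graph is a tree. Hence if the conclusion is a labeled polytree sequent, so is every premise.

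The remaining rules $\boxr^{*}$ and $\bboxr^{*}$ introduce the eigenvariable $y$, which by the side condition $(\ast)$ does not occur in the conclusion. Thus passing from the conclusion $\R, x{:}\Box A, \Gamma$ (resp. $\R, x{:}\blacksquare A, \Gamma$) to the premise $\R, Rxy, y{:}A, \Gamma$ (resp. $\R, Ryx, y{:}A, \Gamma$) adjoins one brand-new vertex $y$ together with the single new edge $(x,y)$ (resp. $(y,x)$) incident to the already-present vertex $x$. Attaching a pendant vertex to a tree yields a tree, so the premise's underlying undirected graph is again a tree, i.e. the premise is a labeled polytree sequent. This exhausts the cases and proves the local claim.

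\textbf{Expected obstacle.} There is no serious obstacle: the argument is a routine structural induction, and I would expect the entire proof to be a few lines. The only points that need care are (i) verifying in the graph-preserving cases that the relevant labels genuinely already occur in the conclusion, so that the vertex set $V$ does not grow, and (ii) invoking the eigenvariable side conditions of $\boxr^{*}$ and $\bboxr^{*}$ so that the new edge attaches a fresh leaf rather than creating a cycle between existing vertices. It is worth stressing the contrast that makes this lemma fail for extensions of $\LKt$: the structural rules $\Pl$ and $\GPl$, read bottom-up, add relational atoms $\R_{\Sigma}xy$ linking two vertices $x,y$ that are already connected via $\R_{\Pi}xy$, thereby introducing a cycle and destroying the polytree property—exactly the complication handled in the remainder of the section.
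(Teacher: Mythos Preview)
Your proof is correct. It differs from the paper's in structure rather than content: the paper argues by contradiction and reasons \emph{top-down}, showing that if some sequent in the derivation fails to be a polytree (disconnected or cyclic), then by inspecting the rules of $\LKt$ that defect persists all the way to the end sequent, contradicting the hypothesis. You instead run a direct induction \emph{bottom-up}, showing that each rule read from conclusion to premises preserves the polytree property. The local rule-by-rule observations underlying both arguments are identical---in particular, the eigenvariable condition on $\boxr$ and $\bboxr$ is the crux in each (for the paper it blocks cycle removal going down; for you it guarantees a fresh pendant going up). Your formulation is slightly more direct and reusable (it is exactly the shape needed later for Lemma~\ref{Lpath2}), while the paper's contrapositive phrasing foregrounds the failure mode that motivates the propagation-rule machinery in Section~\ref{Trans_Path_Ax_Ext_Subsection}.
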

\begin{proof} We argue by contradiction. Let~$\der$ be a derivation of~$\R,\Gamma$ in~$\LKt$ and suppose there is a labeled sequent~$\R',\Gamma'$ in~$\der$ that is not a labeled polytree sequent. By definition, the underlying undirected graph of the graph of~$\R',\Gamma'$ is not a tree. It follows that the graph of~$\R',\Gamma'$ is not connected or contains an undirected cycle.

If the graph of~$\mathcal{R}', \Gamma'$ is not connected, then by inspection of the rules of~$\LKt$, the graph of every sequent below $\mathcal{R}', \Gamma'$ in $\der$ is disconnected, including the graph of $\R,\Gamma$, which is a contradiction.
On the other hand, if it is connected, then the graph of~$\mathcal{R}', \Gamma'$ must contain an undirected cycle. Since the derivation ends with a labeled polytree sequent $\R,\Gamma$, it must be the case that every undirected cycle is deleted ultimately. The only rules that delete relational atoms in $\LKt$ are $\boxr$ and $\bboxr$. However, the eigenvariable conditions in these rules are not satisfied for labels occurring in an undirected cycle, so the undirected cycle cannot be eliminated, giving a contradiction.

Hence, every sequent occurring in a $\LKt$ derivation of a labeled polytree sequent is a labeled polytree sequent.
\end{proof}

The observation that $\LKt$ is complete relative to derivations consisting solely of labeled polytree sequents is useful for our translation work. Recognizing that this fact generalizes to $\LKt$ extended with propagation rules allows us to easily translate our labeled derivations into deep nested derivations, and then leverage Lemma~\ref{sktdktequiv} to complete the effective translation from labeled to shallow nested.

\begin{lemma}\label{G3KT_to_DKT} Every derivation in $\LKt$ consisting solely of labeled polytree sequents can be effectively translated to a derivation in $\DKT$.

\end{lemma}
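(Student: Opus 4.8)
The plan is to proceed by induction on the height of the given derivation $\der$ in $\LKt$, using Lemma~\ref{Lab-equiv} to guarantee that every labeled sequent in $\der$ is a labeled polytree sequent and hence lies in the domain of the translation function $\UTGd$ from Section~\ref{sec:interpret-lab-polytree-as-nested}. For each labeled polytree sequent $\R,\Gamma$ occurring in $\der$, we pass to its labeled polytree $(V,E,L)$ and then fix a vertex $x$ to obtain the nested sequent $\UTGd_{x}(\R,\Gamma)$; by Lemma~\ref{DE_For_Diff_Node} the particular choice of $x$ is immaterial up to display equivalence. The target derivation in $\DKT$ is then built rule by rule, mimicking each $\LKt$ inference by the corresponding $\DKT$ inference applied inside the context determined by the polytree translation.

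The key steps, in order, are as follows. First, the base case: an $\id$-instance $\R, x:p, x:\overline{p}, \Gamma$ translates (via $\UTGd$) to a nested sequent of the form $X[p,\overline{p}]$, which is an instance of $\id$ in $\DKT$. Second, the propositional rules $\disr$ and $\conr$ act on a labeled formula at a single node $x$; under $\UTGd$ this node becomes a position inside the nested sequent, and the rule translates directly to the $(\lor)$ or $(\land)$ rule of $\DKT$ applied at that position (noting that the $\DKT$ versions carry a context $X[\cdot]$, which is exactly what the surrounding polytree provides). Third, the $\boxr^{*}$ rule introduces a fresh relational atom $Rxy$ with $y$ an eigenvariable: under $\UTGd$ this corresponds to adding a fresh $\circ$-child to node $x$, so the premise $\R, Rxy, y:A, \Gamma$ translates to $X[\Box A, \circ\{A\}]$ and the conclusion to $X[\Box A]$ — but one must first check that the principal formula $x:\Box A$ can be matched with the \emph{copied} occurrence in the $\DKT$ rule $(\Box)$; here admissibility of weakening in $\DKT$ (Lemma~\ref{dktstrucadmiss}) lets us introduce the copy. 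The $\bboxr^{*}$ case is symmetric using a fresh $\bullet$-child. Fourth, the diamond rules $\diar$ and $\bdiar$ of $\LKt$ act along an existing relational atom $Rxy$ (resp.\ $Ryx$); since the sequent is a polytree, $y$ is genuinely a $\circ$-child (resp.\ $\bullet$-parent) of $x$, so these translate to $(\diam_{1})$ and $(\bdiam_{1})$ of $\DKT$ respectively. One must be slightly careful, because in a polytree the atom $Rxy$ could arise with $y$ being a $\bullet$-child of $x$ when read the other way — but the convention reading each $\bullet$-edge $\alpha \overset{\bullet}{\rightarrow}\beta$ as $\alpha \overset{\circ}{\leftarrow}\beta$ makes the direction unambiguous, and one checks that all four of $\diar$, $\bdiar$ are covered by the four deep diamond rules $(\diam_{1}),(\diam_{2}),(\bdiam_{1}),(\bdiam_{2})$ according to whether the relevant neighbor sits along a $\circ$-edge or a $\bullet$-edge. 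Finally, assemble these local translations into a full $\DKT$ derivation of $\UTGd(\R,\Gamma)$ and note the whole construction is effective (each step is a fixed local replacement).

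I expect the main obstacle to be the diamond-rule cases, specifically bookkeeping the correspondence between a \emph{relational atom} $Rxy$ in the labeled sequent and the \emph{directed edge of a given type} in the nested-sequent tree. In the labeled calculus there is only one relation symbol and the diamond rules are stated uniformly in terms of $Rxy$ versus $Ryx$; in $\DKT$ there are two structural connectives $\circ$ and $\bullet$ and four separate diamond rules, and one has to verify that for every polytree configuration the appropriate deep rule is available and introduces exactly the right occurrences (including the retained principal diamond formula). Handling the eigenvariable/freshness side conditions of $\boxr^{*}$ and $\bboxr^{*}$ in tandem with the "copying" of the boxed formula in $(\Box)$ and $(\bBox)$ is a secondary subtlety, resolved by the admissibility lemmas for $\DKT$. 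Everything else is routine once the translation dictionary between polytree neighbors and $\circ/\bullet$-edges is fixed.
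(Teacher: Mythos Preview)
Your proposal is correct and follows essentially the same approach as the paper: induction on derivation height, with the base case and propositional rules translating directly, the box rules handled via admissible weakening in $\DKT$ (Lemma~\ref{dktstrucadmiss}), and the diamond rules requiring a case split over which of the four deep diamond rules applies. One small clarification: your initial claim that ``$y$ is genuinely a $\circ$-child of $x$'' is not quite right, since whether $y$ appears as a $\circ$-child of $x$ or $x$ as a $\bullet$-child of $y$ in $\UTGd_{z}(\R,\Gamma)$ depends on the chosen root $z$; the paper phrases the case split precisely as whether the unique undirected path from $z$ to $x$ passes through $y$ (use $(\diam_{2})$) or not (use $(\diam_{1})$), which is the clean criterion you are reaching for in your second pass.
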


\begin{proof} We prove this by induction on the height of the given derivation.

\textbf{Base case.} The translation of an initial sequent $\R, x:p, x:\overline{p}, \Gamma$ in $\LKt$ gives an initial sequent $\UTGd(\R, x:p, x:\overline{p}, \Gamma) = X[p, \overline{p}]$ in $\DKT$ which proves the base case.

\textbf{Inductive step.} We show the inductive step for the rules $\disr$, $\bboxr$, and ($\Diamond$); all remaining cases are similar. 

\begin{center}
\begin{tabular}{c @{\hskip 2em} c}
\AxiomC{$\R, \Gamma, x:A, x:B$}
\RightLabel{$\disr$}
\UnaryInfC{$\R, \Gamma, x:A \vee B$}
\DisplayProof

&

\AxiomC{$\UTGd(\R, \Gamma, x:A, x:B)$}
\RightLabel{=}
\dottedLine
\UnaryInfC{$X \lcut A,B \rcut $}
\RightLabel{$(\lor)$}
\UnaryInfC{$X \lcut A\lor B \rcut$}
\RightLabel{=}
\dottedLine
\UnaryInfC{$\UTGd(\R, \Gamma, x:A \vee B)$}
\DisplayProof
\end{tabular}
\end{center}

\begin{center}
\begin{tabular}{c @{\hskip 2em} c}
\AxiomC{$\R, Ryx, y: A, \Gamma$}
\RightLabel{$\bboxr$}
\UnaryInfC{$\R, x: \bBox A, \Gamma$}
\DisplayProof

&

\AxiomC{$\UTGd(\R, Ryx, y: A, \Gamma)$}
\RightLabel{=}
\dottedLine
\UnaryInfC{$X[\bullet \{ A \}]$}
\RightLabel{Lem.~\ref{dktstrucadmiss}}
\dashedLine
\UnaryInfC{$X[\bBox A, \bullet \{ A \}]$}
\RightLabel{$(\bBox)$}
\UnaryInfC{$X[\blacksquare A]$}
\RightLabel{=}
\dottedLine
\UnaryInfC{$\UTGd(\R, x: \bBox A, \Gamma)$}
\DisplayProof
\end{tabular}
\end{center}
We note that for the $\bboxr$ case above, we choose to translate from a label $z \neq y$ to ensure that the $\bboxr$ rule may be immediately applied. We are permitted to translate from whatever label we please due to Lemma~\ref{dktstrucadmiss}, which states the admissibility of the $\rf$ and $\rp$ rules. For example, if we consider the translation $\UTGd_{y}(\R, Ryx, y: A, \Gamma)$ starting at $y$, then the output nested sequent would be of the form $\circ\{X\}, A$. Applying the admissibility of $\rf$ yields a proof of the nested sequent $X, \bullet\{A\}$, which the $\bboxr$ rule may be applied to. By restricting our translation to start translating from a label $z \neq y$ however, we ensure that the $\bboxr$ may be immediately applied. Translating $\boxr$ inferences is performed in a similar fashion.

For the ($\Diamond$) case, there are two possible inferences in $\DKT$ depending on whether the unique undirected path from the node~$z$ (that we translate from in the premise of the last inference in the $\LKt$ derivation) to~$x$ encounters~$y$ (below rightmost) or not (below center). Note that below center $Y$ stands for all formulae in $\Gamma$ labeled with $y$, and below rightmost $X$ stands for all formulae from $\Gamma$ labeled with $x$.

\begin{center}
\begin{tabular}{c @{\hskip 1em} c @{\hskip 1em} c}
\AxiomC{$\R, Rxy, x:\Diamond A, y:A, \Gamma$}
\RightLabel{($\diam$)}
\UnaryInfC{$\R, Rxy, x:\Diamond A, \Gamma$}
\DisplayProof

&

\AxiomC{$\UTGd_{z}(\R, Rxy, x:\Diamond A, y:A, \Gamma)$}
\RightLabel{=}
\dottedLine
\UnaryInfC{$Z[\circ\{ Y, A \}, \Diamond A]$}
\RightLabel{$(\diam_{1})$}
\UnaryInfC{$Z[\circ\{ Y \}, \Diamond A]$}
\RightLabel{=}
\dottedLine
\UnaryInfC{$\UTGd_{z}(\R, Rxy, x:\Diamond A, \Gamma)$}
\DisplayProof

&

\AxiomC{$\UTGd_{z}(\R, Rxy, x:\Diamond A, y:A, \Gamma)$}
\RightLabel{=}
\dottedLine
\UnaryInfC{$Z[\bullet\{ X, \Diamond A \}, A]$}
\RightLabel{$(\diam_{2})$}
\UnaryInfC{$Z[\bullet\{ X, \Diamond A \}]$}
\RightLabel{=}
\dottedLine
\UnaryInfC{$\UTGd_{z}(\R, Rxy, x:\Diamond A, \Gamma)$}
\DisplayProof
\end{tabular}
\end{center}

Consider the directed labeled graph associated with the premise of~($\diam$). It contains a directed edge from~$x$ to~$y$. Since the undirected graph is a polytree by assumption, there is a unique undirected path from~$z$ to~$x$. If (i)~$y$ is not on this path, then the translation starting at~$z$ encounters~$x$ first. Since $x$ to~$y$ is a forward edge (i.e. it is in the same direction as the directed edge) it will be translated as~$\circ$; in this case apply the~$(\diam_{1})$ rule. Otherwise it must be that (ii)~$y$ is on the path. This means that the translation starting at~$z$ encounters~$y$ first. Since $y$ to~$x$ is a backward edge (i.e. its direction is the reverse of the directed edge) it will be translated as~$\bullet$; in this case apply the~$(\diam_{2})$ rule.
\end{proof}

\begin{theorem}
Every derivation in $\LKt$ of a labeled polytree sequent $\R,\Gamma$ is effectively relatable to a derivation of $\UTGd(\R,\Gamma)$ in $\SKT$.
\end{theorem}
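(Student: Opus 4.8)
The statement is obtained by composing the three preceding results. Let $\der$ be a derivation of the labeled polytree sequent $\R,\Gamma$ in $\LKt$. The first step is to apply Lemma~\ref{Lab-equiv}, which tells us that \emph{every} labeled sequent occurring in $\der$ is itself a labeled polytree sequent; in particular the translation function $\UTGd$ of Section~\ref{sec:interpret-lab-polytree-as-nested} is defined on each sequent of $\der$. This is precisely the place where the hypothesis that the end sequent is a polytree sequent (rather than an arbitrary labeled sequent) is used: without it, $\UTGd$ could be undefined at some intermediate sequent and the rule-by-rule translation would have no target.

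The second step is to feed $\der$ into Lemma~\ref{G3KT_to_DKT}, yielding an effective (rule-by-rule, height-driven) translation into a derivation $\der'$ in $\DKT$. Here one checks that the end sequent of $\der'$ is exactly $\UTGd(\R,\Gamma)$: inspecting the base case and the inductive cases in the proof of Lemma~\ref{G3KT_to_DKT} shows that the $\UTGd$-image of the conclusion of each $\LKt$ inference is the conclusion of the matching $\DKT$ inference, so the bottom sequent of $\der'$ is $\UTGd(\R,\Gamma)$ — up to the immaterial choice of a starting vertex for $\UTGd$, which is justified by Lemma~\ref{DE_For_Diff_Node} and the convention of dropping the subscript on $\UTGd$. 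The third step invokes Lemma~\ref{sktdktequiv} with $P = \emptyset$ (so that $\SKT + \Pd$ is $\SKT$ and $\DKT + \propdeep$ is $\DKT$): its ``vice-versa'' direction effectively transforms $\der'$ into a derivation of the same end sequent $\UTGd(\R,\Gamma)$ in $\SKT$. Composing the effective transformations of steps two and three gives the desired derivation in $\SKT$.

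There is no serious obstacle here — the theorem is essentially a corollary of Lemmas~\ref{Lab-equiv}, \ref{G3KT_to_DKT}, and~\ref{sktdktequiv}. The only points requiring a little care are (i) confirming that the end sequent produced in step two really is $\UTGd(\R,\Gamma)$ and not merely a display-equivalent variant, which follows from the explicit form of the translation in Lemma~\ref{G3KT_to_DKT}; and (ii) observing that ``effectively relatable'' is closed under composition, so that chaining the local transformations of Lemmas~\ref{G3KT_to_DKT} and~\ref{sktdktequiv} again yields a local, algorithmic procedure on proof rules. Hence the output $\SKT$ derivation is effectively relatable to $\der$.
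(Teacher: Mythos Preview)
Your proposal is correct and follows essentially the same approach as the paper: apply Lemma~\ref{Lab-equiv} to ensure every sequent in $\der$ is a labeled polytree sequent, then Lemma~\ref{G3KT_to_DKT} to obtain a $\DKT$ derivation, then Lemma~\ref{sktdktequiv} (with $P=\emptyset$) to pass to $\SKT$, and conclude by closure of effective relatability under composition. Your additional remarks about preservation of the end sequent and the role of the polytree hypothesis are accurate elaborations, but the core argument matches the paper's proof.
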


\begin{proof} Let $\der$ be a derivation in $\LKt$ of a labeled polytree sequent $\R,\Gamma$. By Lemma~\ref{Lab-equiv}, $\der$ consists solely of labeled polytree sequents. Hence, by Lemma~\ref{G3KT_to_DKT} we can effectively transform $\der$ into a derivation $\der'$ in $\DKT$, and so, by Lemma~\ref{sktdktequiv} we can effectively transform $\der'$ into a derivation of $\UTGd(\R,\Gamma)$ in $\SKT$. The composition of these two effective transformations gives the desired effective transformation.
\end{proof}

The above argument does not always work for \textit{extensions of~$\LKt$} because additional structural rules may be capable of removing cycles in the following sense: the underlying undirected graph of the premise might have a cycle, yet the underlying undirected graph of the conclusion might not (this was not the case for any rule in~$\LKt$). For example, let us consider the rule ($\mathsf{Conf}$) for the confluence axiom $\blacklozenge \Diamond A \rightarrow \Diamond \blacklozenge A$: 
\begin{minipage}[t]{.45\textwidth}
\xymatrix{
 & y\ar[dl]\ar[dr] & \\
x\ar[dr] & & z\ar[dl] \\
 & u & 
}
\end{minipage}
\begin{minipage}[t]{.1\textwidth}
\
\end{minipage}
\begin{minipage}[t]{.45\textwidth}
\vspace{2.5em}
\begin{tabular}{c}
\AxiomC{$\mathcal{R},Rxu,Rzu,Ryx,Ryz,\Gamma$}
\RightLabel{($\mathsf{Conf}$)}
\UnaryInfC{$\mathcal{R},Ryx,Ryz,\Gamma$}
\DisplayProof
\end{tabular}
\end{minipage}

In a rule instance of ($\mathsf{Conf}$), the underlying undirected graph of the premise necessarily contains a cycle of the form shown above left. However, it need not be the case that the underlying undirected graph of the conclusion contains a cycle, since the edges from $x$ to $u$ and $z$ to $u$ (corresponding to the relational atoms $Rxu$ and $Rzu$) are deleted. As a consequence, a labeled derivation of a labeled formula~$x:A$ in $\LKt+\text{($\mathsf{Conf}$)}$ may contain labeled sequents that are not labeled polytree sequents. Therefore, such a derivation is not immediately translatable to a derivation in $\SKT + (\mathsf{Conf})$ via our methods because the derivation may contain sequents that are not in the domain of our translation.

For all general path extensions of $\Kt$, every shallow nested derivation can be translated into a labeled derivation; this fact implies that the space of shallow nested derivations corresponds to a subspace of the space of labeled derivations. Derivations of theorems in our labeled calculi may contain labeled sequents that are not labeled polytree sequents, showing that labeled derivations contain structures that go beyond those of the nested formalism. Nevertheless, we may invoke techniques present in~\cite{GorPosTiu11,LyoBer19} to pre-process each labeled derivation (in a labeled calculus for $\Kt$ extended with path axioms $\Pset$) in such a way that each is translatable to a shallow nested derivation, thus answering an open question in~\cite{CiaLyoRam18}.

\subsection{Translating the Path Axiom Extension}\label{Trans_Path_Ax_Ext_Subsection} 

We now show that the labeled calculus can be \emph{internalized} (also referred to as \emph{refinement} in~\cite{LyoBer19}) for $\Kt + \Pset$ (where $\Pset$ represents a set of path axioms), meaning that we can effectively transform any $\LKt + \Pl$ derivation of a labeled formula into one where every sequent is a labeled polytree sequent. 
 This internalization of proofs is interesting in its own right, and is also helpful in that the resulting labeled derivation is easily translatable into a derivation in $\DKT + \propdeep$. From there, we can invoke Lemma~\ref{sktdktequiv} to conclude the existence of an effective translation from $\LKt + \Pl$ derivations to $\SKT + \Pd$ derivations (since composing two effective procedures gives an effective procedure). 

The method of transforming every derivation in $\LKt + \Pl$ into a derivation consisting solely of labeled polytree sequents relies on the addition of propagation rules 
$\prop$ to the calculus (cf.~\cite{GorPosTiu11,LyoBer19,Sim94}). 
Such propagation rules simulate the $\lp$ rules, preserve disconnected and cyclic structures downward in a derivation, and, equivalently, preserve labeled polytree structure bottom-up in a derivation. 
The latter properties are significant because they allow us to make an argument similar to the one made in Lemma~\ref{Lab-equiv}, where we argue by contradiction that every labeled sequent occurring in a given derivation of a labeled formula $x :A$ must be a labeled polytree sequent.

The main technical lemma in this section is Lemma~\ref{Permute_Prop_and_Struc}, where we show that in the presence of propagation rules $\prop$, the structural rules $\Pl$ in $\LKt + \Pl$ can be eliminated from any proof. This allows for the effective transformation of any proof in an (unrestricted) labeled calculus $\LKt + \Pl$ into a proof in the associated internal labeled calculus $\LKt + \prop$ (Lemma~\ref{internal}). 
Proofs in the internal calculi $\LKt + \prop$ can then be effectively translated into derivations in $\DKT + \propdeep$. Once we prove these claims, we obtain an effective translation from the labeled calculus $\LKt + \Pl$ to the nested calculus $\SKT + \Pd$ via Lemma~\ref{sktdktequiv}. 

The proof of admissibility of structural rules $\Pl$ in the presence of propagation rules $\prop$ (Lemma~\ref{Permute_Prop_and_Struc}) bears some resemblance to the proof of admissibility of structural rules $\Pd$ for $\propdeep$ in the deep nested calculi of~\cite{GorPosTiu11}. There is, however, a crucial difference in our result compared to that of \cite{GorPosTiu11}. In~\cite{GorPosTiu11}, an additional admissibility result needs to be proved {\em for every path axiom extension}: the admissibility of all display rules. 
By contrast, this additional admissibility result does not need to be proved in the labeled setting---display rules 
 are all absent in the labeled calculi. This mismatch results is an interesting observation regarding Gor\'e \textit{et al.}'s translation from $\SKT + \Pd$ to $\DKT + \propdeep$. Consider the following transformations of a proof of a nested sequent in $\SKT + \Pd$ to a proof of the same sequent in $\DKT + \propdeep$: one done directly in a nested calculus, the other through a detour in the associated labeled calculus. Note that step (3) is given by~\cite[Lem.~6.14]{GorPosTiu11} and step (5) is trivial as any derivation in $\LKt + \Pl$ is a derivation in $\LKt + \Pl + \prop$.
\begin{center}
\xymatrix{
  \SKT + \Pd  \ar@{.>}[rr]^{(4)~Thm.~\ref{DisToLab}~(+~Lem.~\ref{g3ktstrucadmiss})} \ar@{->}[d]_{(1)} &  & \LKt + \Pl  \ar@{->}[d]^{(5)}  & 		\\
 \DKT + \Pd + \propdeep + \{\rf,\rp, \ctr, \wk\}  \ar@{->}[d]_{(2)~Lem.~\ref{dktstrucadmiss}}    &  & \LKt + \Pl + \prop\ar@{->}[dd]^{(6)~ Lem.~\ref{internal}}	  & \\
 \DKT + \Pd + \propdeep \ar@{->}[d]_{(3)}  & & & \\
 \DKT + \propdeep  & & \LKt + \prop \ar@{.>}[ll]^{(7)~Lem.~\ref{LabToDis}}
 }
\end{center}
The direct translation from $\SKT + \Pd$ to $\DKT + \propdeep$ in~\cite{GorPosTiu11} is described on the left path in the above diagram; it starts with the trivial observation (1) that 
$\DKT + \Pd + \propdeep + \{\rf,\rp, \ctr, \wk\}$ subsumes $\SKT + \Pd$; followed by (2) the admissibility of display rules, contraction $\ctr$, and weakening $\wk$; and finally, (3) the admissibility of structural rules for path axioms. The detour through labeled calculus takes care of the display rules and the $\ctr$ and $\wk$ structural rules at step (4), where the admissibility of display rules is built into the canonical representation of nested sequents as polytrees (Corollary~\ref{isoresidcor}) and is completely independent of any extension with path axioms. This independence is not obviously observed in the transformation through the nested calculi. 
In fact, the designs of the propagation rules in the deep nested calculi in \cite{GorPosTiu11} take into account all possible interactions between display postulates and the path axioms and that leads to a proliferation of inference rules, e.g., for every propagation rule going downward in the syntax tree, there needs to be a symmetric version that propagates upward the tree. The proofs of admissibility of display rules in \cite{GorPosTiu11} in $\DKT$ and its extensions then need to consider all these cases, each of which is essentially the same. Moving to the labeled polytree sequent representation cuts the propagation rules by a half, and brings out the essence of a proof more clearly. These observations suggest that the syntax of the nested calculi is unnecessarily \emph{bureaucratic} in the sense that the syntactic structures of nested sequents obscure certain identities on proofs.\footnote{See e.g., \cite{Girard89} on the broader context of the use of the phrase ``bureaucracy'' in proof theory.}


For another demonstration of bureaucracy of nested sequent proofs (in comparison to labeled polytree sequent proofs): 
take a proof $\der$ of the nested sequent 
$
\circw \Gamma \}, \Delta. 
$
In proving admissibility of display postulates for $\DKT$, Gore \textit{et al.} applied a transformation (see the proof of Lemma 4.3 in \cite{GorPosTiu11}) to $\der$ to obtain another proof
$\der'$ of 
$
\Gamma, \circb \Delta \}.
$
Clearly $\der$ and $\der'$ are distinct proofs in any extension of $\DKT$, as they have distinct end sequents. But it can be shown that they both map to the same proof in the polytree representation (i.e., by simply replacing $\diam_1$ and 
$\diam_2$ rules in $\DKT$ with $\diam$ rule in labeled sequent calculus, and $\blacklozenge_1$ and $\blacklozenge_2$ 
with $\blacklozenge$). The distinction in the nested sequent proofs $\der$ and $\der'$ arises from the choice of which node in the nested sequent tree is to be designated as the root node; in the polytree representation this distinction does not arise, as there is no special node to be designated as the root node.


Let us now define the labeled propagation rules.
\begin{definition}[Propagation Graph of a Labeled Sequent] Let $\R,\Gamma$ be a labeled sequent where $N$ is the set of labels occurring in the sequent. We define the \emph{propagation graph $PG(\R,\Gamma) = ( N,E )$} to be the directed graph with the set of nodes $N$ and where $E$ is a set of labeled edges that are labeled with either a $\Diamond$ or $\blacklozenge$ as follows: For every $Rxy \in \R$, we have a labeled edge $(x,y,\Diamond) \in E$ and a labeled edge $(y,x,\blacklozenge) \in E$. Given that $PG(\R,\Gamma) = (N,E)$, we will often write $x \in PG(\R,\Gamma)$ to mean $x \in N$, and $(x,y,\Diamond) \in PG(\R,\Gamma)$ or $(y,x,\blacklozenge) \in PG(\R,\Gamma)$ to mean $(x,y,\Diamond) \in E$ or $(y,x,\blacklozenge) \in E$, respectively.
\end{definition}

\begin{definition}[Labeled Propagation Rules] Let $P$ be a set of path axioms. The set of propagation rules $\prop$ contains all rules of the form:

\begin{center}
\AxiomC{$\R, x: \ques A, y:A, \Gamma$} \RightLabel{$\propr$}
\UnaryInfC{$\R, x: \ques A, \Gamma$}
\DisplayProof
\end{center}

\noindent
where there is a path $\pi$ from $x$ to $y$ in the propagation graph of the premise and $\Pi A \rightarrow \ques A \in (P \cup I(P))^{*}$ with $\Pi$ the string of $\pi$.\footnote{Note that \emph{path} and \emph{string} are defined the same here as for nested sequents.}

\end{definition}

We now aim to prove that we can effectively transform any derivation in $\LKt + \Pl + \prop$ into a derivation in $\LKt + \prop$. This inevitably yields an effective transformation from proofs in $\LKt + \Pl$ to proofs in $\LKt + \prop$ (and eventually to $\SKT + \Pd$) in the following way: Given a derivation in $\LKt + \Pl$, we show that we can permute the topmost inference of a labeled structural rule $\negp$ upwards into the initial sequents to eliminate the use of the rule. This provides a proof in $\LKt + \Pl + \prop$ since the $\prop$ rules may be used in the permutation process to simulate the eliminated $\Pl$ rule. By permuting away all labeled structural rules $\negp \in \Pl$ from the derivation, we effectively obtain a proof in $\LKt + \prop$, which (as we will show) contains exclusively labeled polytree sequents when the end sequent is a labeled polytree sequent. The last thing that we will show in this section is how to effectively translate $\LKt + \prop$ derivations into $\DKT + \propdeep$ derivations; this result, in conjunction with Lemma \ref{sktdktequiv}, gives the desired effective translation and result.

\begin{lemma}\label{Ant_Path_Correspondence_Struc} For any structural rule $\negp$ defined relative to a path axiom $\Pi A \rightarrow \ques A$:

\begin{center}
\AxiomC{$\R, \R_{\Pi}xy, R_{\ques}xy, \Gamma$}
\RightLabel{$\negp$}
\UnaryInfC{$\R, \R_{\Pi}xy, \Gamma$}
\DisplayProof
\end{center}

\noindent
there exists a path $\pi$ in $PG(\R, \R_{\Pi}xy, R_{\ques}xy, \Gamma)$ from $x$ to $y$ whose string is $\Pi$ as well as a path from $x$ to $y$ whose string is $\ques$.

\end{lemma}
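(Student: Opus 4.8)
The plan is to prove the claim by directly unwinding three definitions: that of the chain $\R_{\Pi}xy$, that of $R_{\ques}xy$, and that of the propagation graph of a labeled sequent. The single observation that does all the work is that whenever a relational atom $Ruv$ occurs in a labeled sequent, its propagation graph contains \emph{both} the edge $(u,v,\Diamond)$ and the edge $(v,u,\blacklozenge)$; combined with the conventions $R_{\Diamond}uv := Ruv$ and $R_{\blacklozenge}uv := Rvu$, this yields the uniform fact that for either choice $\dia{F}\in\{\Diamond,\blacklozenge\}$, if $R_{\dia{F}}uv$ occurs in a sequent then $(u,v,\dia{F})$ is an edge of its propagation graph. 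In particular a single atom $R_{\dia{F}}uv$ always provides the one-step path $u,\dia{F},v$, whose string is $\dia{F}$.

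For the $\ques$-path, I would simply note that the premise $\R,\R_{\Pi}xy,R_{\ques}xy,\Gamma$ of $\negp$ contains the atom $R_{\ques}xy$, so by the observation $(x,y,\ques)$ is an edge of $PG(\R,\R_{\Pi}xy,R_{\ques}xy,\Gamma)$, and hence $x,\ques,y$ is a path from $x$ to $y$ whose string is exactly $\ques$ (for a path axiom $\ques$ is a single diamond, so this path always has precisely one edge). For the $\Pi$-path, write $\Pi=\dia{F_1}\cdots\dia{F_n}$; unfolding the definition of $\R_{\Pi}xy$, it is the chain of relational atoms $R_{\dia{F_1}}w_0w_1, R_{\dia{F_2}}w_1w_2, \ldots, R_{\dia{F_n}}w_{n-1}w_n$, where $w_0=x$, $w_n=y$, and $w_1,\ldots,w_{n-1}$ are the fresh labels introduced by the rule. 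Every one of these atoms occurs in the premise, so by the observation $(w_{j-1},w_j,\dia{F_j})$ is an edge of the propagation graph of the premise for each $j$ with $1\le j\le n$; concatenating these $n$ edges gives the path $w_0,\dia{F_1},w_1,\dia{F_2},\ldots,\dia{F_n},w_n$ from $x$ to $y$, and its string is $\dia{F_1}\cdots\dia{F_n}=\Pi$, as required.

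I do not expect a genuine obstacle here: the statement is essentially bookkeeping once the right observation is isolated. The only points that deserve care are getting the edge orientations correct in the two subcases $\dia{F_j}=\Diamond$ versus $\dia{F_j}=\blacklozenge$ — which is precisely why the propagation graph is set up to add, for every relational atom, the reversed $\blacklozenge$-labeled copy of its edge — and the degenerate boundary case $\Pi=\varepsilon$, where by the convention of Remark~\ref{rem} we have $x=y$ and the required path of string $\Pi$ is the trivial zero-edge path at $x$.
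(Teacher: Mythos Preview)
Your proof is correct and takes the same approach as the paper, which simply states that the result ``follows from the definition of $\negp$ and the definition of a propagation graph of a labeled sequent.'' Your write-up is a careful unwinding of exactly those definitions, including the key observation that $R_{\dia{F}}uv$ always contributes the edge $(u,v,\dia{F})$ to the propagation graph regardless of whether $\dia{F}=\Diamond$ or $\dia{F}=\blacklozenge$.
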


\begin{proof} Follows from the definition of $\negp$ and the definition of a propagation graph of a labeled sequent.
\end{proof}

We now show that every structural rule $\negp$ can be permuted above a propagation rule $\propr$. This lemma is vital for transforming any $\LKt + \Pl$ derivation into a $\LKt + \prop$ derivation, as it shows that all structural rules can be eliminated from a given derivation, so long as propagation rules are properly introduced. Before moving on to prove this fact however, we recall notation that was given in the introduction to Section 2. To differentiate between different path axioms in the following lemma, and make associations of paths explicit with path axioms, we use the notation $\dia{G}$ and $\dia{F}$ (potentially annotated) to represent either a white diamond $\Diamond$ or black diamond $\bldia$, and forgo the use of $\ques$ for the scope of the following lemma.

\begin{lemma}\label{Permute_Prop_and_Struc} Let $\Pset$ be a set of path axioms, $\negp \in \Pl$, $\propr \in \propl$, and $\R_{\Pi}uv := R_{\dia{G_{1}}}uz_{1}, \ldots, R_{\dia{G_{n}}}z_{n}v$. Suppose we are given a derivation that ends with:

\begin{center}
\AxiomC{$\R, \R_{\Pi}uv, R_{\dia{G}}uv, x: \dia{F} A, y:A, \Gamma$} \RightLabel{$\propr$}
\UnaryInfC{$\R, \R_{\Pi}uv, R_{\dia{G}}uv, x: \dia{F} A, \Gamma$}
\RightLabel{$\negp$}
\UnaryInfC{$\R, \R_{\Pi}uv, x: \dia{F} A, \Gamma$}
\DisplayProof
\end{center}

\noindent
where $\R_{\Pi}uv$ is active in the $\negp$ inference. Then, there exists a propagation rule $\propr' \in \propl$ such that the $\negp$ rule may be permuted upwards followed by an instance of $\propr'$ to derive the same end sequent:

\begin{center}
\AxiomC{$\R, \R_{\Pi}uv, R_{\dia{G}}uv, x: \dia{F} A, y:A, \Gamma$} \RightLabel{$\negp$}
\UnaryInfC{$\R, \R_{\Pi}uv, x: \dia{F} A, y: A, \Gamma$}
\RightLabel{$\propr'$}
\UnaryInfC{$\R, \R_{\Pi}uv, x: \dia{F} A, \Gamma$}
\DisplayProof
\end{center}

Note that 
 $\propr$ and $\negp$ may correspond to different path axioms.
\end{lemma}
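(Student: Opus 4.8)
The plan is to slide the $\negp$ inference upward past $\propr$ and then reinstate propagation with a (possibly new) instance $\propr'$ whose justifying path avoids the relational atom that $\negp$ deletes. Write $F_{\negp} := \Pi A \to \dia{G} A$ for the path axiom driving the $\negp$ step, so that $\R_{\Pi}uv$ plays the role of its antecedent relational atoms and $R_{\dia{G}}uv$ that of its single consequent atom; and let $\pi$, with string $\Pi^{\pi} = \dia{H_1}\cdots\dia{H_\ell}$, be the path from $x$ to $y$ in the propagation graph of the premise of $\propr$ that justifies it, so that $\Pi^{\pi} A \to \dia{F} A \in (\Pset \cup I(\Pset))^{*}$. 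First I would permute $\negp$ above $\propr$: applied to the premise of $\propr$ it deletes $R_{\dia{G}}uv$ and yields $\R, \R_{\Pi}uv, x:\dia{F} A, y:A, \Gamma$ (the extra formula $y:A$ sits passively in the context, and for a path rule $\negp$ the eigenvariable side condition of the general rule is vacuous since its consequent string $\dia{G}$ is a single diamond, so permuting it upward is unobstructed). When $R_{\dia{G}}uv$ is deleted, the only edges removed from the propagation graph are $(u,v,\dia{G})$ and $(v,u,\dia{G}^{-1})$; every other edge used by $\pi$ survives, and by Lemma~\ref{Ant_Path_Correspondence_Struc} the atoms $\R_{\Pi}uv$ still supply a $u$-to-$v$ path of string $\Pi = \dia{G_1}\cdots\dia{G_n}$ and, traversed backwards, a $v$-to-$u$ path of string $\dia{G_n}^{-1}\cdots\dia{G_1}^{-1}$, which is precisely the antecedent string of the inverse $I(F_{\negp})$.

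Next I would construct $\propr'$ by repairing $\pi$, arguing by induction on the number of times $\pi$ traverses one of the two deleted edges. If this number is zero, $\pi$ already lives in $PG(\R, \R_{\Pi}uv, x:\dia{F} A, y:A, \Gamma)$ and $\propr' := \propr$ suffices. Otherwise, consider the first such traversal, at step $j$ of $\pi$. If it uses $(u,v,\dia{G})$, then $\dia{H_j} = \dia{G}$, so $F_{\negp}$ is composable with $\Pi^{\pi} A \to \dia{F} A$ at $j$; splice in the surviving $u$-to-$v$ path of string $\Pi$ in place of that step to obtain a path $\pi_1$ whose string is exactly the antecedent of $F_{\negp} \triangleright^{j} (\Pi^{\pi} A \to \dia{F} A)$. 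If instead it uses $(v,u,\dia{G}^{-1})$, then $\dia{H_j} = \dia{G}^{-1}$, so $I(F_{\negp})$ is composable with $\Pi^{\pi} A \to \dia{F} A$ at $j$; splice in the backwards $v$-to-$u$ path of string $\dia{G_n}^{-1}\cdots\dia{G_1}^{-1}$ to obtain $\pi_1$ whose string is the antecedent of $I(F_{\negp}) \triangleright^{j} (\Pi^{\pi} A \to \dia{F} A)$. Since $F_{\negp} \in \Pset$ and $I(F_{\negp}) \in I(\Pset)$ both belong to $(\Pset \cup I(\Pset))^{*}$, and $(\Pset \cup I(\Pset))^{*}$ is closed under composition, in either case the string of $\pi_1$ again names an axiom of $(\Pset \cup I(\Pset))^{*}$; moreover $\pi_1$ uses the deleted edges strictly fewer times than $\pi$. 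The induction hypothesis applied to $\pi_1$ then yields a path $\pi'$ from $x$ to $y$ in $PG(\R, \R_{\Pi}uv, x:\dia{F} A, y:A, \Gamma)$ whose string $\Pi'$ satisfies $\Pi' A \to \dia{F} A \in (\Pset \cup I(\Pset))^{*}$. Letting $\propr'$ be the propagation instance justified by $\pi'$, the permuted derivation, namely $\negp$ applied to the premise of $\propr$ followed by $\propr'$, has end sequent $\R, \R_{\Pi}uv, x:\dia{F} A, \Gamma$, as required. (The boundary cases where $\Pi$ is empty, so that $R_{\dia{G}}uv$ is a self-loop or an equation, are handled the same way, with the spliced subpath being the trivial path at a single node, appealing to the equality machinery of Remark~\ref{rem}.)

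I expect the main obstacle to be exactly this string bookkeeping: one must see that every way $\pi$ can cross the atom deleted by $\negp$ corresponds to composing, at the matching position and in the matching orientation, either $F_{\negp}$ or its inverse $I(F_{\negp})$ into the axiom justifying $\propr$. This is where defining the propagation rules relative to the completion $(\Pset \cup I(\Pset))^{*}$, rather than merely $\Pset$, earns its keep (cf.\ the examples above on the necessity of inverses and of compositions); aligning the orientation of the traversed edge with the inverse-versus-noninverse choice, and the step index along $\pi$ with the composition index, is the delicate point, after which closure of the completion under composition finishes the argument.
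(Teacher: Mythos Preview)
Your proposal is correct and follows essentially the same approach as the paper: both arguments repair the propagation path by splicing the $\R_{\Pi}uv$-detour in place of each traversal of the deleted edge $R_{\dia{G}}uv$ (in either orientation), and both justify the resulting propagation rule by observing that each splice corresponds to composing $F_{\negp}$ or $I(F_{\negp})$ into the axiom for $\propr$, hence stays inside $(\Pset\cup I(\Pset))^{*}$. The only difference is presentational---you structure the splicing as an induction on the number of traversals of the deleted edge, whereas the paper performs all replacements simultaneously; your version is arguably more explicit about why the iterated compositions remain in the completion.
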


\begin{proof} Suppose we are given a derivation ending with a $\propr$ inference followed by a $\negp$ inference and let $\R' = \R, R_{\dia{G_{1}}}uz_{1}, \ldots, R_{\dia{G_{n}}}z_{n}v$. Moreover, due to the application of $\propr$, there exists a path $\pi'$ of the form $x, \dia{F_{1}}, \ldots, \dia{F_{n}}, y$ from $x$ to $y$ in $PG(\R', R_{\dia{G}}uv, x: \dia{F} A, y:A, \Gamma)$. In the case where the relational atom $R_{\dia{G}}uv$ principal in $\negp$ \emph{does not} lay along the path $\pi'$ used in applying $\propr$, the two rules may be freely permuted since there is no interaction between the two. 

We therefore assume that the relational atom $R_{\dia{G}}uv$ lies along the path $\pi'$ from $x$ to $y$. By this assumption, we know that there exists an axiom $F = \dia{F_{1}} \cdots \dia{F_{m}} A \rightarrow \dia{F} A$ = $\Pi' A \rightarrow \dia{F} A \in (P \cup I(P))^{*}$, where $\Pi' = \dia{F_{1}} \cdots \dia{F_{m}}$ is the string of the path $\pi'$, corresponding to the application of $\propr$. Moreover, by our assumption that $\negp$ deletes the relational atom $R_{\dia{G}}uv$ that occurs along the path $\pi'$, the structural rule $\negp$ corresponds to a path axiom $G = \dia{G_{1}} \cdots \dia{G_{n}} A \rightarrow \dia{G} A$ where $\dia{G}$ = $\dia{F_{i}}$ for some $i \in \{1, \ldots ,m\}$. To prove the claim we must show that there exists a path $\sigma$ from $x$ to $y$ in $PG(\R', x: \dia{F} A, y:A, \Gamma)$ such that $\Sigma p \rightarrow \dia{F} p \in (P \cup I(P))^{*}$ with $\Sigma$ the string of the path $\sigma$. We construct the path $\sigma$ 
as follows: (i) replace each $u, \dia{G}, v$ in $\pi'$ with $u, \dia{G_{1}}, z_{1}, \ldots, z_{n}, \dia{G_{n}}, v$, and (ii) replace each $v, \dia{G}^{-1}, u$ in $\pi'$ with $v, \dia{G_{n}}^{-1}, z_{n}, \ldots, z_{1}, \dia{G_{1}}^{-1}, u$. Taking $\Sigma
$ to be the string of $\sigma$, we know that $\Sigma A \rightarrow \dia{F} A \in (P \cup I(P))^{*}$ since the operations performed in steps (i) and (ii) above correspond to compositions of the axioms $G$ and $I(G)$ with $F$. Let $\propr'$ be the propagation rule corresponding to the path axiom $\Sigma A \rightarrow \dia{F} A$. Since the path $\sigma$ only relies on relational atoms in $\R'$, the rule $\propr'$ may be applied after $\negp$.
\end{proof}

\begin{example} We give an example of permuting a structural rule $\negp$ above a propagation rule $\propr$. Let $\Pset := \{F,G\}$ with $F := \Diamond \blacklozenge A \rightarrow \Diamond A$ and $G  := \blacklozenge \Diamond \Diamond A \rightarrow \blacklozenge A$, where our propagation rules are defined relative to $(P \cup I(P))^{*}$. Let the application of $\propr$ correspond to the axiom $F$ and the application of $\negp$ correspond to $G$. Our derivation is given below left with the propagation graph of the initial sequent below right:

\begin{center}
\begin{tabular}{cc}
\vspace*{1 em}
\ \\
\AxiomC{}
\RightLabel{$\id$}
\UnaryInfC{$Rxv,Rxz,Rzy,Ryv,x:\Diamond p, y:p,y:\overline{p}$}
\RightLabel{$\propr$}
\UnaryInfC{$Rxv,Rxz,Rzy,Ryv,x:\Diamond p, y:\overline{p}$}
\RightLabel{$\negp$}
\UnaryInfC{$Rxv,Rxz,Rzy,x:\Diamond p, y:\overline{p}$}
\DisplayProof
&
\begin{footnotesize}
\xymatrix{
		&  \overset{\boxed{\Diamond p}}{x}\ar@{->}[dd]\ar@{->}[rrr] \ar@/^-1pc/@{.>}[dd]|-{\Diamond} \ar@/^1pc/@{.>}[rrr]|-{\Diamond} 	&   &   & \overset{\boxed{\emptyset}}{v}\ar@/^1pc/@{.>}[dd]|-{\blacklozenge} \ar@/^1pc/@{.>}[lll]|-{\blacklozenge}	&   &	\\
		& & & & & & \\
		& \overset{\boxed{\emptyset}}{z}\ar@{->}[rrr] \ar@/^-1pc/@{.>}[uu]|-{\blacklozenge}\ar@/^-1pc/@{.>}[rrr]|-{\Diamond}  &   	&	& \overset{\boxed{p, \overline{p}}}{y}\ar@/^1pc/@{.>}[uu]|-{\Diamond}\ar@/^-1pc/@{.>}[lll]|-{\blacklozenge}\ar@{->}[uu] &   &
}
\end{footnotesize}
\end{tabular}
\end{center}
The $\propr$ rule is applicable to the top sequent above because of the path $x$,$\Diamond$,$v$,$\blacklozenge$,$y$ whose string is $\Diamond \blacklozenge$, which occurs in the antecedent of $F$. However, we can see that the structural rule $\negp$ deletes the relational atom $Ryv$ that gives rise to this path. If we were to apply the $\negp$ rule first (as shown below left), the conclusion would have the propagation graph shown below right:

\begin{center}
\begin{tabular}{cc}
\raisebox{-2em}{
\AxiomC{}
\RightLabel{$\id$}
\UnaryInfC{$Rxv,Rxz,Rzy,Ryv,x:\Diamond p, y:p,y:\overline{p}$}
\RightLabel{$\negp$}
\UnaryInfC{$Rxv,Rxz,Rzy,x:\Diamond p, y:p,y:\overline{p}$}
\DisplayProof
}
&
\xymatrix{
		&  \overset{\boxed{\Diamond p}}{x}\ar@{->}[dd]\ar@{->}[rrr] \ar@/^-1pc/@{.>}[dd]|-{\Diamond} \ar@/^1pc/@{.>}[rrr]|-{\Diamond} 	&   &   & \overset{\boxed{\emptyset}}{v}\ar@/^1pc/@{.>}[lll]|-{\blacklozenge}	&   &	\\
		& & & & & & \\
		& \overset{\boxed{\emptyset}}{z}\ar@{->}[rrr] \ar@/^-1pc/@{.>}[uu]|-{\blacklozenge}\ar@/^-1pc/@{.>}[rrr]|-{\Diamond}  &   	&	& \overset{\boxed{p, \overline{p}}}{y}\ar@/^-1pc/@{.>}[lll]|-{\blacklozenge} &   &
}
\end{tabular}
\end{center}

%

We construct a new path from $x$ to $y$ following the procedure explained in Lemma \ref{Permute_Prop_and_Struc} by replacing $v, \blacklozenge, y$ with $v, \blacklozenge, x, \Diamond, z, \Diamond, y$ to obtain the path $x, \Diamond, v, \blacklozenge, x, \Diamond, z, \Diamond, y$. Observe that the axiom $G \triangleright^{2} F = \Diamond \blacklozenge \Diamond \Diamond A \rightarrow \Diamond A$ is an element of the completion $(\Pset \cup I(\Pset))^{*}$. Thus, there exists a propagation rule $\propr'$ corresponding to $\Diamond \blacklozenge \Diamond \Diamond A \rightarrow \Diamond A$ which may be applied to the end sequent above to obtain the desired conclusion.

\end{example}

\begin{lemma}\label{internal} Every derivation in $\LKt + \Pl +\prop$ is effectively relatable to a derivation in $\LKt + \prop$.
\end{lemma}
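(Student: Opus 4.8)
The plan is to prove Lemma~\ref{internal} by iterated application of Lemma~\ref{Permute_Prop_and_Struc}, using a suitable termination measure on derivations. Given a derivation $\der$ in $\LKt + \Pl + \prop$, I would argue that if $\der$ uses any instance of a structural rule $\negp \in \Pl$, then we can effectively transform $\der$ into another derivation $\der'$ of the same end sequent with strictly fewer uses of $\Pl$ rules (or with some other complexity measure strictly decreased); iterating, we reach a derivation in $\LKt + \prop$.

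First I would isolate a \emph{topmost} occurrence of a structural rule $\negp \in \Pl$ in $\der$, i.e.\ one such that the subderivation ending in that inference contains no other $\Pl$ inference. The subderivation above this $\negp$ inference is thus a derivation in $\LKt + \prop$. I would then push this $\negp$ inference upward through the rules of $\LKt + \prop$ above it by a standard permutation argument. The key nontrivial case is permuting $\negp$ above a propagation rule $\propr$, which is exactly Lemma~\ref{Permute_Prop_and_Struc}: when the relational atom deleted by $\negp$ does not lie on the path witnessing $\propr$, the rules commute freely; when it does, Lemma~\ref{Permute_Prop_and_Struc} supplies a replacement propagation rule $\propr' \in \prop$ (corresponding to a composition axiom in $(P \cup I(P))^{*}$) so that $\negp$ followed by $\propr'$ derives the same sequent. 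For the logical rules $\disr, \conr, \boxr^{*}, \bboxr^{*}, \diar, \bdiar$ of $\LKt$, permuting $\negp$ upward is routine: $\negp$ only deletes relational atoms and acts disjointly from the principal formula of a logical rule, with the usual care needed for the eigenvariable side conditions of $\boxr^{*}$ and $\bboxr^{*}$ (the label deleted by $\negp$ is distinct from the freshly introduced eigenvariable, or can be renamed to be so using admissible substitution, Lemma~\ref{g3ktstrucadmiss}). Weakening and contraction, being admissible (Lemma~\ref{g3ktstrucadmiss}), cause no difficulty. Eventually $\negp$ reaches the initial sequents $\id$, where it is simply absorbed, since deleting relational atoms from an instance of $\id$ still yields an instance of $\id$. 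This eliminates the chosen $\negp$ inference at the cost of possibly introducing new $\propr'$ inferences, but crucially \emph{no} new $\Pl$ inferences; hence the total number of $\Pl$ inferences in the derivation strictly decreases.

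The main obstacle I expect is making the termination argument fully rigorous: naively, pushing one $\negp$ upward past branching rules like $\conr$ may duplicate that $\negp$ inference, and past a $\propr$ it may be replaced by a $\propr'$ with a longer associated path, so one must choose the induction measure carefully. The natural fix is a lexicographic or nested induction: an outer induction on the number of $\Pl$ inferences in $\der$, and for the step that removes a single topmost $\Pl$ inference, an inner induction on the height of the subderivation above it (the permutation strictly decreases this height at each step, and the rules above a \emph{topmost} $\negp$ are all from $\LKt + \prop$, so no new $\Pl$ inferences appear even when $\negp$ is duplicated). One must also check that all transformations are \emph{effective}, i.e.\ given by local rewriting at the level of proof rules, which is immediate from the explicit constructions in Lemma~\ref{Permute_Prop_and_Struc} and the routine permutation cases; composing finitely many such local transformations yields the claimed effective relation between the input $\LKt + \Pl + \prop$ derivation and the output $\LKt + \prop$ derivation.
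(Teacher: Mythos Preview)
Your proposal is correct and follows essentially the same strategy as the paper: isolate a topmost $\negp \in \Pl$ inference and permute it upward through the rules of $\LKt + \prop$ until it is absorbed at the initial sequents, invoking Lemma~\ref{Permute_Prop_and_Struc} for the $\propr$ case; your nested induction on (number of $\Pl$ inferences, height above the topmost one) is in fact more carefully stated than the paper's bare ``induction on height''. One small point you gloss over as ``routine'': the $\diar$ and $\bdiar$ cases are not entirely disjoint from $\negp$ when the deleted relational atom $R_{\ques}xy$ is the very one used as the side condition of the diamond rule---in that situation the paper (cases (v) and (vi)) replaces the diamond inference by a $\propr$ instance after permuting $\negp$ up, which is licit because the path axiom $\Pi A \rightarrow \ques A$ underlying $\negp$ lies in $P \subseteq (P \cup I(P))^{*}$ and the required path through $\R_{\Pi}xy$ survives in the conclusion.
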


\begin{proof} We prove the result by induction on the height of the given derivation in $\LKt + \Pl +\prop$; we consider the topmost application of $\negp \in \Pl$ (the general result where there are $n$ rules of $\Pl$ in our derivation is immediately obtained by applying the given procedure and successively deleting the topmost occurrences). That the output derivation is effectively relatable to the input is an immediate consequence of the rule-by-rule transformations that we use.

\textbf{Base case.} Suppose the rule $\negp$ is used on an axiom in $\LKt + \Pl +\prop$:

\begin{center}
\AxiomC{}
\RightLabel{$\id$}
\UnaryInfC{$\R, \R_{\Pi}xy, Rxy, z:p, z:\overline{p}, \Gamma $}
\RightLabel{$\negp$}
\UnaryInfC{$\R, \R_{\Pi}xy, z:p, z:\overline{p}, \Gamma $}
\DisplayProof
\end{center}

Then, it is easy to see that the conclusion is an axiom as well regardless of whether $z = x$, $z = y$, or $x \neq z \neq y$.

\textbf{Inductive step.} We show that $\lp \in \Pl$ can be permuted upward with each rule in $\LKt + \prop$:\\

(i) Permuting $\disr$ with $\negp$:\\

\begin{center}
\begin{tabular}{c c}

\AxiomC{$\R, \R_{\Pi}xy, Rxy,, z:A,z:B, \Gamma $}
\RightLabel{$\disr$}
\UnaryInfC{$\R, \R_{\Pi}xy, Rxy, z:A\lor B, \Gamma$}
\RightLabel{$\negp$}
\UnaryInfC{$\R, \R_{\Pi}xy, z:A\lor B, \Gamma$}
\DisplayProof

&

\AxiomC{$\R, \R_{\Pi}xy, Rxy, z:A,z:B, \Gamma $}
\RightLabel{$\negp$}
\UnaryInfC{$\R, \R_{\Pi}xy, z:A,z:B,  \Gamma$}
\RightLabel{$\disr$}
\UnaryInfC{$\R, \R_{\Pi}xy, z:A\lor B, \Gamma$}
\DisplayProof

\end{tabular}
\end{center}

(ii) Permuting $\conr$ with $\negp$:\\

\begin{center}
\AxiomC{$\R, \R_{\Pi}xy, Rxy, x:A, \Gamma$}
\AxiomC{$\R, \R_{\Pi}xy, Rxy, x:B, \Gamma$}
\RightLabel{$\conr$}
\BinaryInfC{$\R, \R_{\Pi}xy, Rxy, x:A\land B, \Gamma$}
\RightLabel{$\negp$}
\UnaryInfC{$\R, \R_{\Pi}xy, x:A\land B, \Gamma$}
\DisplayProof
\end{center}

\begin{center}
\AxiomC{$\R, \R_{\Pi}xy, Rxy, z:A, \Gamma$}
\RightLabel{$\negp$}
\UnaryInfC{$\R, \R_{\Pi}xy, z:A, \Gamma$}
\AxiomC{$\R, \R_{\Pi}xy, Rxy, z:B, \Gamma$}
\RightLabel{$\negp$}
\UnaryInfC{$\R, \R_{\Pi}xy, z:B, \Gamma$}
\RightLabel{$\conr$}
\BinaryInfC{$\R, \R_{\Pi}xy, z:A\land B, \Gamma$}
\DisplayProof
\end{center}

(iii) Permuting $\bboxr$ with $\negp$:\\

\begin{center}
\begin{tabular}{c c}
\AxiomC{$\R, \R_{\Pi}xy, Rxy, Rvu, v:A, \Gamma$}
\RightLabel{$\bboxr$}
\UnaryInfC{$\R, \R_{\Pi}xy, Rxy, u:\blacksquare A, \Gamma$}
\RightLabel{$\negp$}
\UnaryInfC{$\R, \R_{\Pi}xy, u:\blacksquare A, \Gamma$}
\DisplayProof

&

\AxiomC{$\R, \R_{\Pi}xy, Rxy, Rvu, v:A, \Gamma$}
\RightLabel{$\negp$}
\UnaryInfC{$\R, \R_{\Pi}xy, Rvu, v:A, \Gamma$}
\RightLabel{$\bboxr$}
\UnaryInfC{$\R, \R_{\Pi}xy, u:\blacksquare A, \Gamma$}
\DisplayProof

\end{tabular}
\end{center}

(iv) Permuting $\boxr$ with $\negp$:

\begin{center}
\begin{tabular}{c c}

\AxiomC{$\R, \R_{\Pi}xy, Rxy, Ruv, v:A, \Gamma$}
\RightLabel{$\boxr$}
\UnaryInfC{$\R, \R_{\Pi}xy, Rxy, u:\Box A, \Gamma$}
\RightLabel{$\negp$}
\UnaryInfC{$\R, \R_{\Pi}xy, u:\Box A, \Gamma$}
\DisplayProof

&

\AxiomC{$\R, \R_{\Pi}xy, Rxy, Ruv, v:A, \Gamma$}
\RightLabel{$\negp$}
\UnaryInfC{$\R, \R_{\Pi}xy, Ruv, v:A, \Gamma$}
\RightLabel{$\boxr$}
\UnaryInfC{$\R, \R_{\Pi}xy, u:\Box A, \Gamma$}
\DisplayProof
\end{tabular}
\end{center}

(v) Permuting $\bdiar$ with $\negp$: We consider the case where $Rxy$ is used in both rules; the other cases are easily shown.\\

\begin{center}
\begin{tabular}{c c}
\AxiomC{$\R, \R_{\Pi}xy, Rxy, x:A, y:\bldia A, \Gamma$}
\RightLabel{$\bdiar$}
\UnaryInfC{$\R, \R_{\Pi}xy, Rxy,y:\bldia A, \Gamma$}
\RightLabel{$\negp$}
\UnaryInfC{$\R, \R_{\Pi}xy, y:\bldia A, \Gamma$}
\DisplayProof

&

\AxiomC{$\R, \R_{\Pi}xy, Rxy, x:A, y:\bldia A, \Gamma$}
\RightLabel{$\negp$}
\UnaryInfC{$\R, \R_{\Pi}xy, x:A, y:\bldia A, \Gamma$}
\RightLabel{$\propr$}
\UnaryInfC{$\R, \R_{\Pi}xy, y:\bldia A, \Gamma$}
\DisplayProof
\end{tabular}
\end{center}

(vi) Permuting $\diar$ with $\negp$: Similar to the last case we only consider when $Rxy$ is used in both rules.

\begin{center}
\begin{tabular}{c c}

\AxiomC{$\R, \R_{\Pi}xy, Rxy, y:A, x:\Diamond A, \Gamma$}
\RightLabel{$\diar$}
\UnaryInfC{$\R, \R_{\Pi}xy, Rxy, x:\Diamond A, \Gamma$}
\RightLabel{$\negp$}
\UnaryInfC{$\R, \R_{\Pi}xy, x:\Diamond A, \Gamma$}
\DisplayProof

&

\AxiomC{$\R, \R_{\Pi}xy, Rxy, y:A, x:\Diamond A, \Gamma$}
\RightLabel{$\negp$}
\UnaryInfC{$\R, \R_{\Pi}xy, y:A, x:\Diamond A, \Gamma$}
\RightLabel{$\propr$}
\UnaryInfC{$\R, \R_{\Pi}xy, x:\Diamond A, \Gamma$}
\DisplayProof
\end{tabular}
\end{center}

(vii) Permuting $\propr$ with $\negp$: Follows from Lemma \ref{Permute_Prop_and_Struc}.
\end{proof}



The previous lemma demonstrates that propagation rules are sufficient to eliminate each structural rule $\negp$ from a given derivation of $\LKt + \Pl$. The consequence of this elimination procedure is that each derivation of $\LKt + \Pl$ may be effectively transformed into a derivation in $\LKt + \prop$. Our goal now is to prove that every derivation in $\LKt + \prop$ (of a labeled polytree sequent) contains nothing but labeled polytree sequents, which implies that each labeled sequent in such a derivation is in the domain of our translation function $\UTGd$. This fact proves that the effective transformation of the previous lemma has gotten us closer to completing the full translation of proofs from $\LKt + \Pl$ to $\SKT + \Pd$ since our derivations contain labeled sequents that are readily convertible to nested sequents via the function $\UTGd$.

\begin{lemma}\label{Lpath2} Let $P$ be a set of path axioms. Every $\LKt + \prop$ proof of a labeled polytree sequent consists solely of labeled polytree sequents.
\end{lemma}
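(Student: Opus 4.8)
The plan is to follow the template of Lemma~\ref{Lab-equiv}: argue by contradiction, assuming a derivation $\der$ in $\LKt + \prop$ of a labeled polytree sequent $\R,\Gamma$ contains some sequent $\R',\Gamma'$ whose underlying undirected graph is not a tree---hence is disconnected or contains an undirected cycle---and trace this defect downward through $\der$ to contradict the fact that the graph of $\R,\Gamma$ is a tree. The two cases to handle are exactly the two appearing in Lemma~\ref{Lab-equiv}, so the only genuinely new work is to check that the propagation rules $\propr$ do not interfere.

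The key fact I would isolate is that an instance of $\propr$ does not change the graph of the sequent at all. Such an instance leaves the set $\R$ of relational atoms untouched and only erases the labeled formula $y:A$; and $y$ remains a node of the conclusion because the side condition supplies a path from $x$ to $y$ in the propagation graph of the premise---so either that path has positive length, whence $y$ occurs in some relational atom of $\R$, or it has length $0$, whence $x=y$ and $y$ still occurs in $x:\ques A$. For all the remaining rules the behaviour is exactly as recorded in Lemma~\ref{Lab-equiv}: $\id$ is an axiom; $\disr$, $\conr$, $\diar$, $\bdiar$ preserve the graph (and for the branching rule $\conr$ both premises carry the graph of the conclusion); and $\boxr$, $\bboxr$ delete a single relational atom $Rxy$ whose endpoint $y$ is an eigenvariable occurring in the premise only within $Rxy$ and $y:A$, hence of undirected degree $1$.

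Given this, the disconnected case goes through because, read from premise to conclusion, no rule of $\LKt + \prop$ can merge connected components: the graph-preserving rules obviously cannot, and $\boxr$, $\bboxr$ merely delete a pendant edge together with its degree-$1$ endpoint. So disconnectedness would propagate down to $\R,\Gamma$, a contradiction. For the cycle case, the only relational-atom-deleting rules $\boxr$, $\bboxr$ remove a pendant edge (their eigenvariable condition forces degree $1$), which therefore lies on no cycle; hence every cycle of a premise survives in the conclusion, and a downward induction forces a cycle into $\R,\Gamma$, again a contradiction. In either case $\R',\Gamma'$ cannot exist, so every sequent occurring in $\der$ is a labeled polytree sequent.

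The main obstacle---really the only place where any thought is required---is the observation that $\propr$ preserves the graph, and in particular that erasing $y:A$ never drops the node $y$; once that is in place the argument is a routine transcription of Lemma~\ref{Lab-equiv}, and nothing depends on which path axioms make up $\Pset$ (the completion $(\Pset\cup I(\Pset))^{*}$ and the string labelling the propagation path play no role in any of the graph bookkeeping above).
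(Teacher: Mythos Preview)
Your proposal is correct and follows essentially the same approach as the paper: the paper's proof simply says ``Similar to the proof of Lemma~\ref{Lab-equiv}. Observe that all rules of $\LKt + \prop$ preserve disconnectivity and cycles downward in an inference.'' You have supplied the details the paper omits, in particular the verification that $\propr$ leaves the underlying graph unchanged (and hence cannot destroy a disconnection or a cycle), together with the observation that the eigenvariable condition on $\boxr$/$\bboxr$ forces the deleted edge to be pendant. Your remark that the erased label $y$ survives as a vertex---because the side condition guarantees either a positive-length path (so $y$ occurs in $\R$) or $x=y$---is a genuine point that the paper leaves entirely implicit, and it is needed to rule out the possibility that deleting $y:A$ removes an isolated component.
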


\begin{proof} Similar to the proof of Lemma \ref{Lab-equiv}. Observe that all rules of $\LKt + \prop$ preserve disconnectivity and cycles downward in an inference.
\end{proof}

Before showing that every derivation in $\LKt + \prop$ is translatable to a derivation in $\DKT + \propdeep$, we must ensure that applications of propagation rules can be properly translated from the labeled to the deep nested setting. The following lemma proves that every path in a labeled polytree sequent is present in its translatee under $\UTGd$, that is, translating a labeled polytree sequent into a nested sequent preserves paths. We use this fact (Lemma~\ref{Applicability_LabProp_Implies_DKTProp}) in the sequel to effectively translate instances of propagation rules.

\begin{lemma}\label{Applicability_LabProp_Implies_DKTProp}
For any labeled polytree sequent $\R, \Gamma$ with a path $\pi$ from a label $x$ to a label $y$ in its propagation graph, the path $\pi$ exists in $PG(\UTGd_{z}(\R, \Gamma))$ from $x$ to $y$ (where $z$ is an arbitrary node in $\R, \Gamma$).
\end{lemma}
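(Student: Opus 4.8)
The plan is to prove the following slightly stronger statement, from which the lemma is immediate: \emph{if $\R,\Gamma$ is a labeled polytree sequent with associated labeled polytree $G=(V,E,L)$, then $PG(\UTGd_z(\R,\Gamma)) = PG(\R,\Gamma)$ for every $z \in V$}, where we identify each node of $tree(\UTGd_z(\R,\Gamma))$ with the label that the recursive definition of $\UTGd$ (Section~\ref{sec:interpret-lab-polytree-as-nested}) assigns to it. Given this, a path $\pi$ from $x$ to $y$ in $PG(\R,\Gamma)$ is literally the same path in $PG(\UTGd_z(\R,\Gamma))$, and $x,y$ are among its nodes since they are labels of $\R,\Gamma$; so the lemma follows. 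This formulation covers the ``arbitrary node $z$'' of the statement directly, since $\UTGd_z$ is parametrized by $z$; alternatively one could establish the identity for a single root and then invoke Lemma~\ref{DE_For_Diff_Node} together with Lemma~\ref{DisEquiv_Sequents_Identical_PG}.

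I would argue by induction on $|V|$ (equivalently, on $|E| = |V|-1$, since a labeled polytree is a tree). In the base case $E = \emptyset$, connectedness of the polytree forces $V = \{z\}$, so $\UTGd_z(G) = L(z)$ has no nestings and both $PG(\UTGd_z(G))$ and $PG(G)$ consist of the single node $z$ with no edges. For the inductive step I would unfold the recursive clause of $\UTGd_z$: writing the forward edges at $z$ as $(z,y_1),\dots,(z,y_n)$ and the backward edges as $(z_1,z),\dots,(z_k,z)$, with $G[H_i]_{y_i}$ and $G[H_j']_{z_j}$ the corresponding polytree subgraphs, we have
\[\UTGd_z(G) = L(z),\ \circ\{\UTGd_{y_1}(H_1)\},\dots,\circ\{\UTGd_{y_n}(H_n)\},\ \bullet\{\UTGd_{z_1}(H_1')\},\dots,\bullet\{\UTGd_{z_k}(H_k')\}.\]
Since $G$ is a tree, deleting $z$ splits it into the vertex-disjoint pieces $H_1,\dots,H_n,H_1',\dots,H_k'$; hence $V$ is the disjoint union of $\{z\}$ with their vertex sets, and $E$ is the disjoint union of $\{(z,y_i)\}_i$, $\{(z_j,z)\}_j$, and the edge sets of the $H_i$ and $H_j'$. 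Each piece has fewer nodes than $G$, so the induction hypothesis gives $PG(\UTGd_{y_i}(H_i)) = PG(H_i)$ and $PG(\UTGd_{z_j}(H_j')) = PG(H_j')$.

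The remaining step is to compute both propagation graphs from these decompositions and check that they coincide. On the labeled side, $PG(\R,\Gamma)$ is generated edge-by-edge: $Rzy_i$ contributes $(z,y_i,\Diamond),(y_i,z,\blacklozenge)$; $Rz_jz$ contributes $(z_j,z,\Diamond),(z,z_j,\blacklozenge)$; and the relational atoms internal to each $H_i$ (resp.\ $H_j'$) contribute exactly $PG(H_i)$ (resp.\ $PG(H_j')$). On the nested side, in $tree(\UTGd_z(G))$ the node $z$ has the $y_i$ as its $\circ$-children and the $z_j$ as its $\bullet$-children, so the definition of the propagation graph of a nested sequent contributes $(z,y_i,\Diamond),(y_i,z,\blacklozenge)$ for each $\circ$-child $y_i$; $(z,z_j,\blacklozenge),(z_j,z,\Diamond)$ for each $\bullet$-child $z_j$; and, within each subtree $tree(\UTGd_{y_i}(H_i))$ (resp.\ $tree(\UTGd_{z_j}(H_j'))$), exactly the edges of $PG(\UTGd_{y_i}(H_i)) = PG(H_i)$ (resp.\ $PG(\UTGd_{z_j}(H_j')) = PG(H_j')$), using the induction hypothesis and the locality of the $\circ$/$\bullet$-child relation. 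Comparing, the two edge sets agree term for term and both graphs have node set $V$, so $PG(\UTGd_z(\R,\Gamma)) = PG(\R,\Gamma)$.

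The step I expect to require the most care is precisely this last comparison, and in particular the orientation of the edge labels in the $\bullet$-nesting case: a backward edge $(z_j,z)\in E$ is the relational atom $Rz_jz$, which yields the pair $\{(z_j,z,\Diamond),(z,z_j,\blacklozenge)\}$; in $tree(\UTGd_z(G))$ this same edge makes $z_j$ a $\bullet$-child of $z$, for which the nested-sequent propagation graph prescribes $\{(z,z_j,\blacklozenge),(z_j,z,\Diamond)\}$ --- the same two labeled edges, matched precisely because the $\bullet$ clause of the propagation-graph definition is the direction-reversal of the $\circ$ clause. Everything else is routine unfolding of the two definitions.
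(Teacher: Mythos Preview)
Your proof is correct, and it takes a different (and somewhat cleaner) route than the paper's. The paper does not prove the equality $PG(\UTGd_z(\R,\Gamma)) = PG(\R,\Gamma)$ you establish; instead it argues path-by-path, by induction on the \emph{length} of the given path $\pi$. It first fixes the translation root to be $x$ (the start of $\pi$) rather than an arbitrary $z$, invoking precisely the alternative you mention---Lemma~\ref{DE_For_Diff_Node} together with Lemma~\ref{DisEquiv_Sequents_Identical_PG}---to argue that the choice of root does not matter. The base case handles a single edge (a forward edge $Rxy$ yields a $\circ$-child and hence $(x,y,\Diamond)$ in the nested propagation graph), and the inductive step splits a path of length $n+1$ at its penultimate node. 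Your structural induction on $|V|$ instead proves once and for all that the two propagation graphs coincide as labeled directed graphs, from which preservation of \emph{every} path is immediate; this avoids re-invoking the display-equivalence lemmas and isolates the one genuinely delicate point (the orientation of labels in the $\bullet$ clause), which you handle correctly. The paper's argument is shorter on the page but leans more on the earlier lemmas; yours exposes the underlying structural identity and would transfer more readily if one later needed to compare other graph-level invariants of the two representations.
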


\begin{proof} Let $\R, \Gamma$ be a labeled polytree sequent with a path $\pi$ from $x$ to $y$ in its propagation graph. We translate $\R, \Gamma$ to a nested sequent relative to the node $x$ (as opposed to an arbitrary node $z$) and let the nodes in $PG(\UTGd_{x}(\R, \Gamma))$ be the same as those in the given labeled polytree sequent. Note that by Lemma \ref{DE_For_Diff_Node}, translating $\R, \Gamma$ relative to any label yields a display equivalent sequent, and by Lemma \ref{DisEquiv_Sequents_Identical_PG} the propagation graphs of all such sequents are identical. Therefore, the claim will hold regardless of the node chosen to translate from, meaning we are permitted to translate from $x$. We now prove the claim by induction on the length of the path connecting $x$ and $y$.

\textbf{Base case.} The case when the path from $x$ to $y$ is of length $0$ is trivial, so we show the case when the path from $x$ to $y$ is of length $1$. Suppose that there is a forward edge from $x$ to $y$, that is, $\pi = x,\Diamond,y$ (the case when there is a backward edge from $x$ to $y$ is similar). Then, $\UTGd_{x}(\R, \Gamma)$ will be a nested sequent with a $\circ$-edge from $x$ to $y$, and so the labeled edge $(x,y,\Diamond)$ is in the propagation graph. 


\textbf{Inductive step.} Suppose there is a path $x,...,z,\ques,y$ from $x$ to $y$ of length $n+1$. Therefore, there is a path of length $n$ from $x$ to $z$, and a path of length $1$ from $z$ to $y$ in $PG(\R, \Gamma)$. By the inductive hypothesis, the path from $x$ to $z$ occurs in $\UTGd_{x}(\R, \Gamma)$. By the base case, the path $z, \ques, y$ also occurs in $\UTGd_{x}(\R, \Gamma)$. Therefore, the path $x,...,z,\ques,y$ is in $\UTGd_{x}(\R, \Gamma)$.
\end{proof}

\begin{lemma}\label{G3Kt-Path-to-DKT-Path} Every derivation of a sequent $\R,\Gamma$ in $\LKt + \prop$ consisting solely of labeled polytree sequents is effectively relatable to a derivation of $\UTGd(\R,\Gamma)$ in $\DKT + \propdeep$.
\end{lemma}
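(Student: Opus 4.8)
The plan is to argue by induction on the height of the given $\LKt + \prop$ derivation, reusing almost verbatim the template of Lemma~\ref{G3KT_to_DKT} for all the rules of $\LKt$ and treating the propagation rule $\propr$ as the single genuinely new case. By hypothesis every sequent in the derivation is a labeled polytree sequent (which would in any case follow from Lemma~\ref{Lpath2}), so every sequent we translate lies in the domain of $\UTGd$. For the base case, an initial sequent $\R, x:p, x:\overline{p}, \Gamma$ translates under $\UTGd$ to a nested sequent of the form $X[p, \overline{p}]$, which is an instance of $\id$ in $\DKT + \propdeep$.

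For the inductive step, the rules $\disr$, $\conr$, $\bboxr$, $\boxr$, $\bdiar$, and $\diar$ are handled exactly as in the proof of Lemma~\ref{G3KT_to_DKT}: one translates the premise(s) and conclusion of the last inference, applies the matching $\DKT$ rule (preceded, in the $\bboxr$/$\boxr$ cases, by an admissible weakening from Lemma~\ref{dktstrucadmiss} to reintroduce the copied principal formula), and, in the $\diar$/$\bdiar$ cases, selects the appropriate one of the two $\DKT$ diamond rules according to whether the unique undirected path from the chosen root passes through the relevant successor label. Throughout, the root from which we translate may be chosen freely, since by Lemma~\ref{DE_For_Diff_Node} any two choices yield display-equivalent nested sequents, which are interderivable in $\DKT + \propdeep$ by Lemma~\ref{dktstrucadmiss}; this flexibility is what lets us position the principal formula at the top level of the nested sequent when needed.

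The new case is a topmost application of $\propr \in \prop$:

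\begin{center}
\AxiomC{$\R, x:\ques A, y:A, \Gamma$}
\RightLabel{$\propr$}
\UnaryInfC{$\R, x:\ques A, \Gamma$}
\DisplayProof
\end{center}

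\noindent
where there is a path $\pi$ from $x$ to $y$ in $PG(\R, x:\ques A, y:A, \Gamma)$ with string $\Pi$ and $\Pi A \rightarrow \ques A \in (P \cup I(P))^{*}$. By the induction hypothesis there is a $\DKT + \propdeep$ derivation of $\UTGd(\R, x:\ques A, y:A, \Gamma)$; translating from the root $x$ (legitimate by Lemmas~\ref{DE_For_Diff_Node} and~\ref{DisEquiv_Sequents_Identical_PG}), this nested sequent has the form $X[\ques A]_{x}[A]_{y}$, and the node set of $PG(X[\ques A]_{x}[A]_{y})$ is exactly the set of labels of the given labeled polytree sequent. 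Since $\propr$ deletes only the labeled formula $y:A$ and no relational atom, the propagation graphs of the premise and conclusion of $\propr$ coincide; hence, by Lemma~\ref{Applicability_LabProp_Implies_DKTProp}, the path $\pi$ — in particular with the same string $\Pi$ — is present in $PG(X[\ques A]_{x}[A]_{y})$ from $x$ to $y$. As both $\propdeep$ and $\prop$ are defined relative to the same completion $(P \cup I(P))^{*}$, the rule

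\begin{center}
\AxiomC{$X[\ques A]_{x}[A]_{y}$}
\RightLabel{$\dprop$}
\UnaryInfC{$X[\ques A]_{x}[\emptyset]_{y}$}
\DisplayProof
\end{center}

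\noindent
is an instance of $\propdeep$, and its conclusion $X[\ques A]_{x}[\emptyset]_{y}$ is precisely $\UTGd(\R, x:\ques A, \Gamma)$, completing the case. (The degenerate situation $x = y$, where $\pi$ is a non-trivial path returning to its origin, is handled identically, the context then having both holes at a single node.)

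Each of these transformations is local and rule-driven, so the resulting procedure is effective and the output derivation is effectively relatable to the input, which establishes the lemma. I expect the propagation-rule case to be the only delicate point: one needs to know that the path $\pi$ witnessing the applicability of $\propr$ survives translation into the nested propagation graph \emph{with the same diamond-string}, so that a matching instance of $\propdeep$ is available — this is exactly the content of Lemma~\ref{Applicability_LabProp_Implies_DKTProp}. The remaining subtlety is purely bookkeeping over which node of the polytree is designated as the root of the translation, handled uniformly via Lemmas~\ref{DE_For_Diff_Node}, \ref{DisEquiv_Sequents_Identical_PG}, and~\ref{dktstrucadmiss}.
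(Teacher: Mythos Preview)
Your proposal is correct and follows essentially the same approach as the paper: extend the induction of Lemma~\ref{G3KT_to_DKT} by the single new case of $\propr$, and invoke Lemma~\ref{Applicability_LabProp_Implies_DKTProp} to ensure the witnessing path survives translation so that the corresponding $\dprop$ instance applies. Your write-up is more detailed (explicitly citing Lemmas~\ref{DE_For_Diff_Node}, \ref{DisEquiv_Sequents_Identical_PG}, \ref{dktstrucadmiss}, and \ref{Lpath2}, and noting the $x=y$ degenerate case), but the underlying argument is identical to the paper's.
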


\begin{proof} We extend the proof of Lemma \ref{G3KT_to_DKT} and include the inductive case for translating propagation inferences.

If we assume that a labeled propagation rule is used last in the given derivation, then there must be a corresponding axiom $\Pi p \rightarrow \ques p \in (P \cup I(P))^{*}$ whose antecedent allows for an application of the rule. This axiom will also define a deep nested propagation rule:

\begin{center}
\begin{tabular}{c@{\hskip 1cm}c}
\AxiomC{$\R, x:\ques A, y:A, \Gamma$}
\RightLabel{$\propr$}
\UnaryInfC{$\R, x:\ques A, \Gamma$}
\DisplayProof

&

\AxiomC{$X[\ques A]_{x}[A]_{y}$}
\RightLabel{$\propr$}
\UnaryInfC{$X[\ques A]_{x}[\emptyset]_{y}$}
\DisplayProof
\end{tabular}
\end{center}

By Lemma \ref{Applicability_LabProp_Implies_DKTProp}, the propagation rule may be applied in the deep nested proof because the path $\pi$ from $x$ to $y$ (whose string is $\Pi$) exists in the propagation graph of the premise $\UTGd(\R, x:\ques A, y:A, \Gamma)$ = $X[\ques A]_{x}[A]_{y}$.
\end{proof}

\begin{lemma}\label{LabToDis}
Every derivation of a labeled polytree sequent $\R,\Gamma$ in $\LKt + \propl$ is effectively relatable to a derivation of $\UTGd(\R,\Gamma)$ in $\DKT + \propdeep$.
\end{lemma}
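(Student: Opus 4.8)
The plan is to obtain this statement as an immediate corollary of the two preceding lemmas. First I would note that $\propl$ and $\prop$ name the very same set of propagation rules, so that $\LKt + \propl$ and $\LKt + \prop$ are literally the same calculus; the statement is thus just a restatement of the composite of Lemma~\ref{Lpath2} and Lemma~\ref{G3Kt-Path-to-DKT-Path}. Concretely: given a derivation $\der$ of a labeled polytree sequent $\R,\Gamma$ in $\LKt + \prop$, the first step is to apply Lemma~\ref{Lpath2}, which guarantees that \emph{every} labeled sequent occurring in $\der$ is itself a labeled polytree sequent. In particular, $\der$ meets the hypothesis of Lemma~\ref{G3Kt-Path-to-DKT-Path}.

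The second step is then to invoke Lemma~\ref{G3Kt-Path-to-DKT-Path} on $\der$, which produces a derivation of $\UTGd(\R,\Gamma)$ in $\DKT + \propdeep$ that is effectively relatable to $\der$. Since the passage from ``derivation of a labeled polytree sequent'' to ``derivation consisting solely of labeled polytree sequents'' involves no transformation of $\der$ whatsoever — it is merely the conclusion of Lemma~\ref{Lpath2} — the overall procedure coincides with the effective transformation supplied by Lemma~\ref{G3Kt-Path-to-DKT-Path}, and hence is itself effective.

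I do not expect any real obstacle here: all the substantive work has already been discharged — the rule-by-rule translation of $\LKt$ inferences (Lemma~\ref{G3KT_to_DKT}), the translation of propagation inferences via path preservation (Lemma~\ref{Applicability_LabProp_Implies_DKTProp}), and the polytree-invariance argument of Lemma~\ref{Lpath2}. The only point that warrants a moment's care is checking that the notion of ``proof'' in the statement of Lemma~\ref{Lpath2} is the same as the notion of ``derivation'' used here, which is immediate. This lemma will then be composed with Lemma~\ref{sktdktequiv} to complete the chain from $\LKt + \propl$ to $\SKT + \Pd$.
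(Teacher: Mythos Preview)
Your proposal is correct and matches the paper's own proof essentially verbatim: the paper simply invokes Lemma~\ref{Lpath2} to conclude that every sequent in the given derivation is a labeled polytree sequent, and then applies Lemma~\ref{G3Kt-Path-to-DKT-Path} to obtain the translation into $\DKT + \propdeep$. Your observation that $\propl$ and $\prop$ denote the same set of rules is also correct.
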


\begin{proof} Let $\der$ be our derivation of $\R,\Gamma$ in $\LKt + \propl$. By Lemma~\ref{Lpath2}, we know that every sequent occurring in $\der$ will be a labeled polytree sequent. By the previous lemma, we may effectively translate this derivation into a derivation in $\DKT + \propdeep$.
\end{proof}

\begin{theorem}\label{reverse-translation-theorem} Every derivation of a labeled polytree sequent $\R,\Gamma$ in $\LKt + \Pl$ is effectively relatable to a derivation of $\UTGd(\R,\Gamma)$ in $\SKT + \Pd$.
\end{theorem}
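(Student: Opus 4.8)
The plan is to obtain the result purely by composing the effective transformations already established in this section, so that the theorem itself requires no new argument beyond chaining. Given a derivation $\der$ of a labeled polytree sequent $\R,\Gamma$ in $\LKt + \Pl$, I would first observe that $\der$ is trivially also a derivation in $\LKt + \Pl + \prop$, since it simply makes no use of the propagation rules $\prop$.

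Next I would apply Lemma~\ref{internal} to replace $\der$ by an effectively relatable derivation $\der_1$ in $\LKt + \prop$ of the \emph{same} end sequent $\R,\Gamma$; the internalization procedure permutes and eliminates the structural rules $\negp \in \Pl$ upward into the axioms without altering the conclusion. Because $\R,\Gamma$ is a labeled polytree sequent, Lemma~\ref{Lpath2} guarantees that every sequent occurring in $\der_1$ is again a labeled polytree sequent, so $\der_1$ lies in the domain of the translation function $\UTGd$. Applying Lemma~\ref{LabToDis} (which packages Lemma~\ref{Lpath2} together with Lemma~\ref{G3Kt-Path-to-DKT-Path}) then yields an effectively relatable derivation $\der_2$ of $\UTGd(\R,\Gamma)$ in $\DKT + \propdeep$. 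Finally, I would invoke the $\DKT + \propdeep \to \SKT + \Pd$ direction of Lemma~\ref{sktdktequiv} to obtain an effectively relatable derivation $\der_3$ of $\UTGd(\R,\Gamma)$ in $\SKT + \Pd$. Since a composite of effective (local, rule-by-rule) transformations is again effective, the passage $\der \mapsto \der_3$ is the desired effective relation, which proves the theorem.

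The only points needing a little care are bookkeeping rather than substance: one must check that the end sequent is preserved at each step (so that the target $\UTGd(\R,\Gamma)$ is correctly computed), and that the output of Lemma~\ref{internal} genuinely falls under the hypotheses of Lemma~\ref{LabToDis}—this is precisely why it matters that $\R,\Gamma$ is assumed to be a labeled polytree sequent, so that Lemma~\ref{Lpath2} applies to $\der_1$. The genuine mathematical obstacle has already been discharged upstream: it is the admissibility of the structural rules $\Pl$ in the presence of the propagation rules $\prop$ (Lemma~\ref{Permute_Prop_and_Struc}, hence Lemma~\ref{internal}), whose proof relies on closing the propagation rules under compositions and inverses so that an application of $\negp$ can be absorbed into a rerouted propagation inference $\propr'$. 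Relative to that, the present theorem is a corollary obtained by concatenating Lemmas~\ref{internal}, \ref{LabToDis}, and~\ref{sktdktequiv}.
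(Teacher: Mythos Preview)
Your proposal is correct and follows essentially the same approach as the paper: chain Lemma~\ref{internal}, Lemma~\ref{LabToDis}, and Lemma~\ref{sktdktequiv}, then note that the composite of effective procedures is effective. Your write-up is slightly more detailed (explicitly noting the trivial inclusion into $\LKt + \Pl + \prop$ and invoking Lemma~\ref{Lpath2}), but the argument is the same.
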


\begin{proof} By Lemma~\ref{internal} we know that every derivation $\der$ of a labeled polytree sequent $\R,\Gamma$ in $\LKt + \Pl$ is effectively transformable to a derivation $\der'$ of $\R,\Gamma$ in $\LKt + \prop$. By Lemma \ref{LabToDis}, there is an effective translation of $\der'$ to a proof $\der''$ of $\UTGd(\R,\Gamma)$ in $\DKT + \propdeep$. Lemma \ref{sktdktequiv} implies that we can effectively transform $\der''$ in $\DKT + \propdeep$ into a derivation $\der'''$ of $\UTGd(\R,\Gamma)$ in $\SKT + \Pd$. The composition of effective procedures gives an effective procedure, which gives the result.
\end{proof}

Note that the application of Lemma~\ref{sktdktequiv} in the above theorem is a rather heavy proof-theoretic transformation since it invokes cut-elimination. Nevertheless, the output derivation is still effectively relatable since cut-elimination is defined locally, at the level of the proof rules.
    
\section{Concluding Remarks}\label{conclusion}

One consequence of our work is a methodology for proving the conservativity of shallow nested (i.e. display) calculi under the deletion of certain logical rules. For example, if $\SKT+NestSt(A \rightarrow \Diamond A)$ is a (sound and complete) shallow nested calculus for the logic $\Kt + A \rightarrow \Diamond A$, is $\SKT - \{(\bBox), (\blacklozenge)\}+NestSt(A\rightarrow \Diamond A)$ a (sound and complete) shallow nested calculus for $\K + A\rightarrow \Diamond A$? Notice that a derivation in the latter calculus may contain a sequent with the structural connective~$\bullet \{ \cdot \}$ (which could be introduced via the residuation rule $\rf$) or  even though the corresponding logical connective~$\bBox$ is not an operator in the ($\bBox$,$\blacklozenge$-free) language of $\K + A \rightarrow \Diamond A$ (meaning that a sequent such as $\circ \{\bullet \{p\}, \bullet \{q\}\}$ cannot be interpreted as a formula). Therefore, care must be taken when attempting to identify the logic obtained under the deletion of logical rules for connectives $\heart_{1}, \ldots, \heart_{n}$, since structural connectives that act as proxies for $\heart_{1}, \ldots, \heart_{n}$ will still be present in sequents and therefore may give the calculus increased expressive power.

A general solution which establishes the conservativity of  display calculi for tense logics over their modal
fragments, by making use of algebraic semantics, has been presented in~\cite{GreMaPalTziZha16}. Our work obtains this result \emph{syntactically} in the context of tense logics with \emph{modal general path axioms}
by exploiting the translations developed in the previous sections (Corollary~\ref{cor:modal-translation}). This subsumes the conservativity results in~\cite{GorPosTiu11}, for the more restricted set of \emph{modal path axioms}.

Another interesting consequence of our work is the suggestion of a potential methodology for constructing labeled calculi suitable for proof-search
and for proving decidability of the associated logics. The labeled calculus formalism offers a uniform method for obtaining cut-, contraction-, and weakening-admissible calculi for a large class of logics \cite{Neg05,Sim94,Vig00}. The drawback of such calculi is that they contain structural rules which are not immediately well-suited for proof-search; if the rules are applied na\"ively bottom-up, then proof-search may not terminate. Therefore, auxiliary results concerning a bound on the number of times a rule needs to be applied is required to ensure termination, see, e.g. \cite{Neg05}. Nevertheless, the method presented here of internalizing labeled calculi for path extensions of~$\Kt$ shows that such structural rules can be eliminated from a labeled derivation in the presence of appropriate, auxiliary inference rules.
This opens up an avenue for future research and gives rise to new questions: for what other logics can labeled structural rules be eliminated in favor of rules better adapted for proof-search? Is there an effective procedure for determining such rules? Note that this procedure has been investigated in~\cite{LyoBer19,Lyo21} and has shown that the method of refining labeled calculi is applicable to a variety of logics.

Moreover, the obtained internalized labeled calculi lend themselves nicely to uniformly proving interpolation for the class of path extensions of~$\Kt$~\cite{LyoTiuGorClo19}. As explained in Section~\ref{Trans_Path_Ax_Ext_Subsection}, labeled polytrees provide a canonical representation of nested sequents that encode the polytree structure in the set $\R$ of relational atoms, and the decorations of the nodes as the labeled formulae in $\Gamma$. Such a representation makes it easier to define a generalized notion of interpolant, and to observe useful relationships between such interpolants (e.g. a generalized notion of duality via the $\cut$ rule)~\cite{LyoTiuGorClo19}.

The relationship between Kripke frames and the algebraic semantics for modal logics is well-studied (see e.g.~\cite{BlaRijVen01}). Because labeled calculi are based on the former, and shallow nested (display) calculi on the latter, the bi-directional translations established in this work can be interpreted as demonstrating this relationship concretely, at the level of an inference rule.

Lastly, let us briefly discuss the issues with translating labeled proofs for all general path extensions of $\Kt$ into shallow nested proofs. As explained in Section 5, when translating from labeled to shallow nested, we first pre-process our labeled derivation by eliminating all \emph{path} structural rules from the given derivation, trading such inferences for propagation rules. This methodology does not immediately appear applicable to the class of \emph{general path} structural rules, since it is not clear what propagation rules (if any) should be added to a labeled calculus to allow for complete structural rule elimination. Therefore, it appears that an alternative methodology is required to carry out the translation of labeled proofs for all general path extensions of $\Kt$, which we defer to future work.

\begin{acks}
This project has received funding from the \grantsponsor{1}{European Union's Horizon 2020 research and innovation programme}{} under the Marie Sk{\l}odowska-Curie grant agreement No \grantnum[]{1}{689176}. Work additionally supported by the \grantsponsor{2}{FWF}{} projects: \grantnum[]{2}{START Y544-N23}, \grantnum[]{2}{I 2982}, \grantnum[]{2}{W1255-N23}, and \grantnum[]{2}{P 33548}.
R. Ramanayake would like to acknowledge the financial support of the CogniGron research center and the Ubbo Emmius Funds (Univ. of Groningen).
We are grateful to the anonymous reviewers for their careful reading and constructive comments.
\end{acks}


\bibliographystyle{ACM-Reference-Format}
\bibliography{mybib}

\end{document}